\DeclareMathAlphabet{\mathcal}{OMS}{cmsy}{m}{n}
\g@addto@macro\bfseries{\boldmath}
\DeclareMathAlphabet\BEuScript{U}{eus}{b}{n}
\def\SLL{\mathrm{SL}} 
\def\Lcc{L_\mathrm{c}} 
\newcommand{\CC}{{\mathbb C}}
\newcommand{\NN}{{\mathbb N}}
\newcommand{\RR}{{\mathbb R}}
\newcommand{\bS}{{\mathbb S}}
\newcommand{\Hom}{\mathrm{Hom}}
\DeclareMathAlphabet{\mathbfsf}{\encodingdefault}{\sfdefault}{bx}{n}
\DeclareBoldMathCommand\Fb{F}
\DeclareBoldMathCommand\Mb{M}
\DeclareBoldMathCommand\Nb{N}
\DeclareBoldMathCommand\Pb{P}
\DeclareBoldMathCommand\Ob{O}
\DeclareBoldMathCommand\rb{R}
\DeclareBoldMathCommand\ab{a}
\DeclareBoldMathCommand\bb{b}
\DeclareBoldMathCommand\cb{c}
\DeclareBoldMathCommand\eb{e}
\DeclareBoldMathCommand\ib{i}
\DeclareBoldMathCommand\jb{j}
\DeclareBoldMathCommand\kb{k}
\DeclareBoldMathCommand\pb{p}
\DeclareBoldMathCommand\rb{r}
\DeclareBoldMathCommand\ub{u}
\DeclareBoldMathCommand\vb{v}
\DeclareBoldMathCommand\xb{x}
\newcommand{\dvol}{{\rm dvol}}
\newcommand{\id}{\text{id}}
\newcommand{\Ac}{{\mathcal{A}}}
\newcommand{\Fc}{{\mathcal{F}}}
\newcommand{\Sol}{\mathsf{ Sol}}
\newcommand{\II}{\leavevmode\hbox{\rm{\small1\kern-3.8pt\normalsize1}}}
\newcommand{\Tr}{\textrm{Tr}\,}
\newcommand{\supp}{\textrm{supp}\,}
\newcommand{\Mc}{{\mathcal M}}
\newcommand{\ogth}{{\mathfrak o}}
\newcommand{\tgth}{{\mathfrak t}}
\newcommand{\nb}{{\boldsymbol{n}}}
\def\func{C^\infty} 
\def\cc{\func_0} 
\def\ctc{\func_{tc}} 
\def\sect{\Gamma} 
\def\sectc{\sect_0} 
\def\sectsc{\sect_{sc}} 
\def\secttc{\sect_{tc}} 
\def\f{\Omega} 
\def\fc{\f_0} 
\def\dd{\mathrm{d}} 
\def\de{\delta} 
\def\even{\mathrm{even}}
\def\lan{\langle} 
\def\ra{\rangle} 
\def\ClObs{\mathcal{E}} 
\def\Solsc{\Sol_{sc}} 
\def\Oc{\mathcal{O}} 
\def\Dc{\mathcal{D}} 
\def\MM{\mathbb{M}} 
\def\ChS{\Gamma} 
\def\SO{\mathrm{SO}} 
\def\Spin{\mathrm{Spin}} 
\def\Matr{\mathrm{M}} 
\def\ol{\overline} 
\def\Cs{C_\mathrm{s}} 
\def\Cc{C_\mathrm{c}} 
\def\nss{\slashed{\nabla}_\mathrm{s}} 
\def\nsc{\slashed{\nabla}_\mathrm{c}} 
\def\Ps{P_\mathrm{s}} 
\def\Pc{P_\mathrm{c}} 
\def\Es{E_\mathrm{s}} 
\def\Ec{E_\mathrm{c}} 
\def\Fs{F_\mathrm{s}} 
\def\Fc{F_\mathrm{c}} 
\def\sSol{\Sol^\mathrm{s}} 
\def\cSol{\Sol^\mathrm{c}} 
\def\sSolsc{\Sol^\mathrm{s}_{sc}} 
\def\cSolsc{\Sol^\mathrm{c}_{sc}} 
\def\sVan{N^\mathrm{s}} 
\def\cVan{N^\mathrm{c}} 
\def\sClObs{\ClObs^\mathrm{s}} 
\def\cClObs{\ClObs^\mathrm{c}} 
\def\sHerm{\mathrm{h}^\mathrm{s}} 
\def\cHerm{\mathrm{h}^\mathrm{c}} 
\def\sHermSol{\mathrm{H}^\mathrm{s}} 
\def\cHermSol{\mathrm{H}^\mathrm{c}} 
\def\Ls{L_\mathrm{s}} 
\newcommand{\ips}[2]{(#1,#2)_\mathrm{s}} 
\newcommand{\ipc}[2]{(#1,#2)_\mathrm{c}} 
\newcommand{\ipf}[2]{(#1,#2)} 
\newtheorem{theorem}{Theorem}[section]
\newtheorem{lemma}[theorem]{Lemma}
\newtheorem{proposition}[theorem]{Proposition}
\newtheorem{definition}[theorem]{Definition}
\newtheorem{example}[theorem]{Example}
\newtheorem{rem}[theorem]{Remark}
\newcommand{\Endo}{\mathrm{End}}
\begin{document}

\par 
\bigskip 
\LARGE 
\noindent 
{\bf Models of free quantum field theories on curved backgrounds} 
\bigskip 
\par 
\rm 
\normalsize

\large
\noindent {\bf Marco Benini$^{1,2,a}$},
{\bf Claudio Dappiaggi$^{2,3,b}$}, \\
\par
\small
\noindent $^1$ Department of Mathematics, Heriot-Watt University, Colin Maclaurin Building, Riccarton, Edinburgh EH14 4AS, United Kingdom.\smallskip

\noindent $^2$
Dipartimento di Fisica, Universit\`a degli Studi di Pavia,
Via Bassi, 6, I-27100 Pavia, Italy.\smallskip

\noindent $^3$  Istituto Nazionale di Fisica Nucleare - Sezione di Pavia,
Via Bassi, 6, I-27100 Pavia, Italy.\smallskip

\bigskip

\noindent $^a$ mbenini87@gmail.com,
$^b$  claudio.dappiaggi@unipv.it 
 \normalsize

\par 
 
\rm\normalsize 
\bigskip
\noindent {\small Version of \today}

\rm\normalsize 
 
 
\par 
\bigskip 

\noindent 
\small 
{\bf Abstract}. Free quantum field theories on curved backgrounds are discussed via three explicit examples: the real scalar field, the Dirac field and the Proca field. The first step consists of outlining the main properties of globally hyperbolic spacetimes, that is the class of manifolds on which the classical dynamics of all physically relevant free fields can be written in terms of a Cauchy problem. The set of all smooth solutions of the latter encompasses the dynamically allowed configurations which are used to identify via a suitable pairing a collection of classical observables. As a last step we use such collection to construct a $*$-algebra which encodes the information on the dynamics and on the canonical commutation or anti-commutation relations depending whether the underlying field is a Fermion or a Boson.

\section{Geometric data}\label{sec:1}

Goal of this section is to introduce all geometric concepts and tools which are necessary to discuss both the classical dynamics and the quantization of a free quantum field on a curved background. We assume that the reader is familiar with the basic notions of differential geometry and, to a minor extent, of general relativity. Therefore we will only sketch a few concepts and formulas, which we will use throughout this paper; a reader interested to more details should refer to \cite{BGP, BEE, Wald}, yet paying attention to the conventions used here, which differ from time to time from those in the cited references.

Our starting point is $\Mc$, a smooth manifold which is endowed with a (smooth) Lorentzian metric $g$ of signature $(+,-,\dots,-)$. Furthermore, although the standard generalizations to curved backgrounds of the field theories on Minkowski spacetime, on which the current models of particle physics are based, entail that $\Mc$ ought to be four dimensional, in this paper we shall avoid this assumption. The only exception will be Section \ref{subDirac}, where we will describe Dirac spinors in four dimensions only, for the sake of simplicity. We introduce a few auxiliary, notable tensors. We employ an abstract index notation\footnote{Notice that, in this paper, we employ the following convention for the tensor components: Latin indices, $a,b,c,\dots$, are used for abstract tensor indices, Greek ones, $\mu,\nu,\dots$ for coordinates, while $i,j,k$ are used for spatial components or coordinates.} and we stress that our conventions might differ from those of many textbooks, 
{\it e.g.} \cite{Wald}. As a starting point, we introduce the Riemann tensor $\textrm{Riem}:T\Mc\otimes T\Mc\to\Endo(T\Mc)$, defined using the abstract index notation by the formula $(\nabla_a\nabla_b-\nabla_b\nabla_a)v^d=R_{abc}^{\hphantom{abc}d} v^c$, where $v$ is an arbitrary vector field and $\nabla$ is the covariant derivative. The Ricci tensor is instead $\textrm{Ric}:T\Mc^{\otimes 2}\to\RR$ and its components are $R_{ab}=R^d_{\hphantom{d}adb}$ while the scalar curvature is simply $R\doteq g^{ab}R_{ab}$.

For later convenience we impose a few additional technical constraints on the structure of the admissible manifolds, which we recollect in the following definition:
\begin{definition}
For $n\geq2$, we call the pair $(\Mc,g)$ a {\bf Lorentzian manifold} if $\Mc$ is a Hausdorff, second countable, connected, orientable, smooth manifold $\Mc$ of dimension $n$, endowed with a Lorentzian metric $g$.
\end{definition}
Notice that henceforth we shall always assume that an orientation $\ogth$ for $\Mc$ has been chosen. We could allow in principle more than one connected component, but it would lead to no further insight and, thus, we avoid it for the sake of simplicity. The Lorentzian character of $g$ plays a distinguished role since it entails that all spacetimes come endowed with a causal structure, which lies at the heart of several structural properties of a free quantum field theory. More precisely, let us start from Minkowski spacetime, $\Mc\equiv\RR^4$, endowed with the standard Cartesian coordinates in which the metric tensor reads $\eta_{\mu\nu}=\textrm{diag}(1,-1,-1,-1)$. Let $p\in\RR^4$ be arbitrary. With respect to it, we can split the set of all other points of $\RR^4$ in three separate categories: We say that $q\in\RR^4$ is
\begin{itemize}
\item {\em timelike} separated from $p$ if the connecting vector $v_{pq}$ is such that $\eta(v_{pq},v_{pq})>0$.
\item {\em lightlike} separated if $\eta(v_{pq},v_{pq})=0$
\item {\em spacelike} separated if $\eta(v_{pq},v_{pq})<0$. 
\end{itemize}
If we add to this information the possibility of saying that a point $p$ lies in the future ({\em resp.} in the past) of $q$ if $x^0(p)>x^0(q)$ ({\em resp.} $x^0(p)<x^0(q)$), $x^0$ being the Cartesian time coordinate, we can introduce $I^+_{\RR^4}(p)$ and $I^-_{\RR^4}(p)$, the chronological future $(+)$ and past $(-)$ of $p$, as the collection of all points which are timelike related to $p$ and they lie in its future $(+)$ or in its past $(-)$. Analogously we define $J^\pm_{\RR^4}(p)$ as the causal future $(+)$ and past $(-)$ of $p$ adding also the points which are lightlike related to $p$. Notice that, per convention, $p$ itself is included in both $J_{\RR^4}^+(p)$ and $J_{\RR^4}^-(p)$. In a language more commonly used in theoretical physics, $J^+_{\RR^4}(p)$ and $J^-_{\RR^4}(p)$ are the future and the past light cones stemming from $p$. By extension, if $\Omega$ is an open subregion of $\RR^4$ we call $J^\pm_{\RR^4}(\Omega)\doteq\bigcup_{p\in\Omega}J^\pm_{\RR^4}(p)$. Similarly we define $I^\pm_{\RR^4}(\Omega)$.

On a generic background, the above structures cannot be transported slavishly first of all since, contrary to $\RR^4$, a manifold $M$ does not have look like a Euclidean space globally. In order to circumvent this obstruction, one can start from a generic point $p\in\Mc$ and consider the tangent space $T_p\Mc$. Using the metric $g$, one can label a tangent vector $v\in T_p\Mc$ according to the value of $g(v,v)$. Specifically $v$ is timelike if $g(v,v)>0$, lightlike if $g(v,v)=0$ and spacelike if $g(v,v)<0$. Hence, we can associate to the vector space $T_p\Mc$ a two-folded light cone stemming from $0\in T_p\Mc$ and we have the freedom to set one of the folds as the collection of future-directed vectors. If such choice can be made consistently in a smooth way for all points of the manifold, we say that $\Mc$ is {\em time orientable}. In a geometric language this is tantamount to requiring the existence of a global vector field on $M$ which is timelike at each point. Henceforth we assume that this is indeed the case and that a time orientation $\tgth$ has been fixed. Notice that, as a consequence, every background we consider is completely specified by a quadruple:

\begin{definition}
A {\bf spacetime} $\Mb$ is  a quadruple $(\Mc,g,\ogth,\tgth)$, 
where $(\Mc,g)$ is a time-orientable $n$-dimensional Lorentzian manifold ($n\geq2$), 
$\ogth$ is a choice of orientation on $\Mc$ and $\tgth$ is a choice of time-orientation. 
\end{definition}

The next step in the definition of a causal structure for a Lorentzian manifold consists of considering a piecewise smooth curve $\gamma:I\to\Mc$, $I=[0,1]$. We say that $\gamma$ is timelike ({\em resp.} lightlike, spacelike) if such is the vector tangent to the curve at each point. We say that $\gamma$ is causal if the tangent vector is nowhere spacelike and that it is {\em future (past) directed} if each tangent vector to the curve is future (or past) directed. Taking into account these structures, we can define on an arbitrary spacetime $\Mb$ the chronological future and past of a point $p$ as $I^\pm_\Mb(p)$, the collection of all points $q\in\Mc$ such that there exists a future- (past-)directed timelike curve $\gamma:I\to\Mc$ for which $\gamma(0)=p$ and $\gamma(1)=q$. In complete analogy we can define the causal future and past $J^\pm_\Mb(p)$ as well as, for any open subset $\Omega\subset\Mc$, $J^\pm_\Mb(\Omega)=\bigcup_{p\in\Omega}J^\pm_\Mb(p)$. Similarly we define $I^\pm_\Mb(\Omega)$. We will also denote the union of the causal future $J^+_\Mb(\Omega)$ and the causal past $J^-_\Mb(\Omega)$ of $\Omega$ with $J_\Mb(\Omega)$.

The identification of a causal structure is not only an interesting fingerprint of a spacetime, but it has also far-reaching physical consequences, as it suggests us that not all time-oriented spacetimes should be thought as admissible. As a matter of fact one can incur in pathological situations such as closed timelike and causal curves, which are often pictorially associated to evocative phenomena such as time travel. There are plenty of examples available in the literature ranging from the so-called G\"odel Universe -- see for example \cite{Godel} -- to the Anti-de Sitter spacetime (AdS), which plays nowadays a prominent role in many applications to high energy physics and string theory. Let us briefly sketch the structure of the latter in arbitrary $n$-dimensions, $n>2$ --  see \cite{HE}. AdS$_n$ is a maximally symmetric solution to the Einstein's equations with a negative cosmological constant $\Lambda$. In other words it is a manifold of constant curvature $R=\frac{2n}{n-2}\Lambda$ with the topology $\bS^1\times\RR^{n-1}$. It can be realized in the $(n+1)$-dimensional spacetime $\RR^{n+1}$, endowed with the Cartesian coordinates $x^\mu$, $\mu=0,\dots,n$, and with the metric $\widetilde g=\textrm{diag}(1,1,-1,\dots,-1)$ as the hyperboloid 
\begin{equation*}
\tilde{g}(x,x)=R^2
\end{equation*}
If we consider the locus $x^i=0$, $i=2,\dots,n$ we obtain the circle $(x^0)^2+(x^1)^2=R^2$ together with the induced line element $(dx^0)^2+(dx^1)^2$. In other words we have found a closed curve in AdS$_n$ whose tangent vector is everywhere timelike. 

Even without making any contact with field theory, it is clear that scenarios similar to the one depicted are problematic as soon as one is concerned with the notion of causality. Therefore it is often customary to restrict the attention to a class of spacetimes which avoids such quandary while being at the same time sufficiently large so to include almost all interesting cases. These are the so-called {\it globally hyperbolic spacetimes}. We characterize them following \cite[Chap. 8]{Wald}. As a starting point, we consider a spacetime $\Mb$ and we introduce two additional notions:
\begin{enumerate}
\item A subset $\Sigma\subset\Mc$ is called {\em achronal} if each timelike curve in $\Mb$ intersects $\Sigma$ at most once; 
\item For any subset $\Sigma\subset\Mb$, we call future $(+)$, respectively past $(-)$ {\em domain of dependence} $D^\pm_\Mb(\Sigma)$, the collection of all points $q\in\Mc$ such that every past $(+)$, respectively future $(-)$ inextensible causal curve passing through $q$ intersects $\Sigma$. With $D_\Mb(\Sigma)\doteq D^+_\Mb(\Sigma)\cup D^-_\Mb(\Sigma)$ we indicate simply the domain of dependence.
\end{enumerate}
We state the following:
\begin{definition}\label{globhyp}
We say that $\Mb$ is {\bf globally hyperbolic} if and only if there exists a Cauchy surface $\Sigma$, that is a closed achronal subset of $\Mb$ such that $D_\Mb(\Sigma)=\Mc$.
\end{definition}
Notice that, as a by-product of this definition, one can conclude, not only that no closed causal curve exists in $\Mb$, but also that $\Mc$ is homeomorphic to $\RR\times\Sigma$, while $\Sigma$ is a $C^0$, $(n-1)$-dimensional submanifold of $\Mb$, {\it cf.} \cite[Theorem 3.17]{BEE} for the case $n=4$. It is worth mentioning two apparently unrelated points: {\em 1)} in the past, it has been often assumed that the Cauchy surface could be taken as smooth and {\em 2)} Definition \ref{globhyp} does not provide any concrete mean to verify in explicit examples whether a spacetime $\Mb$ is globally hyperbolic or not. Alternative characterizations of global hyperbolicity, such as that $M$ is strongly causal and $J^+_\Mb(p)\cap J^-_\Mb(q)$ is either empty or compact for all $p,q\in\Mc$, did not help in this respect. A key step forward was made a decade ago by the work of Bernal and Sanchez, see \cite{Bernal, Bernal:2005qf2}. Their main result is here stated following the formulation of \cite[Section 1.3]{BGP}:

\begin{theorem}\label{BS}
Let $\Mb$ be given. The following statements are equivalent:
\begin{enumerate}
\item $\Mb$ is globally hyperbolic.
\item There exists no closed causal curve in $\Mb$ and $J^+_\Mb(p)\cap J^-_\Mb(q)$ is either compact or empty for all $p,q\in\Mc$. 
\item $(\Mc,g)$ is isometric to $\RR\times\Sigma$ endowed with the line element $ds^2=\beta dt^2-h_t$, where $t:\RR\times\Sigma\to\RR$ is the projection on the first factor, $\beta$ is a smooth and strictly positive function on $\RR\times\Sigma$ and $t\mapsto h_t$, $t\in\RR$, yields a one-parameter family of smooth Riemannian metrics. Furthermore, for all $t\in\RR$, $\{t\}\times\Sigma$ is an $(n-1)$-dimensional, spacelike, smooth Cauchy surface in $\Mb$. 
\end{enumerate}
\end{theorem}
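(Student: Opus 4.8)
The plan is to prove the cyclic chain of implications $(3)\Rightarrow(1)\Rightarrow(2)\Rightarrow(3)$, the last of which is by far the deepest and constitutes the actual Bernal--Sanchez content, while the first two are classical. The implication $(3)\Rightarrow(1)$ is immediate: under the stated isometry each slice $\{t\}\times\Sigma$ is a spacelike Cauchy surface, so $\Mb$ admits a Cauchy surface and is globally hyperbolic by Definition~\ref{globhyp}. For $(1)\Rightarrow(2)$ I would show that the existence of a Cauchy surface $\Sigma$ forbids closed causal curves --- such a curve could not meet the achronal set $\Sigma$ consistently with the requirement that every inextensible causal curve crosses $\Sigma$ exactly once --- and forces the diamonds $J^+_\Mb(p)\cap J^-_\Mb(q)$ to be compact or empty, via a limit-curve argument after projecting causal curves onto $\Sigma$. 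This is the classical Leray--Geroch characterisation, with the refinement, due to Bernal--Sanchez, that mere causality rather than strong causality suffices in item $(2)$.

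The heart of the theorem is $(2)\Rightarrow(3)$, which I would attack in two stages. The first, topological, stage follows Geroch: choosing a finite measure $\mu$ on $\Mc$ one builds a continuous time function, morally $t(p)=\log\big(\mu(J^-_\Mb(p))/\mu(J^+_\Mb(p))\big)$, strictly increasing along future-directed causal curves and diverging to $\pm\infty$ along inextensible ones. Its level sets are then topological Cauchy surfaces, all homeomorphic to a fixed $\Sigma$, which yields the homeomorphism $\Mc\cong\RR\times\Sigma$, but only at the continuous level.

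The second, smooth, stage is the genuinely hard part and the reason the result was hard-won: one must upgrade the continuous time function to a \emph{Cauchy temporal function} $\tau$, that is, a smooth function whose gradient $\nabla\tau$ is everywhere timelike and whose level sets are smooth \emph{spacelike} Cauchy surfaces. Geroch's construction delivers at best a continuous, generically non-spacelike splitting, and a naive smoothing may tilt the level sets out of the light cones; this obstruction remained open for a long time. I would import the Bernal--Sanchez resolution, which smooths the time function on relatively compact regions while retaining tight control of the causal cones, and then glues the pieces so that the gradient stays timelike throughout, producing the desired $\tau$.

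With such a $\tau$ at hand the conclusion is routine. Setting $\beta\doteq g(\nabla\tau,\nabla\tau)^{-1}>0$ and flowing along the normalised gradient $\nabla\tau/g(\nabla\tau,\nabla\tau)$ gives a diffeomorphism $\RR\times\Sigma\to\Mc$ under which $\tau$ becomes the projection $t$ onto the first factor. Since $\nabla\tau$ is orthogonal to its spacelike level sets, the pulled-back metric has no mixed $dt$-terms and takes exactly the form $ds^2=\beta\,dt^2-h_t$, with $t\mapsto h_t$ a smooth one-parameter family of Riemannian metrics and each $\{t\}\times\Sigma$ a smooth spacelike Cauchy surface. I expect the construction of the timelike-gradient temporal function to be the decisive and most delicate step, with the orthogonal splitting a straightforward corollary.
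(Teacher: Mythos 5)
The paper does not prove this theorem: it is quoted verbatim (in the formulation of \cite[Section 1.3]{BGP}) from the work of Bernal and Sanchez \cite{Bernal, Bernal:2005qf2}, so there is no in-paper argument to compare yours against. Judged on its own terms, your outline is an accurate roadmap of the standard proof --- the cycle $(3)\Rightarrow(1)\Rightarrow(2)\Rightarrow(3)$, Geroch's volume-function splitting for the topological part, and the Bernal--Sanchez smoothing for the spacelike smooth splitting --- and your closing reduction from a Cauchy temporal function $\tau$ to the metric form $\beta\,dt^2-h_t$ is correct, including the identification $\beta=g(\nabla\tau,\nabla\tau)^{-1}$. The remark that causality (rather than strong causality) suffices in item (2) is indeed a later Bernal--Sanchez refinement and is correctly attributed.

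That said, as a proof the proposal has the same status as the paper's treatment: the decisive step, upgrading Geroch's continuous splitting to a smooth one with spacelike level sets, is explicitly ``imported'' rather than carried out, so what you have is a structured citation, not a self-contained argument. Two smaller points deserve care if you were to flesh this out. First, Geroch's time function should be built from the volumes of the chronological sets $I^\pm_\Mb(p)$, not the causal sets $J^\pm_\Mb(p)$; the latter need not vary continuously with $p$ in the required sense. Second, in $(1)\Rightarrow(2)$ the compactness of the diamonds $J^+_\Mb(p)\cap J^-_\Mb(q)$ does not follow merely from ``projecting causal curves onto $\Sigma$'': one needs a genuine limit-curve argument together with the fact that the diamond lies in the domain of dependence $D_\Mb(\Sigma)$, which is where the hypothesis that $\Sigma$ is a Cauchy surface (and not just achronal) enters.
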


The main advantage of this last theorem is to provide an easier criterion to verify explicitly whether a given time-oriented spacetime is globally hyperbolic. In order to convince the reader that this class of manifolds includes most of the physically interesting examples, we list a collection of the globally hyperbolic spacetimes which are often used in the framework of quantum field theory on curved backgrounds:

\begin{itemize}
\item We say that a spacetime $\Mb$ is {\em ultrastatic} if $(\Mc,g)$ is isometric to $\RR\times\Sigma$ with line element $ds^2=dt^2-\pi^*h$, where $\pi^*h$ is the pullback along the projection $\pi:\RR\times\Sigma\to\Sigma$ of a metric $h$ on $\Sigma$." $\Mb$ is globally hyperbolic if and only if it is geodesically complete, that is every maximal geodesic is defined on the whole real line -- see \cite{Fulling}. Minkowski spacetime falls in this category.
\item All Friedmann-Robertson-Walker (FRW) spacetimes are four-dimensional homogeneous and isotropic manifolds which are topologically $\RR\times\Sigma$ with 
\begin{equation*}
ds^2=dt^2-a^2(t)\left(\frac{dr^2}{1-kr^2}+r^2d\bS^2(\theta,\varphi)\right),
\end{equation*}
where $d\bS^2(\theta,\varphi)$ is the standard line element of the unit $2$-sphere, while $a(t)$ is a smooth and strictly positive function depending only on time. Furthermore $k$ is a constant which, up to a normalization, can be set to $0$, $1$, $-1$ and, depending on this choice, $\Sigma$ is a three-dimensional spacelike manifold whose model space is either $\RR^3$, the $3$-sphere $\bS^3$ or the three dimensional hyperboloid $\mathbb{H}^3$. The remaining coordinate $r$ has a domain of definition which runs over the whole positive real line if $k=0,-1$, while $r\in (0,1)$ if $k=1$. On account of \cite[Theorem 3.68]{BEE}, we can conclude that every Friedmann-Robertson-Walker spacetime is globally hyperbolic\footnote{We are grateful to Zhirayr Avetisyan for pointing us out Theorem 3.68 in \cite{BEE}.}. Notice that, in many concrete physical applications, the coordinate $t$ runs only on an open interval of $\RR$, but, as proven in \cite[Theorem 3.69]{BEE}, it does not spoil the property of being globally hyperbolic. Following a similar argument one can draw similar conclusions when working with time oriented, homogeneous spacetimes, which are also referred to as Bianchi spacetimes.
\item A noteworthy collection of solutions of the vacuum Einstein's equations consists of the Kerr family which describes a rotating, uncharged, axially symmetric, four-dimensional black hole \cite{Townsend}. In the so-called Boyer-Linquist chart, the underlying line element reads as 
\begin{equation*}
\begin{aligned}
ds^2 = &  \frac{\Delta-a^2\sin^2\theta}{\Pi}dt^2+\frac{4Mar\sin^2\theta}{\Pi}dtd\varphi-\frac{\Pi}{\Delta}dr^2\\
& -\Pi d\theta^2-\frac{(r^2+a^2)^2-\Delta a^2\sin^2\theta}{\Pi}d\varphi^2,
\end{aligned}
\end{equation*}
where $\Delta=r^2-2Mr+a^2$ and $\Pi=r^2+a^2\cos^2\theta$, while $M$ and $J=Ma$ are two real parameters which are interpreted respectively as the mass and total angular momentum of the black hole. Notice that $t$ runs along the whole real line, $\theta,\varphi$ are the standard coordinates over the unit $2$-sphere, while $r$ plays the role of a radial-like coordinate. A generic Kerr spacetime possesses coordinate horizons at $r_\pm=M\pm\sqrt{M^2-a^2}$ provided that $M^2\geq a^2$ and the region for which $r\in(r_+,\infty)$, often also known as exterior region to the black hole, is actually a globally hyperbolic spacetime. If we set $a=0$, that is the black hole does not rotate, we recover the spherically symmetric Schwarzschild spacetime and, consistently, the static region outside the event horizon located at $r=2M$ is itself globally hyperbolic.
\item Another spacetime, which is often used as a working example in quantum field theory on curved spacetime is {\em de Sitter} (dS$_n$), the maximally symmetric solution to the Einstein's equations with a positive cosmological constant $\Lambda$ and $n>2$. In $\RR^{n+1}$ endowed both with the standard Cartesian coordinates $x^\mu$, $\mu=0,\dots,n$, and with the metric $\tilde{g} = \mathrm{diag}(1, -1, \ldots, -1)$, dS$_n$ can be realized as the hyperboloid $\tilde{g}(x,x) = -R^2$, where $R^2=\frac{(n-1)(n-2)}{2\Lambda}$.  After the change of coordinates $x^0=R\sinh(t/R)$ and $x^i=R\cosh(t/R)e^i$, $i=1,\dots,n$, where $\sum_{i=1}^{n}(e^i)^2=1$, the line element of dS$_n$ reads
\begin{equation*}
ds^2=dt^2-R^2\cosh^2(t/R)\,d\bS^2(e^1,\dots,e^n),
\end{equation*}
where $d\bS^2(e^1,\dots,e^n)$ represents the standard line element of the unit $(n-1)$-sphere and $t$ runs along the whole real line. Per direct inspection we see that dS$_n$ is diffeomorphic to an $n$-dimensional version of a Friedmann-Robertson-Walker spacetime with compact spatial sections and it is globally hyperbolic.
\end{itemize}

Since in the next section we will be interested in functions from a globally hyperbolic spacetime to a suitable target vector space and in their support properties, we conclude the section with a useful definition:

\begin{definition}\label{supports}
Let $\Mb$ be a globally hyperbolic spacetime and $V$ a finite dimensional vector space. We call 
\begin{description}
\item[$(0)$] $C^\infty_0(\Mc;V)$ the space of smooth and {\bf compactly supported} $V$-valued functions on $\Mc$,
\item[$(sc)$] $C^\infty_{sc}(\Mc;V)$ the space of smooth and {\bf spacelike compact} $V$-valued functions on $\Mc$, that is $f\in C^\infty_{sc}(\Mc;V)$ if there exists a compact subset $K\subset\Mc$ such that $\supp f\subset J_\Mb(K)$,
\item[$(fc/pc)$] $C^\infty_{fc}(\Mc;V)$ the space of smooth and {\bf future compact} $V$-valued functions on $\Mc$, that is $f\in C^\infty_{fc}(\Mc;V)$ if $\supp f\cap J^+_\Mb(p)$ is compact for all $p\in\Mc$. Mutatis mutandis, we shall also consider $C^\infty_{pc}(\Mc;V)$, namely the space of smooth and {\bf past compact} $V$-valued functions on $\Mc$,
\item[$(tc)$] $C^\infty_{tc}(\Mc;V)\doteq C^\infty_{fc}(\Mc;V)\cap C^\infty_{pc}(\Mc;V)$ the space of smooth and {\bf timelike compact} $V$-valued functions on $\Mc$.
\end{description}
\end{definition} 

\section{On Green hyperbolic operators}\label{sec:2}

Globally hyperbolic spacetimes play a pivotal role, not only because they do not allow for pathological situations, such as closed causal curves, but also because they are the natural playground for classical and quantum fields on curved backgrounds. More precisely, the dynamics of most (if not all) models, we are interested in, is either ruled by or closely related to wave-like equations. Also motivated by physics, we want to construct the associated space of solutions by solving an initial value problem. To this end we need to be able to select both an hypersurface on which to assign initial data and to identify an evolution direction. In view of Theorem \ref{BS}, globally hyperbolic spacetimes appear to be indeed a natural choice. Goal of this section will be to summarize the main definitions and the key properties of the class of partial differential equations, useful to discuss the models that we shall introduce in the next sections. Since this is an overkilled topic, we do not wish to make any claim of being complete and we recommend to an interested reader to consult more specialized books and papers for more details. We suggest for example \cite{Hormander1, Hormander2, Hormander3, Hormander4}, the more recent \cite{Waldmann} and also \cite{BGP}, on which most of this section is based; moreover, notice that, several examples of Green hyperbolic operators can be found in \cite{BG1,BG2}, while a remarkable extension of their domain of definition, which we shall implicitly assume, is available in \cite{Bar}. 

As a starting point we introduce the building block of any classical and quantum field theory:

\begin{definition}\label{VB}
A {\bf vector bundle} of rank $k<\infty$ over an $n$-dimensional smooth manifold $\Mc$ (base space) is $F\equiv F(\Mc,\pi,V)$ where $F$, the total space, is a smooth manifold of dimension $n+k$, $V$ (the typical fiber) is a $k$-dimensional vector space and $\pi:F\to\Mc$ is a smooth surjective map. Furthermore we require that:
\begin{enumerate}
\item There exists a vector space isomorphism between $V$ and each fiber $F_p\doteq \pi^{-1}(p)$, $p\in\Mc$,
\item For each $p\in\Mc$ there exists an open neighbourhood $U\ni p$ and a diffeomorphism $\Psi:\pi^{-1}(U)\to U\times V$ such that $\pi_1\circ\Psi=\pi$ on $\pi^{-1}(U)$, where $\pi_1: U\times V\to U$ is the projection on the first factor of the Cartesian product;
\item The restriction of $\Psi$ to each fiber is an isomorphism of vector spaces.
\end{enumerate}
The pair $(U,\Psi)$ fulfilling these conditions is called a {\bf local trivialization} of $E$.
\end{definition}

Notice that throughout the text we shall use the word {\em vector bundle atlas}, meaning a collection of local trivializations of $F$ covering $\Mc$. 
We will not discuss the theory of vector bundles and, for more details, refer to \cite{Husemoller}. The only exceptions are the following two definitions:

\begin{definition}\label{restriction}
Let $F=F(\Mc,\pi,V)$ be a vector bundle and let $N$ be a submanifold of $\Mc$. We call {\bf restriction} of $F$ to $N$ the vector bundle $F\vert_N\equiv \widetilde F=\widetilde F(N,\pi^\prime,V)$, where $\widetilde F=\pi^{-1}(N)$ and $\pi^\prime:\widetilde{F} \to N$ is defined by $\pi^\prime(f) = \pi(f)$ for all $f \in \widetilde{F}$.
\end{definition}

\begin{definition}\label{dualbundle}
Let $F=F(\Mc,\pi,V)$ be a vector bundle. We call {\bf dual bundle} $F^*$ the vector bundle over $\Mc$ whose fiber over $p\in\Mc$ is $(F^*)_p=(F_p)^*$, the dual vector space to $F_p$.
\end{definition}

We say that a vector bundle $F$ is {\em (globally) trivial} if there exists a fiber preserving diffeomorphism from $F$ to the Cartesian product $\Mc\times V$ restricting to a vector space isomorphism on each fiber. In practice, this corresponds to a trivialization of $F$ which is defined everywhere, to be compared with the notion of a local trivialization as per Definition \ref{VB}. Most of the examples we shall consider in this paper come from globally trivial vector bundles. Bear in mind, however, that one of the canonical examples of a vector bundle, namely the tangent bundle $T\Mc$ to a manifold $\Mc$, is not trivial in general, {\it e.g.}, when $\Mc=\bS^2$. It is also noteworthy that, given any two vector bundles $F=F(\Mc,\pi,V)$ and $F^\prime=F^\prime(\Mc,\pi^\prime,V^\prime)$, we can construct naturally a third vector bundle, the {\em bundle of homomorphisms} $\Hom(F,F^\prime)$ over the base space $\Mc$. Its fiber over a base point $p\in\Mc$ is $\Hom(F_p,F^\prime_p)$, which is a vector space isomorphic to the vector space $\Hom(V,V^\prime)$ of homomorphism from $V$ to $V^\prime$. If $F^\prime=F$, then we shall write $\Endo(F)$ for $\Hom(F,F^\prime)$ and call it {\em bundle of endomorphisms}, whose typical fiber is $\Endo(V)$.

Another structure which plays a distinguished role is the following:
\begin{definition}
Given a vector bundle $F$, $\Gamma(F)=\{\sigma\in C^\infty(\Mc;F)\;|\;\pi\circ\sigma=\id_\Mc\}$, where $\id_\Mc:\Mc\to\Mc$ is the identity map, is the {\bf space of smooth sections}. By generalizing Definition \ref{supports}, the subscripts $0$, $sc$, $fc/pc$ and $tc$ shall refer to those sections whose support is respectively compact, spacelike compact, future or past compact and timelike compact.
\end{definition}

Notice that $\Gamma(F)$ is an infinite-dimensional vector space and, whenever $F$ is trivial, it is isomorphic to $C^\infty(\Mc;V)$. The next structure, which we introduce, will play an important role in the construction of an algebra of observables for a free quantum field on a curved background:
\begin{definition}\label{pairing}
Let $F$ be a vector bundle on a manifold $\Mc$. Denote with $F\times_\Mc F$ the bundle obtained taking the Cartesian product fiberwise. A non-degenerate inner product on $F$ is a smooth map $\cdot:F\times_\Mc F\to\RR$ such that 
\begin{enumerate}
\item the restriction of $\cdot$ to $F_p\times F_p$ is a bilinear form for each $p\in\Mc$, 
\item $v\in F_p$ vanishes if $v\cdot w=0$ for all $w\in F_p$.
\end{enumerate}
Furthermore, if $\cdot$ is symmetric on each fiber, we call it {\bf Bosonic}, if it is antisymmetric, we call it {\bf Fermionic}.
\end{definition}

Since we consider only spacetimes $\Mb=(\Mc,g,\ogth,\tgth)$, the orientation of $\Mc$ is fixed by $\ogth$ and therefore we can introduce the metric-induced volume form $\dvol_\Mb$ on $\Mc$. Then, any inner product as in Definition \ref{pairing} induces a non-degenerate pairing between smooth sections and compactly supported smooth sections of $F$:
\begin{align}\label{pairing-sections}
\ipf{\cdot}{\cdot}:\Gamma_0(F)\times\Gamma(F)\to\RR, && (\sigma,\tau)\mapsto\int_\Mc(\sigma\cdot\tau)\,\dvol_\Mb.
\end{align} 
Since we will make use of it later in this paper, notice that, for \eqref{pairing-sections} to be meaningful, we could consider both $\tau,\sigma\in\Gamma(F)$ provided that $\supp(\tau)\cap\supp(\sigma)$ is compact. 

We have all ingredients to start addressing the main point of this section, namely partial differential equations. The building block is the following:
\begin{definition}\label{PDE}
Let $F=F(\Mc,\pi,V)$ and $F^\prime=F^\prime(\Mc,\pi^\prime,V^\prime)$ be two vector bundles of rank $k$ and $k^\prime$ respectively over the same manifold $\Mc$. A {\bf linear partial differential operator} of order at most $s\in\NN_0$ is a linear map $L:\Gamma(F)\to\Gamma(F^\prime)$ such that, for all $p\in\Mc$, there exist both a coordinate neighborhood $(U,\Phi)$ centered at $p$, local trivializations $(U,\Psi)$ and $(U,\Psi^\prime)$ respectively of $F$ and of $F^\prime$, as well as a collection of smooth maps $A_\alpha:U\to \Hom(V;V^\prime)$ labeled by multi-indices for which, given any $\sigma\in\Gamma(F)$, on $U$ one has 
\begin{equation*}
L\sigma=\sum_{|\alpha|\leq s}A_\alpha\,\partial^\alpha\sigma.
\end{equation*}
Notice that here we are implicitly using both the coordinate chart $\Phi$ and the trivializations $\Psi$ and $\Psi^\prime$; moreover, the sum runs over all multi-indices $\alpha=(\alpha_0,\dots,\alpha_{n-1})\in\NN_0^n$ such that$|\alpha|\doteq\sum_{\mu=0}^{n-1}\alpha_\mu\leq s$ and $\partial^\alpha=\prod_{\mu=0}^{n-1}\partial_\mu^{\alpha_\mu}$, where $\partial_0,\dots,\partial_{n-1}$ are the partial derivatives with respect to the coordinates $x^0,\dots,x^{n-1}$ coming from the chart $(U,\Phi)$. Furthermore, $L$ is of order $s\in\NN_0$ when it is of order at most $s$, but not of order at most $s-1$. 
\end{definition}

Notice that linear partial differential operators cannot enlarge the support of a section, a property which will be often used in the rest of this paper. Another related and useful concept intertwines linear partial differential operators with the pairing \eqref{pairing-sections}:

\begin{definition}\label{adjoint}
Consider a spacetime $\Mb=(\Mc,g,\ogth,\tgth)$. 
Let $F=E(\Mc,\pi,V)$ and let $F^\prime=F^\prime(\Mc,\pi^\prime,V^\prime)$ be two vector bundles over the manifold $\Mc$, both endowed with a non-degenerate inner product. Denote the pairings defined in \eqref{pairing-sections} 
for $F$ and for $F^\prime$ respectively with $\ipf{\cdot}{\cdot}_F$ and $\ipf{\cdot}{\cdot}_{F^\prime}$. Let $L:\Gamma(F)\to\Gamma(F^\prime)$ be a linear partial differential operator. We call {\bf formal adjoint} of $L$ the linear partial differential operator $L^*:\Gamma(F^\prime)\to\Gamma(F)$ such that, for all $\sigma\in\Gamma(F)$ and $\tau\in\Gamma(F^\prime)$ with supports having compact overlap, the following identity holds: 
\begin{equation*}
\ipf{L^*\tau}{\sigma}_{F^\prime}=\ipf{\tau}{L\sigma}_F.
\end{equation*}
If $F=F^\prime$, we say that $L$ is {\bf formally self-adjoint} whenever $L^*$ coincides with $L$.
\end{definition}
Existence of $L^*$ is a consequence of Stokes theorem and uniqueness is instead due to the non-degeneracy of the pairing $\ipf{\cdot}{\cdot}_{F^\prime}$. 

Definition \ref{PDE} accounts for a large class of operators, most of which are not typically used in the framework of field theory, especially because they cannot be associated to an initial value problem. In order to select a relevant class for our purposes, we introduce a useful concept -- see also \cite{BGP, Hormander1}:

\begin{definition}\label{normhyp}
Let $F=F(\Mc,\pi,V)$ and $F^\prime=F^\prime(\Mc,\pi^\prime,V^\prime)$ be two vector bundles over the same manifold $\Mc$ and let $L:\Gamma(F)\to\Gamma(F^\prime)$ be any linear partial differential operator of order $s$ as per Definition \ref{PDE}. We call {\bf principal symbol} of $L$ the map $\sigma_L:T^*M\to \Hom(F,F^\prime)$ locally defined as follows: For a given $p\in\Mc$, mimicking Definition \ref{PDE}, consider a coordinate chart around $p\in\Mc$ and local trivializations of $F$ and of $F^\prime$ and, for all $\zeta\in T^\ast_p\Mc$, set 
\begin{equation*}
\sigma_L(\zeta)=\sum_{|\alpha|=s}A_\alpha(p)\zeta^\alpha,
\end{equation*}
where $\zeta^\alpha=\prod_{\mu=0}^{n-1}\zeta_\mu^{\alpha_\mu}$ and $\zeta_\mu$ are the components of $\zeta$ with respect to the chosen chart. Furthermore, given a Lorentzian manifold $(\Mc,g)$, we call a second order linear partial differential operator $P:\Gamma(F)\to\Gamma(F^\prime)$ {\bf normally hyperbolic} if $\sigma_P(\zeta)=g(\zeta,\zeta)\,\id_{F_p}$ for all $p\in\Mc$ and all $\zeta\in T^\ast_p\Mb$.
\end{definition}

In order to better grasp the structure of a normally hyperbolic operator $P$, we can write it in a local coordinate frame following both Definition \ref{PDE} and Definition \ref{normhyp}. Let thus $p$ be in $\Mc$ and $(U,\Phi)$ be a chart centered at $p$ where the vector bundle $F$ is trivial. There exist both $A$ and $A^\mu$, $\mu=0,\dots,n-1$, smooth maps from $U$ to $\Endo(V)$ such that, for any $\sigma\in\Gamma(F)$, on $U$ one has 
\begin{equation*}
P\sigma= g^{\mu\nu}\id_V\,\partial_\mu\partial_\nu\sigma+A^\mu\,\partial_\mu\sigma+A\,\sigma,
\end{equation*}
where both the chart and the vector bundle trivializations are understood. One immediately notices that locally this expression agrees up to terms of lower order in the derivatives with the one for the d'Alembert operator acting on sections of $F$ constructed out of a covariant derivative $\nabla$ on $F$, that is the operator $\Box_\nabla=g^{\mu\nu}\nabla_\mu\nabla_\nu:\sect(F)\to\sect(F)$. Therefore, one realizes that normally hyperbolic operators provide the natural generalization of the usual d'Alembert operator. Besides this remark, Definition \ref{normhyp} becomes even more important if we assume, moreover, that the underlying background is globally hyperbolic, since we can associate to each normally hyperbolic operator $P$ an initial value problem (also known as Cauchy problem). As a matter of fact, in view both of Definition \ref{globhyp} and of Theorem \ref{BS}, initial data can be assigned on each Cauchy surface and the following proposition shows the well-posedness of the construction:

\begin{proposition}\label{Cauchyproblem}
Let $\Mb=(\Mc,g,\ogth,\tgth)$ be a globally hyperbolic spacetime and $\Sigma\subset\Mc$ any of its spacelike Cauchy surfaces together with its future-pointing unit normal vector field $\nb$. Consider any vector bundle $F$ over $\Mc$, a normally hyperbolic operator $P:\Gamma(F)\to\Gamma(F)$, and a $P$-compatible\footnote{A covariant derivative $\nabla$ on $F$ is $P$-compatible if there exists a section $A\in\sect(\Endo(F))$ such that $\Box_\nabla+A=P$.} covariant derivative $\nabla$ on $F$. Let $F\vert_\Sigma$ be the restriction of $F$ to $\Sigma$ as per Definition \ref{restriction}. Then, for any $J\in\Gamma(F)$ and for any $u_0,u_1\in\Gamma(F\vert_\Sigma)$, the following initial value problem admits a unique solution $u\in\Gamma(F)$: 
\begin{equation}\label{inhom}
\begin{aligned}
Pu & =  J && \mbox{on $\Mc$},\\
u & =  u_0 && \mbox{on $\Sigma$},\\
\quad \nabla_\nb u & =  u_1 && \mbox{on $\Sigma$}.
\end{aligned}
\end{equation}
Furthermore, if we set $\Omega=\supp u_0\cup\supp u_1\cup\supp J$, then $\supp u \subset J_\Mb(\Omega)$.
\end{proposition}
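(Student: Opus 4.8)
The plan is to separate the statement into uniqueness, the support property, and existence, treating the first two together by an energy estimate and reserving the genuine analytic work for existence. First I would fix an auxiliary positive-definite fibre metric on $F$ (the inner product of Definition \ref{pairing} need not be definite in the Fermionic case, so I cannot use it directly) and invoke the Bernal-Sánchez splitting $(\Mc,g)\cong\RR\times\Sigma$ with line element $\beta\,dt^2-h_t$ from Theorem \ref{BS}. By $P$-compatibility one has $P=\Box_\nabla+A$ with $A\in\Gamma(\Endo(F))$, so the principal part of $P$ is the d'Alembertian and to any solution $u$ one can attach a stress-energy current built from $\nabla u$ and the fibre metric. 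Estimating, slab by slab in $t$, the energy $E(t)=\int_{\{t\}\times\Sigma}\big(|\nabla_\nb u|^2+|\nabla^\Sigma u|^2+|u|^2\big)\,\dd\mu_t$ over a truncated cone above a compact region, and applying the divergence theorem to the current, the lateral boundary of the cone is causal, so its flux carries a favourable sign — a manifestation of the dominant energy condition for the wave operator — and drops out. What remains is a differential inequality $E'(t)\le C\,E(t)$, with $C$ controlled by $A$, $\beta$ and the first derivatives of $h_t$ on the compact region, whence Grönwall's lemma gives $E(t)\le e^{C|t|}E(0)$.

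From this localized estimate uniqueness and the support statement follow at once. For uniqueness, the difference $w$ of two solutions solves the homogeneous problem with vanishing Cauchy data, so $E(0)=0$ on every cone and hence $w\equiv0$. For the support statement, pick $x\notin J_\Mb(\Omega)$; then no point of $\Omega$ lies in $J^-_\Mb(x)$ (nor in $J^+_\Mb(x)$), so both the source $J$ and the data $u_0,u_1$ vanish on the relevant domain of dependence of $x$, and the homogeneous estimate on that region forces $u(x)=0$. Thus $\supp u\subset J_\Mb(\Omega)$.

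The hard part will be existence, which I would first settle in a small globally hyperbolic slab $(-\varepsilon,\varepsilon)\times\Sigma$ and then globalize. On the slab, the a priori energy estimate applied to the formal adjoint $P^*$ — itself normally hyperbolic, since $\sigma_{P^*}(\zeta)=g(\zeta,\zeta)\id$ — yields, by a Hahn-Banach/duality argument, a weak solution of $Pu=J$ realizing the prescribed Cauchy data; hyperbolic propagation of regularity along the $t$-direction then upgrades $u$ to a smooth section. To pass from the slab to all of $\Mc$, I would exhaust the time axis by a locally finite family of slabs, use the finite-propagation/support property already established to force the local solutions to agree on overlaps, and glue them by uniqueness into a single $u\in\Gamma(F)$ solving $Pu=J$ with the correct data. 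The main obstacle throughout is obtaining estimates that are uniform when $\Sigma$ is non-compact; it is precisely the support estimate that localizes every step and makes the globalization tractable, which is why I would prove it before attempting existence. A cleaner but heavier alternative would be to construct advanced and retarded fundamental solutions locally via Riesz distributions and the Hadamard recursion, assemble the global Green operators, and read off the Cauchy solution from them; for the present statement, however, the energy-method route is the more economical.
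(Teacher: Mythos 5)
The paper itself does not prove Proposition \ref{Cauchyproblem}; it defers to the literature, citing \cite{BGP, Friedlander} and \cite[Corollary 5]{Bar:2009zzb}. Your sketch must therefore be measured against those references, and it takes a genuinely different, though equally classical, route: the cited proofs construct local fundamental solutions from Riesz distributions via the Hadamard recursion, assemble local Green operators, and glue them globally, with uniqueness and the support statement extracted from the support properties of those fundamental solutions; you instead run the energy method, deriving a Gr\"onwall-type estimate over truncated causal cones that yields uniqueness and $\supp u\subset J_\Mb(\Omega)$ at once, and obtaining existence by duality against the a priori estimate for the (likewise normally hyperbolic) adjoint $P^\ast$, followed by gluing along a slab decomposition. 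Both routes are sound. Yours is the more economical for uniqueness and the support property, and your remark that finite propagation localizes the existence problem over relatively compact pieces of $\Sigma$ is precisely what makes globalization work for non-compact Cauchy surfaces; the parametrix route, on the other hand, delivers the retarded and advanced Green operators --- which the remainder of the paper needs anyway --- as a by-product. Two places where your outline conceals genuine work: to realize non-vanishing Cauchy data by the duality argument you must first subtract an arbitrary smooth extension $\tilde u$ of $(u_0,u_1)$ and solve for $u-\tilde u$ with modified source and zero data, treating the future and the past of $\Sigma$ separately; and `hyperbolic propagation of regularity' is not free --- upgrading the weak solution to a smooth section requires the full hierarchy of higher-order energy estimates in local Sobolev norms. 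Neither point is an obstruction, so I regard your strategy as a correct alternative roadmap rather than a gap.
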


The proof of this proposition has been given in different forms in several books, {\it e.g.} \cite{BGP, Friedlander} and in \cite[Corollary 5]{Bar:2009zzb}. Notice that equation \eqref{inhom} is not linear since we allow for a non-vanishing source term. For pedagogical reasons, we shall henceforth consider only the case $J=0$ although the reader should keep in mind that such constraint is not really needed and a treatment especially of quantization in this scenario has been given in \cite{BDS14affine} and further refined in \cite{FS14}. 

The characterization of all smooth solutions of the equation $Pu=0$ represents the first step in outlining a quantization scheme for a free field theory. To this end one does not consider directly the Cauchy problem \eqref{inhom}, but rather exploits a notable property of normally hyperbolic operators, namely the fact that on any globally hyperbolic spacetime they come together with Green operators. Here we introduce them paying particular attention to the domain where they are defined. The reader should keep in mind that our presentation slightly differs in comparison for example to \cite{BGP} and we make use of results which are presented in \cite{Bar, Friedlander, Sanders:2012}:

\begin{definition}\label{E+E-}
Let $\Mb=(\Mc,g,\ogth,\tgth)$ be a globally hyperbolic spacetime and consider a vector bundle $F$ over $\Mc$. Furthermore, let $L:\Gamma(F)\to\Gamma(F)$ be a linear partial differential operator. We call {\bf retarded} $(+)$ and {\bf advanced} $(-)$ {\bf Green operators} two linear maps
\begin{align}
E^+:\Gamma_{pc}(F)\to\Gamma(F), && E^-:\Gamma_{fc}(F)\to\Gamma(F),
\end{align}
fulfilling the properties listed below: 
\begin{enumerate}
\item For any $f\in\Gamma_{pc}(F)$, it holds $LE^+f=f=E^+Lf$ and $\supp(E^+f)\subset J^+_\Mb(\supp f)$,
\item For any $f\in\Gamma_{fc}(F)$, it holds $LE^-f=f=E^-Lf$ and $\supp(E^-f)\subset J^-_\Mb(\supp f)$.
\end{enumerate}
The operator $E\doteq E^--E^+:\Gamma_{tc}(F)\to\Gamma(F)$ will be referred to as {\bf advanced-minus-retarded operator}. A linear partial differential operator admitting both $E^+$ and $E^-$ will be called {\bf Green hyperbolic}.
\end{definition}

Notice that in the literature the symbols $E^\pm$ are often written as $G^\pm$, while the operator $E$ is also called {\em causal propagator}. We avoid this nomenclature since, from time to time, it is also used for completely different objects and we wish to avoid a potential source of confusion for the reader. In view of the application of this material to some specific field theoretical models, see Section \ref{sect:cqft}, we introduce now the canonical integral pairing $\langle\cdot,\cdot\ra$ between the sections of a vector bundle $F$ and those of its dual $F^\ast$. This is defined by integrating over the base manifold the fiberwise pairing between $F^\ast$ and $F$:
\begin{equation}\label{intrinsicdual}
\langle f^\prime,f\ra\doteq\int_{\Mc}f^\prime(f)\,\dvol_\Mb,
\end{equation}
where $f\in\sect(F)$ and $f^\prime\in\sect(F^\ast)$ have supports with compact overlap. 
Notice that the formula above provides non-degenerate bilinear pairings between $\sectc(F^\ast)$ and $\sect(F)$ 
and between $\sect(F^\ast)$ and $\sectc(F)$. 
In turn, this pairing allows for the following definition:
\begin{definition}
Let $\Mb=(\Mc,g,\ogth,\tgth)$ be a globally hyperbolic spacetime and consider a vector bundle $F$ over $\Mc$ and its dual $F^\ast$. Furthermore, let $L:\Gamma(F)\to\Gamma(F)$ be a partial differential operator. We call {\bf formal dual} the linear partial differential operator $L^\star:\Gamma(F^*)\to\Gamma(F^*)$ defined through \eqref{intrinsicdual} via 
\begin{align}
\langle L^\star f^\prime,f\rangle=\langle f^\prime,Lf\rangle,
\end{align}
where $f\in\sect(F)$ and $f^\prime\in\sect(F^\ast)$ have supports with compact overlap.
\end{definition}

\noindent Notice that, if $F$ is endowed with a non-degenerate inner product as per Definition \ref{pairing}, then we can identify $F$ with $F^*$. Via this identification \eqref{intrinsicdual} becomes \eqref{pairing-sections}. 

\begin{proposition}\label{uniqueG}
Let $\Mb=(\Mc,g,\ogth,\tgth)$ be a globally hyperbolic spacetime and consider a vector bundle $F$ over $\Mc$ and its dual $F^\ast$. Furthermore, let both $L:\Gamma(F)\to\Gamma(F)$ and its formal dual $L^\star:\Gamma(F^*)\to\Gamma(F^*)$ be Green hyperbolic operators, whose retarded and advanced Green operators are $E^\pm$ and $E^{\star\pm}$ respectively. Then, for all $f\in\Gamma_0(F)$ and $f^\prime\in\Gamma_0(F^*)$, it holds that 
\begin{equation*}
\lan E^{\star\mp}f^\prime,f\ra=\lan f^\prime,E^\pm f\ra,
\end{equation*}
where $\lan\cdot,\cdot\ra$ is the pairing defined in \eqref{intrinsicdual}.
\end{proposition}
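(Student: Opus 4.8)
The plan is to prove the relation by transferring $L^\star$ across the pairing \eqref{intrinsicdual} using the defining property of the formal dual, after inverting the Green operators via their characteristic identities. I treat the upper signs; the lower-sign identity follows by the obvious exchange $+\leftrightarrow-$.

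First I would observe that all four objects $E^+f$ and $E^{\star-}f'$ are well defined, since $f\in\Gamma_0(F)$ lies in both $\Gamma_{pc}(F)$ and $\Gamma_{fc}(F)$, and likewise $f'\in\Gamma_0(F^*)$ lies in $\Gamma_{pc}(F^*)\cap\Gamma_{fc}(F^*)$. Starting from the right-hand side, I would use that $E^{\star-}$ is the advanced Green operator for $L^\star$, so that $L^\star E^{\star-}f'=f'$ (valid since $f'\in\Gamma_{fc}(F^*)$), to rewrite
\[
\lan f',E^+f\ra=\lan L^\star E^{\star-}f',E^+f\ra.
\]
I would then apply the defining property of the formal dual to move $L^\star$ onto the second slot, and finally collapse the result using $LE^+f=f$ (valid since $f\in\Gamma_{pc}(F)$), obtaining
\[
\lan L^\star E^{\star-}f',E^+f\ra=\lan E^{\star-}f',LE^+f\ra=\lan E^{\star-}f',f\ra,
\]
which is exactly the claim.

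The main obstacle is that the formal dual identity $\lan L^\star g',g\ra=\lan g',Lg\ra$ holds only when the supports of $g'$ and $g$ have compact overlap, so I must justify this for $g'=E^{\star-}f'$ and $g=E^+f$. Here the support properties of the Green operators are decisive: $\supp(E^+f)\subset J^+_\Mb(\supp f)$ and $\supp(E^{\star-}f')\subset J^-_\Mb(\supp f')$, whence
\[
\supp(E^{\star-}f')\cap\supp(E^+f)\subset J^-_\Mb(\supp f')\cap J^+_\Mb(\supp f).
\]
Since $\supp f$ and $\supp f'$ are compact, global hyperbolicity (in the form of statement 2 of Theorem \ref{BS}, which yields compactness of $J^+_\Mb(K_1)\cap J^-_\Mb(K_2)$ for compact $K_1,K_2$) guarantees that the right-hand side is compact; being a closed subset of a compact set, the overlap is itself compact. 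This legitimises the single application of the formal dual identity, and the remaining two equalities are a direct chain of the Green operator relations. The lower-sign identity $\lan E^{\star+}f',f\ra=\lan f',E^-f\ra$ is proved verbatim, exchanging the retarded and advanced operators and interchanging the roles of future and past in the support estimate.
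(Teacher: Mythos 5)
Your proof is correct and is essentially the paper's own argument: the same three-term chain $\lan E^{\star\mp}f',f\ra=\lan E^{\star\mp}f',LE^\pm f\ra=\lan L^\star E^{\star\mp}f',E^\pm f\ra=\lan f',E^\pm f\ra$, merely traversed from the right-hand side instead of the left. Your explicit verification that $\supp(E^{\star\mp}f')\cap\supp(E^\pm f)$ is compact — which the paper leaves implicit — is a welcome addition but does not change the route.
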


\begin{proof}
The statement is a consequence of the following chain of equalities, which holds true for arbitrary $f^\prime\in\Gamma_0(F^\star)$ and $f\in\Gamma_0(F)$:
\begin{equation*}
\lan E^{\star\mp}f^\prime,f\ra=\lan E^{\star\mp}f^\prime,LE^\pm f\ra=\lan L^\star E^{\star\mp}f^\prime,E^\pm f\ra=\lan f^\prime, E^\pm f\ra.
\end{equation*}
\end{proof}

From the definition, $L$ admits left-inverses $E^+$ and $E^-$ on sections with past (respectively future) compact support. In other words $L$ is injective thereon. As a consequence, $E^+$ and $E^-$ are uniquely specified by their support properties and by the condition of being also right-inverses of L on $\Gamma_{pc}(F)$ and respectively on $\Gamma_{fc}(F)$.

\begin{lemma}\label{uniqueGlemma}
Let $\Mb=(\Mc,g,\ogth,\tgth)$ be a globally hyperbolic spacetime. Consider a vector bundle $F$ over $\Mc$ endowed with a non-degenerate inner product as per Definition \ref{pairing}. Furthermore, let both $L:\Gamma(F)\to\Gamma(F)$ and its formal adjoint $L^*:\Gamma(F)\to\Gamma(F)$ be Green hyperbolic operators. Then, calling $E^\pm$ and $E^{\ast\pm}$ their respective retarded and advanced Green operators, the identity
\begin{equation*}
(E^{\ast\mp}f^\prime,f)=(f^\prime, E^\pm f)
\end{equation*}
holds for all $f^\prime\in\Gamma_0(F)$ and $f\in\Gamma_0(F)$.
\end{lemma}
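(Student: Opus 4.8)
The plan is to reproduce, for the inner-product pairing $(\cdot,\cdot)$, the very chain of identities that proves Proposition \ref{uniqueG} for the dual pairing. I would fix $f',f\in\Gamma_0(F)$ and treat both sign choices at once. Since $E^\pm$ is a right inverse of $L$ on compactly supported sections one has $f=LE^\pm f$; inserting this into the second slot, transferring $L$ to $L^\ast$ in the first slot by the defining property of the formal adjoint (Definition \ref{adjoint}), and finally using that $E^{\ast\mp}$ is a left inverse of $L^\ast$, one obtains
\[
(E^{\ast\mp}f',f)=(E^{\ast\mp}f',LE^\pm f)=(L^\ast E^{\ast\mp}f',E^\pm f)=(f',E^\pm f),
\]
which is exactly the asserted identity.

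The one genuinely nontrivial point, and the step I expect to be the main obstacle, is the middle equality: Definition \ref{adjoint} permits transferring $L$ to $L^\ast$ only when the two sections have supports with compact overlap, yet neither $E^{\ast\mp}f'$ nor $E^\pm f$ is compactly supported. To control this I would invoke the support bounds built into Definition \ref{E+E-}. Writing $K=\supp f$ and $K'=\supp f'$, both compact, one has $\supp(E^{\ast\mp}f')\subset J^\mp_\Mb(K')$ and $\supp(E^\pm f)\subset J^\pm_\Mb(K)$, so their overlap lies in $J^\mp_\Mb(K')\cap J^\pm_\Mb(K)$. The crucial input is that this set is compact, which is the extension from points to compact sets of the causal-compactness property in item (2) of Theorem \ref{BS}: on a globally hyperbolic spacetime $J^+_\Mb(K)\cap J^-_\Mb(K')$ is compact whenever $K,K'$ are compact. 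Granting this, the support overlap is closed inside a compact set, hence compact, and the formal-adjoint identity applies.

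I would then verify the sign bookkeeping, which is uniform: for the upper choice, $E^{\ast-}$ is the advanced Green operator of $L^\ast$ and $E^+$ the retarded one of $L$, so $f'\in\Gamma_{fc}(F)$ and $f\in\Gamma_{pc}(F)$ lie in the correct domains for the left-inverse identities $L^\ast E^{\ast-}f'=f'$ and $LE^+f=f$; the lower choice is obtained by interchanging future and past. Because any compactly supported section is simultaneously future and past compact, all the domains of $E^\pm$ and $E^{\ast\pm}$ are respected throughout the chain.

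A second, essentially equivalent route exploits the identification $F\cong F^*$ furnished by the non-degenerate inner product: the fibrewise isomorphism $\flat\colon F\to F^*$ satisfies $(\sigma,\tau)=\langle\flat\sigma,\tau\rangle$, and a brief computation shows the formal adjoint and the formal dual are intertwined, $\flat L^\ast=L^\star\flat$. As $\flat$ is a zeroth-order isomorphism it preserves supports, so the uniqueness of Green operators noted after Proposition \ref{uniqueG} yields $\flat E^{\ast\pm}=E^{\star\pm}\flat$, whereupon the lemma follows from Proposition \ref{uniqueG} applied to $\flat f'$ and $f$. I would present the direct chain as the primary argument, since it is self-contained and parallels the proof of Proposition \ref{uniqueG} verbatim, relegating this identification to a remark.
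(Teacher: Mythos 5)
Your proposal is correct, and in fact it contains the paper's own proof as its closing remark: the paper's argument is precisely the identification route, namely that the non-degenerate inner product identifies $F$ with $F^*$, turns the pairing \eqref{intrinsicdual} into \eqref{pairing-sections} and the formal dual $L^\star$ into the formal adjoint $L^*$, so that the lemma is an immediate corollary of Proposition \ref{uniqueG}. Your primary route instead redoes the three-step chain $(E^{\ast\mp}f',f)=(E^{\ast\mp}f',LE^\pm f)=(L^\ast E^{\ast\mp}f',E^\pm f)=(f',E^\pm f)$ directly for the inner-product pairing; this is the same computation the paper uses to prove Proposition \ref{uniqueG} itself, so nothing essentially new is needed, but your version is self-contained and has the merit of making explicit a point the paper passes over in silence: the middle equality requires $\supp(E^{\ast\mp}f')\cap\supp(E^\pm f)$ to be compact, which you correctly derive from the support bounds of Definition \ref{E+E-} together with the compactness of $J^\pm_\Mb(K)\cap J^\mp_\Mb(K')$ for compact $K,K'$ on a globally hyperbolic spacetime. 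Either presentation is acceptable; the paper's is shorter because it reuses Proposition \ref{uniqueG}, while yours buys an explicit justification of the support hypothesis in the formal-adjoint identity.
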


\begin{proof}
Since $F$ is endowed with a non-degenerate inner product, \eqref{intrinsicdual} reduces to \eqref{pairing-sections} upon identification of $F^*$ with $F$. Furthermore the formal dual of $L$ coincides with its formal adjoint under this identification, {\it i.e.} $L^\star=L^*$. Hence we are falling in the hypotheses of Proposition \ref{uniqueG}, from which the sought result follows.
\end{proof}

Notice that all normally hyperbolic operators are Green hyperbolic. This result follows from \cite{Bar, BGP}. In the latter reference the retarded and advanced Green operators for a normally hyperbolic operator are shown to exist, although with a smaller domain compared to the one we consider here, while in the first one the domains are uniquely extended, thus fulfilling the requirement of our Definition \ref{E+E-}. Let us stress that there are physically interesting partial differential operators which are Green hyperbolic, but not normally hyperbolic. The most notable example is the Dirac operator which will be discussed in Section \ref{subDirac}. The reader should also keep in mind that some authors are calling Green hyperbolic an operator which fulfills the hypotheses of Proposition \ref{uniqueG} or of Lemma \ref{uniqueGlemma} -- see for example \cite{Bar}.

The usefulness of both retarded and advanced Green operators becomes manifest as soon as one notices that, to every timelike compact section $f$ of a vector bundle $F$ we can associate a solution of the linear equation $Lu=0$ as $u=Ef=E^-f-E^+f$. Yet, before concluding that we have given a characterization of all solutions, we need a few additional data:

\begin{lemma}\label{Kernel}
Let $\Mb=(\Mc,g,\ogth,\tgth)$ be a globally hyperbolic spacetime. 
Consider a vector bundle $F$ over $\Mc$ and a Green hyperbolic operator $L:\Gamma(F)\to\Gamma(F)$. Let $E^\pm$ be the retarded and advanced Green operators for $L$ and denote with $E$ the corresponding advanced-minus-retarded operator. Then $f\in\secttc(F)$ is such that $Ef=0$ if and only if $f=Lh$ for some $h\in\Gamma_{tc}(F)$. Furthermore, $f\in\Gamma_{tc}(F)$ is such that $Lf=0$ if and only if $f=0$ and, moreover, for any $f\in\sect(F)$ there exists $h\in\sect(F)$ such that $Ph=f$.
\end{lemma}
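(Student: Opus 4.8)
I would deduce all three assertions from the defining relations of Definition~\ref{E+E-}, namely $E^\pm L=\id$ and $LE^\pm=\id$ on the respective support classes, together with the support bounds $\supp(E^\pm g)\subset J^\pm_\Mb(\supp g)$; the only substantive input is standard compactness of the causal structure. The injectivity statement is immediate: if $f\in\Gamma_{tc}(F)$ and $Lf=0$, then, $f$ being in particular past compact, $f=E^+Lf=E^+0=0$. For the ``if'' part of the kernel characterization, suppose $f=Lh$ with $h\in\Gamma_{tc}(F)$; then $f\in\Gamma_{tc}(F)$ since $L$ does not enlarge supports, and, using that $h$ is both future and past compact, $Ef=E^-Lh-E^+Lh=h-h=0$.

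For the converse, assume $f\in\Gamma_{tc}(F)$ with $Ef=0$, i.e.\ $E^-f=E^+f$. I would set $h:=E^+f=E^-f\in\Gamma(F)$, so that $Lh=LE^+f=f$; it then only remains to check $h\in\Gamma_{tc}(F)$. The two support bounds give $\supp h\subset J^+_\Mb(K)\cap J^-_\Mb(K)$ with $K:=\supp f$ timelike compact. To convert this into timelike compactness of $h$, fix $p\in\Mc$: any $x\in\supp h\cap J^+_\Mb(p)$ lies in $J^-_\Mb(K)$, hence there is $k\in K$ with $k\in J^+_\Mb(x)\subset J^+_\Mb(p)$, so $k\in K\cap J^+_\Mb(p)$, which is compact because $K$ is future compact. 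Consequently $\supp h\cap J^+_\Mb(p)$ is a closed subset of the compact set $J^+_\Mb(p)\cap J^-_\Mb\big(K\cap J^+_\Mb(p)\big)$, hence compact; the mirror-image argument with $J^-_\Mb(p)$ yields past compactness, so $h\in\Gamma_{tc}(F)$.

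For the final assertion, surjectivity of $L$ on $\Gamma(F)$, I would split $f$ with the help of a Cauchy time function $t:\Mc\to\RR$ provided by Theorem~\ref{BS}. Choosing $\chi=\rho\circ t$ with $\rho:\RR\to[0,1]$ smooth, $\rho\equiv 0$ on $(-\infty,0]$ and $\rho\equiv 1$ on $[1,\infty)$, I write $f=(1-\chi)f+\chi f$. The first summand is supported in $J^-_\Mb(\Sigma_1)=\{t\le 1\}$ and the second in $J^+_\Mb(\Sigma_0)=\{t\ge 0\}$, where $\Sigma_c=t^{-1}(c)$ are Cauchy surfaces; since $J^+_\Mb(p)\cap J^-_\Mb(\Sigma_1)$ and $J^-_\Mb(p)\cap J^+_\Mb(\Sigma_0)$ are compact for every $p$, one gets $(1-\chi)f\in\Gamma_{fc}(F)$ and $\chi f\in\Gamma_{pc}(F)$. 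Then $h:=E^-\big((1-\chi)f\big)+E^+(\chi f)\in\Gamma(F)$ satisfies $Lh=(1-\chi)f+\chi f=f$.

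In every case the algebra collapses instantly and the content is geometric, so I expect the main obstacle to be the compactness bookkeeping: verifying that the relevant intersections of causal cones with timelike compact sets, and with the slabs between Cauchy surfaces, are compact. These reduce to the standard facts on a globally hyperbolic spacetime that $J^\pm_\Mb$ of a compact set is closed and that $J^+_\Mb(A)\cap J^-_\Mb(B)$ is compact for compact $A,B$, which I would cite from \cite{BGP} rather than reprove.
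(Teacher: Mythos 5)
Your argument is correct and follows the paper's proof essentially step for step: injectivity of $L$ on $\Gamma_{tc}(F)$ via $E^+L=\id$, the identification of $\ker E$ with $L(\Gamma_{tc}(F))$ by setting $h=E^\pm f$ and using the support bounds, and surjectivity by splitting $f$ into a past compact and a future compact piece (your explicit cutoff $\rho\circ t$ is exactly the partition of unity constructed in the paper's footnote). The only difference is that you spell out the compactness bookkeeping showing $J^+_\Mb(K)\cap J^-_\Mb(K)$ is timelike compact for timelike compact $K$, a point the paper asserts without proof.
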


\begin{proof}
On account of Definition \ref{E+E-}, it holds that $ELh=0$ for all $h\in\secttc(F)$, thus we need only to show that, given $f\in\Gamma_{tc}(F)$ such that $Ef=0$, then there exists $h\in\Gamma_{tc}(F)$ such that $f=Lh$. Taking any such $f$, $Ef=0$ implies that $E^-f=E^+f$. The support properties of the retarded and advanced Green operators entail that $\supp(E^-f)\subset J^+(\supp f)\cap J^-(\supp f)$. In other words $h=E^-f\in\Gamma_{tc}(F)$. If we apply the operator $L$, it holds $Lh=LE^-f=f$. 

Suppose now that there exists $f\in\Gamma_{tc}(F)$ such that $Lf=0$. By applying either the retarded or the advanced Green operators we obtain, $f=E^\pm Lf=0$.

To conclude the proof, consider $f\in\sect(F)$. Taking a partition of unity $\{\chi_+,\chi_-\}$ on $\Mb$ such that 
$\chi_\pm=1$ on a past/future compact region,\footnote{A partition of unity such as the one described exists on account of Theorem \ref{BS}. In fact, after splitting the globally hyperbolic spacetime $\Mb$ in the Cartesian product of $\RR$ and a spacelike Cauchy surface $\Sigma$ and for any choice of $t_\pm\in\RR$ with $t_-<t_+$, one can introduce a partition of unity $\{\chi_+,\chi_-\}$ on $\RR$ such that $\chi_\pm(t)=1$ for $\pm t\leq\pm t_\pm$. 
Pulling this partition of unity back to $\Mc$ along the the projection on the time factor $t:\Mc\to\RR$, one obtains a partition of unity on $\Mc$ of the sought type.} one can introduce $h=E^+(\chi_+f)+E^-(\chi_-f)\in\sect(F)$. Since $\chi_++\chi_-=1$ everywhere, $Ph=f$ as claimed. 
\end{proof}

In view of this last result, we can finally characterize the space of solutions of $Lu=0$ via the advanced-minus-retarded operator -- see also \cite{Wald2}:

\begin{theorem}\label{main}
Let $\Mb=(\Mc,g,\ogth,\tgth)$ be a globally hyperbolic spacetime. 
Consider a vector bundle $F$ over $\Mc$ and let $L:\Gamma(F)\to\Gamma(F)$ be a Green hyperbolic operator. Let $E^\pm$ be the retarded and advanced Green operators for $L$ and denote with $E$ the corresponding advanced-minus-retarded operator. The map presented below is a vector space isomorphism between $\Sol$, the vector space of smooth solutions of the linear partial differential equation $Lu=0$, $u\in\Gamma(F)$, and the quotient of $\Gamma_{tc}(F)$ by the image of $L$ acting on $\Gamma_{tc}(F)$:
\begin{align}
\frac{\secttc(F)}{L(\secttc(F))}\to\Sol, && [f]\mapsto Ef,
\end{align}
where $f\in\secttc(F)$ is any representative of the equivalence class $[f]$ in the quotient space. 
\end{theorem}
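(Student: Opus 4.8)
The plan is to show that the map $[f]\mapsto Ef$ is well-defined, linear, injective, and surjective onto $\Sol$. Linearity is immediate since $E$ is linear, so the substance lies in the other three properties, each of which I expect to follow directly from Lemma \ref{Kernel} together with the support and inversion properties of the Green operators collected in Definition \ref{E+E-}.

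First I would verify that the map is \emph{well-defined}, i.e. that $Ef$ lands in $\Sol$ and that it does not depend on the choice of representative. For any $f\in\secttc(F)$, applying $L$ to $u=Ef=E^-f-E^+f$ and using $LE^\pm f=f$ from Definition \ref{E+E-} gives $Lu=LE^-f-LE^+f=f-f=0$, so $Ef\in\Sol$. Independence of the representative amounts to showing that $L(\secttc(F))$ lies in the kernel of $E$: if $f=Lh$ with $h\in\secttc(F)$, then $Ef=ELh=(E^--E^+)Lh=h-h=0$, again by the inversion property. Hence the map descends to the quotient.

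Next I would establish \emph{injectivity}. Suppose $[f]$ maps to $0$, that is $Ef=0$ for some representative $f\in\secttc(F)$. This is precisely the hypothesis of the first assertion of Lemma \ref{Kernel}, which tells us that $Ef=0$ forces $f=Lh$ for some $h\in\secttc(F)$, i.e. $f\in L(\secttc(F))$ and therefore $[f]=0$ in the quotient. Thus the induced map has trivial kernel and is injective.

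Finally, for \emph{surjectivity}, I would take an arbitrary $u\in\Sol$, so $Lu=0$ with $u\in\sect(F)$, and produce a timelike compact $f$ with $Ef=u$. The natural candidate is obtained from the partition-of-unity construction in the last part of Lemma \ref{Kernel}: writing $u=\chi_+u+\chi_-u$ for the partition $\{\chi_+,\chi_-\}$ adapted to the causal splitting of $\Mb$, one sets $f\doteq L(\chi_-u)=-L(\chi_+u)$, where the two expressions agree because $L(\chi_+u+\chi_-u)=Lu=0$. Here $\chi_-u$ has past compact support and $\chi_+u$ has future compact support, so $f=L(\chi_-u)$ has past compact support while $f=-L(\chi_+u)$ has future compact support, whence $f\in\secttc(F)$. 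Applying $E^-$ and $E^+$ respectively and using that they invert $L$ on the appropriate support classes yields $E^-f=E^-L(\chi_-u)=\chi_-u$ and $E^+f=E^+(-L(\chi_+u))=-\chi_+u$, so that $Ef=E^-f-E^+f=\chi_-u+\chi_+u=u$. This shows $[f]\mapsto u$, completing surjectivity. The main obstacle I anticipate is bookkeeping the support classes correctly in this last step, namely checking that $\chi_\pm u$ really lies in the domains $\Gamma_{pc}(F)$ and $\Gamma_{fc}(F)$ on which $E^-$ and $E^+$ act and invert $L$; once the causal splitting from Theorem \ref{BS} is invoked as in the footnote to Lemma \ref{Kernel}, this is routine, and the remaining identities are purely formal consequences of Definition \ref{E+E-}.
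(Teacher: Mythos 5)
Your proof follows the same route as the paper's: well-definedness and injectivity from Lemma \ref{Kernel}, and surjectivity via the partition of unity $\{\chi_+,\chi_-\}$ with the candidate preimage $L(\chi_-u)=-L(\chi_+u)$. The argument is correct in structure, but you stumble on exactly the support bookkeeping you flagged as the delicate point. By Definition \ref{supports}, a section is \emph{future} compact when its support meets every $J^+_\Mb(p)$ in a compact set, i.e.\ when it is supported essentially towards the past; correspondingly $E^-$ is defined on $\Gamma_{fc}(F)$ and satisfies $E^-Lf=f$ there, while $E^+$ plays the analogous role on $\Gamma_{pc}(F)$. Your final computation $E^-L(\chi_-u)=\chi_-u$ and $E^+\bigl(-L(\chi_+u)\bigr)=-\chi_+u$ therefore requires $\chi_-u\in\Gamma_{fc}(F)$ and $\chi_+u\in\Gamma_{pc}(F)$ --- the opposite of what you assert in the preceding sentence. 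If one instead took your stated labels at face value (so $\chi_-u$ past compact, $\chi_+u$ future compact), the legitimate identities would be $E^+L(\chi_-u)=\chi_-u$ and $E^-L(\chi_+u)=\chi_+u$, and the same candidate $f$ would yield $Ef=E^-f-E^+f=-\chi_+u-\chi_-u=-u$, off by a sign. The fix is only to swap the labels (take $\chi_-$ equal to $1$ in the far past, so that $\chi_-u$ is future compact, and $\chi_+u$ past compact), after which your computation coincides with the paper's.
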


\begin{proof}
Let us notice that the advanced-minus-retarded operator $E:\Gamma_{tc}(F)\to\Gamma(F)$ induces the sought map from $\Gamma_{tc}(F)/L(\Gamma_{tc}(F))$ to $\Sol$. On account of Lemma \ref{Kernel} the image does not depend on the representative of $[f]$ and $u=Ef$ is a solution of $Pu=0$. This map is injective since, given $f,f^\prime\in\Gamma_{tc}(E)$ such that $Ef=Ef^\prime$, per linearity of $E$ and applying Lemma \ref{Kernel}, one finds $h\in\Gamma_{tc}(E)$ such that $Lh=f-f^\prime$. In other words $f$ and $f^\prime$ are two representatives of the same equivalence class in $\Gamma_{tc}(F)/L(\Gamma_{tc}(F))$, which entails injectivity. Only surjectivity is still to be proven. Given $u\in\Sol$ and taking into account a partition of unity $\{\chi_+,\chi_-\}$ on $\Mc$ such that $\chi_\pm=1$ in a past/future compact region, one finds $L(\chi_+u+\chi_-u)=Lu=0$, 
therefore $h=L(\chi_-u)=-L(\chi_+u)$ is timelike compact. Exploiting the properties of retarded and advanced Green operators, one concludes the proof: 
\begin{equation*}
Eh=E^-L(\chi_-u)+E^+L(\chi_+u)=\chi_-u+\chi_+u=u.
\end{equation*}
\end{proof}

It might be useful to summarize the content of Lemma \ref{Kernel} and of Theorem \ref{main} with the following exact sequence:
\begin{equation*}
0\longrightarrow \Gamma_{tc}(F)\overset{L}{\longrightarrow} \Gamma_{tc}(F)\overset{E}{\longrightarrow} \Gamma(F)\overset{L}{\longrightarrow} \Gamma(F)\longrightarrow 0.
\end{equation*}
We remind the reader that this is simply a symbolic way of stating that the kernel of each of the arrows depicted above coincides with the image of the preceding one. 



Although the last theorem provides a complete characterization of the solutions of the partial differential equation associated to a Green hyperbolic operator, we need to introduce and to study a vector subspace of $\Sol$ which will play a distinguished role in the analysis of explicit models. 

\begin{proposition}\label{prpSCSol}
Let $\Mb=(\Mc,g,\ogth,\tgth)$ be a globally hyperbolic spacetime. 
Consider a vector bundle $F$ over $\Mc$ and let $L:\Gamma(F)\to\Gamma(F)$ be a Green hyperbolic operator. Let $E^\pm$ be the retarded and advanced Green operators for $L$ and denote with $E$ the corresponding advanced-minus-retarded operator. Then the following statements hold true: 
\begin{enumerate}
\item If $f\in\sectc(F)$ is such that $Lf=0$, then $f=0$; 
\item If $f\in\Gamma_0(F)$ is such that $Ef=0$, then there exists $h\in\sectc(F)$ such that $Lh=f$;
\item For each $h\in\sectsc(F)$ there exists $f\in\sectsc(F)$ such that $Lf=h$.
\end{enumerate}
Furthermore, let $\Solsc \subset \Sol$ be the vector subspace whose elements are smooth and spacelike compact solutions of $Lu=0$. Then the map presented below is an isomorphism between $\Solsc$ and the quotient of $\Gamma_0(F)$ by the image of $L$ acting on $\Gamma_0(F)$:
\begin{align}
\frac{\sectc(F)}{L(\sectc(F))}\to\Solsc, && [f]\mapsto Ef,
\end{align}
where $f\in\sectc(F)$ is any representative of the equivalence class $[f]$ in the quotient space. 
\end{proposition}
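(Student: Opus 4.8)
The plan is to treat the three enumerated claims as the spacelike compact counterparts of the content of Lemma~\ref{Kernel}, and then to assemble them into the stated isomorphism exactly as Theorem~\ref{main} was obtained from that lemma. Throughout I would lean on the support estimates $\supp(E^\pm f)\subset J^\pm_\Mb(\supp f)$ together with the identities $LE^\pm f=f=E^\pm Lf$ on the relevant support classes, and on the elementary observation that a compactly supported section is automatically timelike compact, i.e. $\sectc(F)\subset\secttc(F)$ (for compact $K=\supp f$ the sets $K\cap J^\pm_\Mb(p)$ are closed subsets of $K$, hence compact). For item (1) this already suffices: if $f\in\sectc(F)$ satisfies $Lf=0$, then applying $E^+$ gives $f=E^+Lf=E^+0=0$.

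For item (2), take $f\in\sectc(F)$ with $Ef=0$. Since $\sectc(F)\subset\secttc(F)$, Lemma~\ref{Kernel} already produces $h\in\secttc(F)$ with $Lh=f$; explicitly $Ef=0$ means $E^-f=E^+f=:h$. The only thing to upgrade is the support: from $\supp h\subset J^+_\Mb(\supp f)$ and $\supp h\subset J^-_\Mb(\supp f)$ one gets $\supp h\subset J^+_\Mb(\supp f)\cap J^-_\Mb(\supp f)$, which is compact because $\supp f$ is compact and causal diamonds between compacta are compact in a globally hyperbolic spacetime (Theorem~\ref{BS}). Hence $h\in\sectc(F)$ and $Lh=LE^+f=f$, as required.

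For item (3) I would mimic the last part of Lemma~\ref{Kernel}. Writing $\Mb\cong\RR\times\Sigma$ via Theorem~\ref{BS}, choose a time partition of unity $\{\chi_+,\chi_-\}$ exactly as there, so that $\chi_+h\in\Gamma_{pc}(F)$ and $\chi_-h\in\Gamma_{fc}(F)$; both stay spacelike compact since multiplication by $\chi_\pm$ cannot enlarge $\supp h\subset J_\Mb(K)$. Then $f:=E^+(\chi_+h)+E^-(\chi_-h)$ satisfies $Lf=\chi_+h+\chi_-h=h$. The delicate point, and the main obstacle of the whole proposition, is to check that $f$ is itself spacelike compact: $J^+_\Mb$ of a merely spacelike compact set need not be spacelike compact. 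Here I would use that $\supp(\chi_+h)$, lying in $J_\Mb(K)$ and in a future time half-space $\{t\ge t_-\}$, is actually contained in $J^+_\Mb(K')$ for a compact $K'$, because the offending portion $J^-_\Mb(K)\cap\{t\ge t_-\}=J^-_\Mb(K)\cap J^+_\Mb(\{t_-\}\times\Sigma)$ is compact by global hyperbolicity; consequently $\supp E^+(\chi_+h)\subset J^+_\Mb(K')$, and symmetrically $\supp E^-(\chi_-h)\subset J^-_\Mb(K'')$, so $\supp f$ fits inside a single causal tube and $f\in\sectsc(F)$.

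Finally I assemble the isomorphism as in Theorem~\ref{main}. The map $[f]\mapsto Ef$ is well defined because $EL=E^-L-E^+L=0$ on $\secttc(F)\supset\sectc(F)$, and it lands in $\Solsc$ since $LEf=0$ and $\supp Ef\subset J_\Mb(\supp f)$ with $\supp f$ compact. Injectivity is precisely item (2): $Ef=0$ forces $f\in L(\sectc(F))$, i.e. $[f]=0$. For surjectivity, given $u\in\Solsc$ I reuse the partition of unity: from $Lu=0$ the section $f:=L(\chi_-u)=-L(\chi_+u)$ has support in the compact set $\{t_-\le t\le t_+\}\cap J_\Mb(K)$, so $f\in\sectc(F)$, and then $Ef=E^-L(\chi_-u)+E^+L(\chi_+u)=\chi_-u+\chi_+u=u$. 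Item (1) records in addition that $L$ is injective on $\sectc(F)$, so $\sectc(F)/L(\sectc(F))$ is the natural domain for the inverse construction.
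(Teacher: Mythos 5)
Your proof is correct and follows exactly the route the paper intends: its own proof of this proposition merely states that one repeats Lemma \ref{Kernel} and Theorem \ref{main}, keeping in mind that $E$ maps compactly supported sections to spacelike compact ones and that the intersection of a spacelike compact region with a timelike compact one is compact. You have simply written out the details the paper omits, including the one genuinely delicate point (why $E^+(\chi_+h)+E^-(\chi_-h)$ in item (3) is again spacelike compact), and your argument there is sound.
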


\begin{proof}
The proof follows slavishly those of Lemma \ref{Kernel} and of Theorem \ref{main} and therefore we shall not repeat it in details. One has only to keep in mind that $E$ maps sections with compact support to sections with spacelike compact support and that the intersection between a spacelike compact region and a timelike compact one is compact.
\end{proof}

\noindent In terms of an exact sequence, this last proposition translates to 
\begin{equation}\label{eqSCExactSeq}
0\longrightarrow \Gamma_{0}(F)\overset{L}{\longrightarrow} \Gamma_{0}(F)\overset{E}{\longrightarrow} \Gamma_{sc}(F)\overset{L}{\longrightarrow} \Gamma_{sc}(F)\longrightarrow 0.
\end{equation}

Spacelike compact solutions to a linear partial differential equation are also noteworthy since, under certain additional assumptions, they can be naturally endowed with an additional structure which plays a key role in the construction of the algebra of observables for a Bosonic of for a Fermionic free quantum field theory -- see also \cite{BGP, Hack:2012dm, Khavkine:2014kya, Khavkine:2012jf, Wald2}.

\begin{proposition}\label{prpSymplStructure}
Let $\Mb=(\Mc,g,\ogth,\tgth)$ be a globally hyperbolic spacetime. 
Consider a vector bundle $F$ over $\Mc$ endowed with a non-degenerate inner product as per Definition \ref{pairing}. Let $L:\Gamma(F)\to\Gamma(F)$ be a formally self-adjoint Green hyperbolic operator and denote with $E^\pm$ the corresponding retarded and advanced Green operators and with $E$ the associated advanced-minus-retarded operator. Then the map presented below defines a non-degenerate bilinear form on $\sectc(F)$:
\begin{align}
\tau:\frac{\Gamma_0(F)}{L(\Gamma_0(F))}\times \frac{\Gamma_0(F)}{L(\Gamma_0(F))}\to\RR,
&& ([f],[f^\prime])\mapsto(f,Ef^\prime)
\end{align}
where $(\cdot,\cdot)$ is the pairing defined in \eqref{pairing-sections}, while $f\in[f]$ and $f^\prime\in[f^\prime]$ are two arbitrary representatives. Furthermore, $\tau$ is a symplectic form in the Bosonic case, namely when the inner product on $F$ is symmetric, while it is a scalar product in the Fermionic case, namely when the inner product on $F$ is anti-symmetric.
\end{proposition}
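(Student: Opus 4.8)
The plan is to establish in turn that $\tau$ descends to the quotient, that it carries the stated (anti)symmetry, and that it is non-degenerate. Throughout I would write $\epsilon=+1$ in the Bosonic case and $\epsilon=-1$ in the Fermionic case, so that the pairing \eqref{pairing-sections} obeys $(\sigma,\varsigma)=\epsilon\,(\varsigma,\sigma)$, and I would use repeatedly the identities $LE^\pm f'=f'$ and $E^\pm Lf=f$ from Definition \ref{E+E-}, which yield $LEf'=0$ and $ELh=0$.

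First I would check well-definedness. Replacing the representative of $[f']$ by $f'+Lh'$ with $h'\in\Gamma_0(F)$ leaves $Ef'$ unchanged, since $ELh'=0$. Replacing the representative of $[f]$ by $f+Lh$ with $h\in\Gamma_0(F)$ changes $\tau$ by $(Lh,Ef')$; using formal self-adjointness $L^*=L$ this equals $(h,LEf')=(h,0)=0$. Hence $\tau$ is a well-defined bilinear form on the quotient.

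Next, for the symmetry type, the key input is Lemma \ref{uniqueGlemma}. Because $L$ is formally self-adjoint, uniqueness of the Green operators (noted after Proposition \ref{uniqueG}) forces $E^{*\pm}=E^\pm$, so the lemma reads $(E^\mp f',f)=(f',E^\pm f)$ for all $f,f'\in\Gamma_0(F)$. I would then expand $\tau([f],[f'])=(f,E^-f')-(f,E^+f')$, move $f$ into the second slot picking up the factor $\epsilon$, and apply the lemma to each term, namely $(f,E^-f')=\epsilon(E^-f',f)=\epsilon(f',E^+f)$ and $(f,E^+f')=\epsilon(f',E^-f)$. Subtracting gives $\tau([f],[f'])=-\epsilon\,(f',Ef)=-\epsilon\,\tau([f'],[f])$, so $\tau$ is antisymmetric, i.e. a symplectic form, when $\epsilon=+1$, and symmetric, i.e. a scalar product, when $\epsilon=-1$. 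The one point demanding care here is tracking the correlated signs of the $\mp$ and $\pm$ in the lemma against the sign $\epsilon$ of the inner product.

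Finally, for non-degeneracy, suppose $\tau([f],[f'])=0$ for all $[f']$. By the identity just derived this is equivalent to $(f',Ef)=0$ for all $f'\in\Gamma_0(F)$. Since $f\in\Gamma_0(F)$, Proposition \ref{prpSCSol} gives $Ef\in\Gamma_{sc}(F)$, and the pairing \eqref{pairing-sections} between $\Gamma_0(F)$ and $\Gamma(F)$ is non-degenerate (from the fiberwise non-degeneracy in Definition \ref{pairing} together with a localization argument), so $Ef=0$. Proposition \ref{prpSCSol}(2) then produces $h\in\Gamma_0(F)$ with $Lh=f$, that is $[f]=0$ in the quotient, which closes non-degeneracy. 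I expect the genuine content to be this last implication, promoting \emph{$Ef=0$} to \emph{$f\in L(\Gamma_0(F))$}; since it is already supplied by Proposition \ref{prpSCSol}, the remaining work is the assembly of these ingredients with careful sign bookkeeping, rather than any new estimate.
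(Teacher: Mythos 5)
Your proposal is correct and follows essentially the same route as the paper's proof: well-definedness via $EL=0$ and formal self-adjointness, the (anti)symmetry via Lemma \ref{uniqueGlemma} specialized to $L^*=L$, and non-degeneracy by reducing to $Ef=0$ through the non-degenerate pairing and then invoking the exact sequence \eqref{eqSCExactSeq} (equivalently Proposition \ref{prpSCSol}) to conclude $f\in L(\Gamma_0(F))$. The only cosmetic difference is that you split $E=E^--E^+$ and apply the lemma termwise, where the paper uses the combined identity $(Ef,f')=-(f,Ef')$ directly; the sign bookkeeping comes out the same.
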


\begin{proof}
Notice that the definition of the map $\tau$ is well-posed since it does not depend on the choice of representatives. 
In fact, on the one hand, $ELh=0$ for all $h\in\sectc(F)$ and, on the other hand, $L$ is formally self-adjoint. 
From its definition, it immediately follows that $\tau$ is bilinear. Let us show that it is also non-degenerate. 
Suppose $f\in\sectc(F)$ is such that $\tau([f],[f^\prime])=0$ for all $f^\prime\in\sectc(F)$. 
This means that $-(Ef,f^\prime)=(f,Ef^\prime)=0$ for all $f^\prime\in\sectc(F)$. 
Here we exploited the fact that $L$ is formally self-adjoint, 
therefore Lemma \ref{uniqueGlemma} holds with $L^\ast=L$. 
Since the pairing $(\cdot,\cdot)$ between $\sect(F)$ and $\sectc(F)$ is non-degenerate, one deduces that $Ef=0$. 
Recalling also \eqref{eqSCExactSeq}, it follows that $f$ lies in $L(\sectc(F))$, meaning that $[f]=0$. 
Similarly, one can show non-degeneracy in the other argument too. 
To conclude the proof, suppose that we are in the Bosonic (Fermionic) case, 
namely we have $a\cdot b=(-)b\cdot a$ for all $p\in\Mc$ and all $a,b\in F_p$. 
Therefore, $\tau_\Mb$ is anti-symmetric (symmetric) as the following chain of identities shows:
\begin{align}
-\tau([f],[f^\prime])=-(f,Ef^\prime)=(Ef,f^\prime)=(-)(f^\prime,Ef)=(-)\tau([f^\prime],[f]).
\end{align}
Note that we exploited the formal self-adjointness of $L$ in the first place 
and then also the symmetry (anti-symmetry) of $\cdot$. 
\end{proof}

Notice that, in the literature it is also customary to denote $\tau([f],[f^\prime])$ with $E(f,f^\prime)$. 

\section{Classical and quantum field theory}\label{sect:cqft}
In this section we shall construct the classical field theories and their quantum counterparts for three models, 
namely the real scalar field, the Proca field and the Dirac field. 
We shall consider an arbitrary, but fixed, globally hyperbolic spacetime $\Mb=(\Mc,g,\ogth,\tgth)$ 
as the background for the dynamical evolution of the fields under analysis. 
For each model, we shall introduce a suitable class of sufficiently well-behaved functionals 
defined on the space of classical field configurations. The goal is to find functionals 
which can be thought of as {\em classical observables} in the sense that 
one can extract any information about a given field configuration by means of these functionals 
and, moreover, each of them provides some information which cannot be detected by any other functional. 
Note that the approach we adopt allows for extensions in several directions. In fact, it has been followed 
both in the context of affine field theories \cite{BDS14affine} as well as for gauge field theories 
\cite{Benini:2013tra, BDHS14, BDM14}. 
Even when a space of classical observables complying with these requirements has been found, 
a symplectic structure is still needed in order to have the full data describing our classical field theory. 
This structure will be induced in a natural way by the partial differential equation ruling the dynamics. 
The reasons for the need of a symplectic structure are manifold. 
Conceptually, the analogy with classical mechanics motivates chiefly this requirement; 
at a practical level, instead, this is a bit of information which is needed to step-up a quantization scheme for the models, we are interested in. 

\subsection{The real scalar field}\label{Trsf}
As mentioned above, let us fix once and for all a globally hyperbolic spacetime $\Mb=(\Mc,g,\ogth,\tgth)$, which provides the background where to specify the field equation. 

\subsubsection{Classical field theory}\label{subsubObsScalar}
For the real scalar case, the {\em off-shell} field configurations are real-valued smooth functions on $\Mc$. 
This means that, before imposing the field equation, the relevant space of configurations is $\func(\Mc)$. 
As a starting point, we introduce linear functionals on $\func(\Mc)$ as follows: Given $f\in\cc(\Mc)$, 
we denote by $F_f:\func(\Mc,\RR)\to\RR$ the map defined below: 
\begin{equation}\label{eqFunctScalar}
F_f(\phi)=\int_\Mc f\;\phi\,\dvol_\Mb,
\end{equation}
where $\dvol_\Mb$ is the standard volume form on $\Mb=(\Mc,g,\ogth,\tgth)$ 
defined out of its orientation $\ogth$ and of its metric $g$. 
Notice that the above definition of a functional makes use of the usual non-degenerate bilinear pairing 
between $\cc(\Mc)$ and $\func(\Mc)$. It is a well-known result of functional analysis that 
this pairing is non-degenerate. This has two important consequences: First, the map $f\in\cc(\Mc)\mapsto F_f$ 
implicitly defined by \eqref{eqFunctScalar} is injective, thus allowing us to identify the space of functionals 
$\{F_f:\,f\in\cc(\Mc)\}$ with $\cc(\Mc)$. Second, the class of functionals considered so far 
is rich enough to separate off-shell configurations, 
namely, given two different configurations $\phi,\psi\in\func(\Mc)$, 
there exists always $f\in\cc(M)$ such that $F_f(\phi)\neq F_f(\psi)$. 
In fact, this is equivalent to the following statement, which follows from the non-degeneracy 
of the bilinear pairing between $\cc(\Mc)$ and $\func(\Mc)$: 
If $\phi\in\func(\Mc)$ is such that $F_f(\phi)=0$ for all $f\in\cc(\Mc)$, then $\phi=0$. 
Phrased differently, we are saying that, off-shell functionals of the form $F_f$, 
{\em cf.} \eqref{eqFunctScalar}, are faithfully labeled by $f\in\cc(\Mc)$. 
Later we shall restrict functionals to dynamically allowed field configurations. 
This restriction will break the one-to-one correspondence between functionals $F_f$ and $f\in\cc(\Mc)$. 

So far, we did not take into account the dynamics of the real scalar field. 
This is specified by the following partial differential equation: 
\begin{equation}\label{eqScalar}
\Box_\Mb\phi+(m^2+\xi R)\phi=0,
\end{equation}
where $\xi\in\mathbb{R}$, $R$ stands for the scalar curvature built out of $g$, $m^2$ is a real number 
while $\Box_\Mb=g^{ab}\nabla_a\nabla_b:\func(\Mc)\to\func(\Mc)$ is the d'Alembert operator on $\Mb$ 
defined out of the metric $g$ via the associated Levi-Civita connection $\nabla$. Notice that in this paper we are not imposing any constraint on the sign of the mass term since it plays no role. When $\phi\in\func(\Mc)$ is a solution of equation \eqref{eqScalar}, 
we say that $\phi$ is an {\em on-shell} field configuration. 
For convenience, we introduce the differential operator $P=\Box_\Mb+m^2+\xi R$, 
so that \eqref{eqScalar} reduces to $P\phi=0$. We collect all on-shell field configurations in a vector space:
\begin{equation*}
\Sol=\{\phi\in\func(\Mc):\,P\phi=0\}\subset\func(\Mc).
\end{equation*}
It is important to mention that the second order linear differential operator $P$ is formally self-adjoint, {\it cf.} Definition \ref{adjoint}, meaning that, for each $\phi,\psi\in\func(\Mc)$ with supports having compact intersection, one has 
\begin{equation}\label{eqSelfAdjScalar}
\int_\Mc P\phi\;\psi\,\dvol_\Mb=\int_\Mc\phi\;P\psi\,\dvol_\Mb.
\end{equation}
This identity follows from a double integration by parts. Furthermore, $P$ is normally hyperbolic. 
This entails that $P$ admits unique retarded and advanced Green operators $E^+$ and $E^-$, 
see \cite{Bar, BGP, Waldmann}. In particular, it is a Green hyperbolic operator as per Definition \ref{E+E-}. 

Since the functionals $F_f$ are sufficiently many to separate points in $\func(\Mc)$, 
this is also the case for $\Sol$, the latter being a subspace of $\func(\Mc)$. 
We already achieved our first requirement to define classical observables. 
In fact, the functionals $F_f$ can detect any information 
about on-shell field configurations. Specifically, two on-shell configurations $\phi,\psi\in\Sol$ coincide 
if and only if the outcome of their evaluation is the same on all functionals, 
namely $F_f(\phi)=F_f(\psi)$ for all $f\in\cc(\Mc)$. 
Yet, it is the case that some of the functionals considered give no information when evaluated on $\Sol$ 
in the sense that their evaluation on an on-shell configuration always vanishes. 
Here is an explicit example.

\begin{example}\label{exaVanScalar}
Consider $f\in\cc(\Mc)$. Clearly $Pf$ is a smooth function with compact support, 
therefore it makes sense to consider the linear functional $F_{Pf}:\func(\Mc)\to\RR$. 
Just reading out \eqref{eqSelfAdjScalar}, one gets $F_{Pf}(\phi)=F_f(P\phi)$ for all $\phi\in\func(\Mc)$. 
In particular, it follows that $F_{Pf}(\phi)=0$ for all $\phi\in\Sol$, thus $F_{Pf}$ vanishes of $\Sol$. 
\end{example}

The example above shows that functionals of the form $F_f$, after the restriction to $\Sol$, are no longer 
faithfully labeled by $f\in\cc(\Mc)$. In fact, there are indeed redundant functions in $\cc(\Mc)$, 
which provide the same functional on $\Sol$, an example being provided by $0$ and $Pf$, for any $f\in\cc(\Mc)$. 
According to our second requirement, to identify a suitable space of classical observables, 
one has to get rid of such redundancies. 
Therefore, one identifies two functions $f$ and $h$ in $\cc(\Mc)$ if $F_f(\phi)=F_h(\phi)$ for all $\phi\in\Sol$, 
thus restoring a faithful labeling for functionals on solutions. This result can be easily achieved as follows. 
First, one introduces the subspace of those smooth functions with compact support 
providing functionals on $\func(\Mc)$ whose restriction to $\Sol$ vanishes: 
\begin{equation}\label{eqVanScalar}
N=\{f\in\cc(\Mc):\,F_f(\phi)=0,\;\forall\phi\in\Sol\}\subset\cc(\Mc).
\end{equation}
Notice that, according to Example \ref{exaVanScalar}, $P(\cc(\Mc))$, the image of $\cc(\Mc)$ via $P$, 
is a subspace of $N$. \footnote{In Remark \ref{remVanEqualImPScalar} below, 
we shall show that, in the case of the real scalar field, $N=P(\cc(\Mc))$. 
More generally, using the same argument, one can prove an analogous result, 
for any field whose dynamics is ruled by a Green hyperbolic operator.} 
Therefore, we can take the quotient of $\cc(\Mc)$ by $N$, resulting in a new vector space: 
\begin{equation}\label{eqClObsScalar}
\ClObs=\cc(M)/N.
\end{equation}

An equivalence class $[f]\in\ClObs$ yields a functional $F_{[f]}:\Sol\to\RR$ 
specified by $F_{[f]}(\phi)=F_f(\phi)$ for any on-shell configuration $\phi\in\Sol$ 
and for any choice of a representative $f$ of the class $[f]$. 
$F_{[f]}$ is well-defined on account of the definition of $N$ and of the fact that it is evaluated on solutions only. 
Furthermore, by construction, these functionals are in one-to-one correspondence with points in $\ClObs$, 
therefore equivalence classes $[f]\in\ClObs$ faithfully label functionals of the type $F_{[f]}:\Sol\to\RR$. 
Since the quotient by $N$ does not affect the property of separating points in $\Sol$, 
we conclude that $\ClObs$ has the properties required to be interpreted as a space of classical observables, 
namely by evaluation it can distinguish different on-shell configurations 
and, moreover, there are no redundancies since different points provide different functionals on $\Sol$. 
This motivates the fact that we shall refer to $[f]\in\ClObs$ as a {\em classical observable} for the real scalar field. 

\begin{rem}
Besides implementing non-redundancy (essentially by definition), the quotient by $N$, 
corresponds to go on-shell at the level of functionals. Contrary to the functionals in \eqref{eqFunctScalar}, 
which where defined not only on the subspace $\Sol$ of on-shell configurations, but also on all of off-shell fields, 
an equivalence class $[f]\in\ClObs$ provides a functional $F_{[f]}$, which is well-defined only on $\Sol$. 
In fact, for $\phi\in\func(\Mc)\setminus\Sol$, $F_f(\phi)$ depends on the choice of the representative $f\in[f]$. 
Thus $\ClObs$ defines functionals on $\Sol\subset\func(\Mc)$ only. 
In this sense, the quotient by $N$ implements the on-shell condition at the level of functionals. 
\end{rem}

\begin{rem}\label{remVanEqualImPScalar}
Before dealing with the problem of endowing $\ClObs$ with a suitable symplectic structure, 
we would like to point out that $N=P(\cc(\Mc))$. 
In fact, take $f\in\cc(\Mc)$ such that $F_f(\phi)=0$ for all $\phi\in\Sol$. 
According to Theorem \ref{main}, the advanced-minus-retarded operator $E:\ctc(\Mc)\to\func(\Mc)$ 
associated to $P$ maps surjectively onto $\Sol$. Therefore, we can rephrase our condition on $f$ 
as $F_f(Eh)=0$ for all $h\in\ctc(\Mc)$. 
Exploiting \eqref{eqFunctScalar} and recalling Lemma \ref{uniqueGlemma}, one reads 
\begin{equation*}
F_f(Eh)=\int_\Mc f\;Eh\,\dvol_\Mb=-\int_\Mc Ef\;h\,\dvol_\Mb. 
\end{equation*}
According to the hypothesis, the integral on the right-hand-side has to vanish for all $h\in\ctc(\Mc)$, hence $Ef=0$. 
In fact, it would be enough to consider $h\in\cc(\Mc)$ to come to this conclusion. 
Recalling the properties of $E$ again, one finds $f^\prime\in\cc(\Mc)$ such that $Pf^\prime=f$, 
thus showing that $f\in N$ implies $f\in P(\cc(\Mc))$. Since the inclusion $P(\cc(\Mc))\subset N$ 
follows from Example \ref{exaVanScalar}, we conclude that $N=P(\cc(\Mc))$ as claimed. 
\end{rem}

Up to now, a vector space $\ClObs$ providing functionals on $\Sol$ has been determined 
such that points in $\Sol$ can be distinguished by evaluation on these functionals 
and, moreover, $\ClObs$ does not contain redundancies, 
meaning that the map which assigns to $[f]\in\ClObs$ the functional $F_f:\Sol\to\RR$ is injective. 
Yet, to get the classical field theory of the scalar field, a symplectic structure\footnote{Even though 
the term ``symplectic structure'' is mathematically correct, it would be more appropriate 
to refer to this as a constant Poisson structure. Yet, we shall adhere to the common nomenclature 
of quantum field theory on curved spacetimes.} on $\ClObs$ naturally induced by the field equation is still needed. 
For the following construction we shall need the tools developed in Section \ref{sec:2}. 
Let $E^+$ and $E^-$ denote the retarded and advanced Green operators associated to $P=\Box_\Mb+m^2+\xi R$ 
and consider the corresponding advanced-minus-retarded operator $E=E^--E^+$. 
On account of Remark \ref{remVanEqualImPScalar}, we have that $\ClObs=\cc(\Mc)/P(\cc(\Mc))$. 
Therefore, applying Proposition \ref{prpSymplStructure}, we obtain a symplectic structure on $\ClObs$: 
\begin{align}\label{eqSymplScalar}
\tau:\ClObs\times\ClObs\to\RR, && ([f],[h])\mapsto F_f(Eh)=\int_\Mb f\;Eh\,\dvol_\Mb,
\end{align}
where $f$ and $h$ are arbitrary representatives of the equivalence classes $[f]$ and respectively $[h]$ in $\ClObs$. 
The pair $(\ClObs,\tau)$ is the symplectic space of observables 
describing the classical theory of the real scalar field on the globally hyperbolic spacetime $\Mb$ and it is the starting point for the quantization scheme that we shall discuss in the next section. As a preliminary step we discuss some relevant properties. 

\begin{theorem}\label{thmClPropScalar}
Consider a globally hyperbolic spacetime $\Mb=(\Mc,g,\ogth,\tgth)$ 
and let $(\ClObs,\tau)$ be the symplectic space of classical observables defined above for the real scalar field. 
The following properties hold: 
\begin{description}
\item[{\bf Causality}] The symplectic structure vanishes on pairs of observables localized in causally disjoint regions. 
More precisely, let $f,h\in\cc(\Mc)$ be such that $\supp f\cap J_\Mb(\supp h)=\emptyset$. Then $\tau([f],[h])=0$. 
\item[{\bf Time-slice axiom}] Let $\Oc\subset\Mc$ be a globally hyperbolic open neighborhood 
of a spacelike Cauchy surface $\Sigma$ for $\Mb$, namely $\Oc$ is an open neighborhood of $\Sigma$ in $\Mc$ 
containing all causal curves for $\Mb$ whose endpoints lie in $\Oc$. 
In particular, the restriction of $\Mb$ to $\Oc$ provides 
a globally hyperbolic spacetime $\Ob=(\Oc,g\vert_\Oc,\ogth\vert_\Oc,\tgth\vert_\Oc)$. 
Denote with $(\ClObs_\Mb,\tau_\Mb)$ and with $(\ClObs_\Ob,\tau_\Ob)$ the symplectic spaces of observables 
for the real scalar field respectively over $\Mb$ and over $\Ob$. 
Then the map $L:\ClObs_\Ob\to\ClObs_\Mb$ defined by $L[f]=[f]$ for all $f\in\cc(\Oc)$ 
is an isomorphism of symplectic spaces.\footnote{The function in the right-hand-side of the equation 
which defines $L$ is the extension by zero to the whole spacetime of the function appearing in the left-hand-side.}  
\end{description}
\end{theorem}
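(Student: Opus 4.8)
The plan is to prove the two properties separately, in each case reducing the statement to the structural results established earlier, namely the support properties of the Green operators in Definition \ref{E+E-} and the isomorphisms of Theorem \ref{main} together with Proposition \ref{prpSCSol}.

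For \textbf{Causality}, the goal is to show that $\tau([f],[h])=\int_\Mc f\,(Eh)\,\dvol_\Mb$ vanishes whenever $\supp f\cap J_\Mb(\supp h)=\emptyset$. First I would write $E=E^--E^+$ and recall from Definition \ref{E+E-} that $\supp(E^\pm h)\subset J^\pm_\Mb(\supp h)$, so that $\supp(Eh)\subset J_\Mb(\supp h)=J^+_\Mb(\supp h)\cup J^-_\Mb(\supp h)$. The hypothesis $\supp f\cap J_\Mb(\supp h)=\emptyset$ then forces $\supp f\cap\supp(Eh)=\emptyset$, so the integrand $f\,(Eh)$ vanishes identically and hence $\tau([f],[h])=0$. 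This is the short part; the only care needed is to observe that the pairing is well-defined because $f$ has compact support while $Eh$ is smooth, so the integral makes sense regardless.

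For the \textbf{Time-slice axiom}, the plan is to establish first that $L$ is well-defined and linear, then that it is a symplectomorphism by checking surjectivity, injectivity, and compatibility with the symplectic forms. Well-definedness amounts to checking that extension by zero sends $P_\Ob(\cc(\Oc))$ into $P_\Mb(\cc(\Mc))$, which holds because $P$ is a local operator and $P_\Mb$ restricts to $P_\Ob$ on $\Oc$; compatibility with the symplectic forms follows since the Green operators of $P_\Ob$ and $P_\Mb$ agree on sections supported in $\Oc$ by the uniqueness of Green operators (Lemma \ref{uniqueGlemma} and the surrounding uniqueness discussion) together with the fact that $\Oc$, being a globally hyperbolic open neighborhood containing all causal curves with endpoints in $\Oc$, is causally convex, so that $J^\pm_\Ob$ and $J^\pm_\Mb$ coincide within $\Oc$. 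Injectivity is immediate once surjectivity and the form-compatibility are in hand, or can be read off from the non-degeneracy of $\tau_\Ob$.

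The hard part will be surjectivity of $L$, that is, showing every class $[f]\in\ClObs_\Mb$ with $f\in\cc(\Mc)$ has a representative supported in $\Oc$. The standard trick I would use exploits Theorem \ref{main}: passing to solutions, $Ef\in\Sol$, and using the splitting $\chi_\pm$ associated to two Cauchy surfaces $\Sigma_\pm$ lying to the future and past of $\Sigma$ inside $\Oc$, one writes $f - P_\Mb(\chi_- E f + \text{correction})$ in a way that relocates the support into the sandwich region between $\Sigma_-$ and $\Sigma_+$, which lies in $\Oc$. Concretely, choosing a partition of unity $\chi_++\chi_-=1$ with $\chi_\pm=1$ to the far future/past of the slab, the section $\tilde f = f - P_\Mb(\chi_+\,Ef)\cdot(\pm1)$ — more precisely $\tilde f$ built from $P_\Mb$ applied to a cutoff of $Ef$ — has support in the compact causal sandwich of $\Sigma$, which can be arranged to sit inside $\Oc$, and satisfies $[\tilde f]=[f]$ since the two differ by an element of $P_\Mb(\cc(\Mc))=N$. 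Verifying that the relocated support genuinely lands in $\Oc$ and that $\tilde f$ remains compactly supported is the delicate bookkeeping step, and it is precisely here that the causal convexity of $\Oc$ is indispensable.
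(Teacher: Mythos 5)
Your proposal is correct and follows essentially the same route as the paper: the causality argument via the support property $\supp(Eh)\subset J_\Mb(\supp h)$ is identical, and the time-slice proof uses the same scheme (well-definedness from locality of $P$, injectivity from non-degeneracy, and surjectivity via the partition of unity $\{\chi^+,\chi^-\}$ subordinate to Cauchy surfaces $\Sigma_\pm\subset\Oc$, setting $f'=P(\chi^-Ef)=-P(\chi^+Ef)$ and checking $f'-f=P(-\chi^-E^+f-\chi^+E^-f)\in P(\cc(\Mc))=N$). The ``delicate bookkeeping'' you flag is resolved in the paper exactly as you anticipate: $\supp\chi^\mp\cap\supp(E^\pm f)$ is closed in $J_\Mb^\mp(\Sigma_\pm)\cap J_\Mb^\pm(\supp f)$, which is compact.
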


\begin{proof}
Let us start from the causality property and take $f,h\in\cc(\Mc)$ such that their supports are causally disjoint. 
Recalling the definition of $\tau$ given in \eqref{eqSymplScalar}, one has $\tau([f],[h])=F_f(Eh)$. 
Taking into account the support properties of the advanced-minus-retarded operator $E$, 
one deduces that $\supp(Eh)$ is included in $J_\Mb(\supp h)$, 
which does not intersect the support of $f$ per assumption. 
Since $F_f(Eh)$ is the integral of the pointwise product of $f$ with $Eh$, see \eqref{eqFunctScalar}, 
$\tau([f],[h])=F_f(Eh)=0$ as claimed. 

For the time-slice axiom, consider a globally hyperbolic open neighborhood $\Oc\subset\Mc$ 
of a spacelike Cauchy surface 
for $\Mb$ and consider $\Ob=(\Oc,g\vert_\Oc,\ogth\vert_\Oc,\tgth\vert_\Oc)$, which is a globally hyperbolic spacetime. 
The same construction applied to $\Mb$ and to $\Ob$ provides 
the symplectic spaces $(\ClObs_\Mb,\tau_\Mb)$ and respectively $(\ClObs_\Ob,\tau_\Ob)$. The function
$f\in\cc(\Oc)$ can be extended by zero to the whole $\Mc$ and we denote it still by $f$ with a slight abuse of notation; 
moreover, for each $h\in\cc(\Oc)$, the extension of $Ph=\Box_\Ob h+m^2h$ 
is of the form $Ph=\Box_\Mb h+m^2h$, 
where now $h\in\cc(\Mc)$ denotes the extension of the original $h\in\cc(\Oc)$. 
These observations entail that the map $L:\ClObs_\Ob\to\ClObs_\Mb$ specified by $L[f]=[f]$ for all $f\in\cc(\Oc)$ 
is well-defined. Note that $L$ is linear and that it preserves the symplectic form. 
In fact, given $[f],[h]\in\ClObs_\Ob$, one has 
\begin{equation*}
\tau_\Mb(L[f],L[h]) =\int_\Mc f\;Eh\,\dvol_\Mb=\int_\Oc f\;Eh\,\dvol_\Ob=\tau_\Ob([f],[h]),
\end{equation*}
where the restriction from $\Mc$ to $\Oc$ in the domain of integration is motivated by the fact that, 
per construction, $f=0$ outside $\Oc$. Being a symplectic map, $L$ is automatically injective.
In fact, given $[f]\in\ClObs_\Ob$ such that $L[f]=0$, one has $\tau_\Ob([f],[h])=\tau_\Mb(L[f],L[h])=0$ 
for all $[h]\in\ClObs_\Ob$ and the non-degeneracy of $\tau_\Ob$ entails that $[f]=0$. 
It remains only to check that $L$ is surjective. To this end, starting from any $f\in\cc(\Mc)$, 
we look for $f^\prime\in\cc(\Mc)$ with support inside $\Oc$ such that $[f^\prime]=[f]$ in $\ClObs_\Mb$. 
Recalling that $\Oc$ is an open neighborhood of the spacelike Cauchy surface $\Sigma$ 
and exploiting the usual space-time decomposition of $\Mb$, see Theorem \ref{BS}, 
one finds two spacelike Cauchy surfaces $\Sigma_+,\Sigma_-$ for $\Mb$ included in $\Oc$ 
lying respectively in the future and in the past of $\Sigma$. 
Let $\{\chi^+,\chi^-\}$ be a partition of unity subordinate to the open cover 
$\{I_\Mb^+(\Sigma_-),I_\Mb^-(\Sigma_+)\}$ of $\Mc$. By construction the intersection of the supports 
of $\chi^+$ and of $\chi^-$ is a timelike compact region both of $\Oc$ and of $\Mc$. 
Since $PEf=0$, $\chi^++\chi^-=1$ on $\Mc$ and recalling the support properties of $E$, 
it follows that $f^\prime=P(\chi^-Ef)=-P(\chi^+Ef)$ is a smooth function with compact support inside $\Oc$. 
Furthermore, recalling also the identity $PE^-f=f$, one finds 
\begin{equation*}
\begin{aligned}
f^\prime-f & =P(\chi^-E^-f)-P(\chi^-E^+f)-P(\chi^+E^-f)-P(\chi^-E^-f)\\
& =P(-\chi^-E^+f-\chi^+E^-f).
\end{aligned}
\end{equation*}
The support properties of both the retarded and advanced Green operators $E^+,E^-$ entail that 
$-\chi^-E^+f-\chi^+E^-f$ is a smooth function with compact support on $\Mc$. 
In fact $\supp\chi^\mp\cap\supp(E^\pm f)$ is a closed subset 
of $J_\Mb^\mp(\Sigma_\pm)\cap J_\Mb^\pm(\supp f)$, which is compact. 
This shows that $f^\prime-f\in P(\cc(\Mc))\subset N$, see also Example \ref{exaVanScalar}. 
Therefore we found $[f^\prime\vert_\Oc]\in\ClObs_\Ob$ such that $L[f^\prime\vert_\Oc]=[f]$ 
showing that, besides being injective, the symplectic map $L$ is also surjective 
and hence an isomorphism of symplectic spaces. 
\end{proof}

\begin{rem}
We comment briefly on the apparently different approach, which is often presented in the literature. 
In fact, in place of the pair $(\ClObs,\tau)$, it is quite common 
to consider $\Solsc$, the space of solutions with spacelike compact support, 
endowed with the following symplectic structure: 
\begin{align}\label{eqSymplSolScalar}
\sigma:\Solsc\times\Solsc\to\RR, 
&& (\phi,\psi)\mapsto\int_\Sigma(\phi\nabla_\nb\psi-\psi\nabla_\nb\phi)\,d\Sigma, 
\end{align}
where $\Sigma$ is a spacelike Cauchy surface for the globally hyperbolic spacetime $\Mb=(\Mc,g,\ogth,\tgth)$, 
$\nb$ is the future-pointing unit normal vector field on $\Sigma$, 
and $d\Sigma$ is the induced volume form on $\Sigma$.\footnote{The volume form $d\Sigma$ on $\Sigma$ is defined out of the structure induced on $\Sigma$ itself as a submanifold of the globally hyperbolic spacetime $\Mb$. More explicitly, on $\Sigma$ we take the Riemannian metric $g\vert_\Sigma$ and the orientation specified by the orientation and time-orientation of $\Mb$. Then $d\Sigma$ is the natural volume form defined out of these data.} Notice that the integrand in \eqref{eqSymplSolScalar} is implicitly meant to be restricted to $\Sigma$. Exploiting the fact that only solutions of the field equation are considered, 
one can prove that $\sigma$ does not depend on the choice of the spacelike Cauchy surface $\Sigma$. 
The restriction to the subspace of solutions with spacelike compact support guarantees that the argument of the integral in \eqref{eqSymplSolScalar} is an integrable function. 
We outline below an isomorphism of symplectic spaces between $(\ClObs,\tau)$ and $(\Solsc,\sigma)$: 
\begin{align}
I:\ClObs\to\Solsc, && [f]\mapsto Ef,
\end{align}
where $f\in\cc(\Mc)$ is any representative of $[f]\in\ClObs$ and $E$ denotes the advanced-minus-retarded operator 
associated to the differential operator $P=\Box_\Mb+m^2+\xi R$, which rules the dynamics of the real scalar field. 
The map $I$ is a by-product of Theorem \ref{main} as soon as we remind that, in \eqref{eqClObsScalar}, $N=P(\cc(\Mc))$, 
as shown in Remark \ref{remVanEqualImPScalar}. 
It remains only to check that $\sigma(Ef,Eh)=\tau([f],[h])$ for all $f,h\in\cc(\Mc)$. 
Recalling that $\phi=Ef$ and $\psi=Eh$ are both solutions of the field equation, namely $P\phi=0$ and $P\psi=0$, 
by means of a double integration by parts, one gets the following: 
\begin{equation}\label{eqSymplSolProofScalar}
\begin{aligned}
\int_\Mc f\;Eh\,\dvol_\Mb & 
=\int_{J_\Mb^+(\Sigma)}f\,\psi\,\dvol_\Mb+\int_{J_\Mb^-(\Sigma)}f\,\psi\,\dvol_\Mb\\ 
& =\int_{J_\Mb^+(\Sigma)}(PE^-f)\psi\,\dvol_\Mb+\int_{J_\Mb^-(\Sigma)}(PE^+f)\psi\,\dvol_\Mb\\ 
& =-\int_\Sigma(\nabla_\nb(E^-f))\psi\,d\Sigma+\int_\Sigma(E^-f)\nabla_\nb\psi\,d\Sigma\\ 
& \quad+\int_\Sigma(\nabla_\nb(E^+f))\psi\,d\Sigma-\int_\Sigma(E^+f)\nabla_\nb\psi\,d\Sigma\\ 
& =\int_\Sigma(\phi\nabla_\nb\psi-\psi\nabla_\nb\phi)\,d\Sigma. 
\end{aligned}
\end{equation}
In the first step, we decomposed the integral by splitting the domain of integration into two subsets 
whose intersection has zero measure. 
The second step consisted of exploiting the properties 
of the retarded and advanced Green operators $E^+$ and $E^-$ for $P$. 
Using $E^\mp$ inside the integral over $J_\Mb^\pm(\Sigma)$ allows us to integrate by parts twice. 
For each integral, this operation produces two boundary terms and an integral 
which vanishes since the integrand contains $P\psi=0$. 
Adding together the four boundary terms, one concludes that $\sigma(Ef,Eh)=\tau([f],[h])$ as expected. 
\end{rem}

\subsubsection{Quantum field theory}\label{subsubQuantumScalar}
The next step consists of constructing a quantum field theory for the real scalar field 
out of the classical one, whose content is encoded in the symplectic space $(\ClObs,\tau)$. 
This result is obtained by means of a construction that can be traced back to \cite{Bor62, Haag:1963dh, Uhl62}, while the generalization to curved backgrounds has been discussed from an axiomatic point of view first in \cite{Dimock}. The so-called algebraic approach can be seen as a two-step quantization scheme: In the first one identifies a suitable unital $*$-algebra encoding the structural relations between the observables, such as causality and locality, while, in the second, one selects a state, that is a positive, normalized, linear functional on the algebra which allows us to recover the standard probabilistic interpretation of quantum theories via the GNS theorem. We will focus only on the first step for three different models of free fields, while, for the second we refer to \cite{IV}. We consider the unital $\ast$-algebra $\Ac$ generated over $\CC$ 
by the symbols $\II$ and $\Phi([f])$ for all $[f]\in\ClObs$  
and satisfying the following relations for all $[f],[g]\in\ClObs$ and for all $a,b\in\RR$: 
\begin{align}
\Phi(a[f]+b[g]) & =a\Phi([f])+b\Phi([g]),\label{eqLinearityScalar}\\
\Phi([f])^\ast & =\Phi([f]),\label{eqInvolutionScalar}\\
\Phi([f])\cdot\Phi([h])-\Phi([h])\cdot\Phi([f]) & =i\tau([f],[h])\II.\label{eqCCRScalar}
\end{align}
More concretely, one can start introducing an algebra $A$ consisting of the vector space 
$\bigoplus_{k\in\NN_0}\ClObs_\CC^{\otimes k}$ obtained as the direct sum of all the tensor powers 
of the complexification $\ClObs_\CC$ of the vector space $\ClObs$, 
where we have set $\ClObs_\CC^{\otimes0}=\CC$. 
Therefore, elements of $A$ can be seen as sequences $\{v_k\in\ClObs_\CC^{\otimes k}\}_{k\in\NN_0}$ 
with only finitely many non-zero terms. Each $v_k$ in the sequence is a finite linear combination 
with $\CC$-coefficients of terms of the form $[f_1]\otimes\cdots\otimes[f_k]$ for $[f_1],\dots,[f_k]\in\ClObs$. 
$A$ is endowed with the product $\cdot:A\times A\to A$ specified by 
\begin{align}
\{u_k\}\cdot\{v_k\}=\{w_k\}, && w_k=\sum_{i+j=k}u_i\otimes v_j.
\end{align}
So far, $A$ is an algebra whose generators satisfy \eqref{eqLinearityScalar}. 
We specify an involution $\ast:A\to A$ by setting 
\begin{equation*}
\{\underbrace{0,\dots,0}_{k\mbox{ times}},[f_1]\otimes[f_2]\otimes\cdots\otimes[f_k],0,\dots\}^\ast
=\{\underbrace{0,\dots,0}_{k\mbox{ times}},[f_k]\otimes[f_{k-1}]\otimes\cdots\otimes[f_1],0,\dots\}, 
\end{equation*}
for all $[f_1],\dots,[f_k]\in\ClObs$, and extending it by antilinearity to the whole of $A$. 
Therefore, $A$ is now a $\ast$-algebra implementing the relation \eqref{eqInvolutionScalar} too. 
It is straightforward to realize that the identity of the $\ast$-algebra $A$ is $\II=\{1,0,\dots\}$, 
hence $A$ is also unital. 
Note that an arbitrary element of $A$ can be obtained as a finite $\CC$-linear combination of $\II$ and 
of finite products of elements of the form $\{0,[f],0,\dots\}\in A$, 
which are in one-to-one correspondence with elements of $\ClObs$. 
To match the notation used in the more abstract setting, let us introduce the map $\Phi:\ClObs\to A$, 
$[f]\mapsto\{0,[f],0,\dots\}$, which embeds $\ClObs$ into $A$. 
The $\ast$-algebra $A$ already ``knows'' of the dynamics of the real scalar field 
since this is already encoded in $\ClObs$, 
however, the canonical commutation relations (CCR) \eqref{eqCCRScalar} are still missing. 
Therefore, using the symplectic structure $\tau$ on $\ClObs$, 
we introduce the two-sided $\ast$-ideal $I$ of $A$ generated by terms of the form 
\begin{equation*}
\Phi([f])\cdot\Phi([h])-\Phi([h])\cdot\Phi([f])-i\tau([f],[h])\II,
\end{equation*}
for all $[f],[h]\in\ClObs$. Taking the quotient of $A$ by $I$, one obtains the unital $\ast$-algebra $\Ac=A/I$ 
implementing the canonical commutation relations for the real scalar field. 
Note that, with a slight abuse of notation, we shall denote with $\Phi([f])$ 
also the equivalence class in $\Ac$ of any generator $\Phi([f])$ of $A$, 
thus completely matching the notation used in the more abstract construction of $\Ac$ 
as the unital $\ast$-algebra generated by $\ClObs$ over $\CC$ 
with the relations \eqref{eqLinearityScalar}, \eqref{eqInvolutionScalar}, \eqref{eqCCRScalar}. 
Note in particular that \eqref{eqCCRScalar} is the smeared version of the usual commutation relations. 
This motivates our interpretation of $\Ac$ as the quantum field theory for the real scalar field on $\Mb$. 

\begin{rem}\label{remQuantumScalar}
Before proceeding with the analysis of the properties of the quantum field theory for the real scalar field, 
we would like to emphasize that, under suitable conditions, our quantization procedure perfectly agrees 
with the standard textbook quantization involving creation and annihilation operators. 
In fact, assuming that $\Mb$ is Minkowski spacetime, 
one can relate directly our algebraic approach to the one more commonly used 
by means of an expansion in Fourier modes of the fundamental quantum fields $\Phi([f])$, 
which generate the algebra $\Ac$. 
In particular, one recovers the usual commutation relations between creation and annihilation operators 
out of the canonical commutation relations specified in \eqref{eqCCRScalar} -- see for example \cite{Wald2}. 
This argument should convince the reader that the approach presented above is a very effective extension 
to arbitrary globally hyperbolic spacetimes of the usual quantization procedure for Minkowski spacetime. 
\end{rem}

The properties of the classical field theory presented in Theorem \ref{thmClPropScalar} 
have counterparts at the quantum level as shown by the following theorem. 

\begin{theorem}\label{thmQuantumScalar}
Consider a globally hyperbolic spacetime $\Mb=(\Mc,g,\ogth,\tgth)$ 
and let $\Ac$ be the unital $\ast$-algebra of observables for the real scalar field introduced above. 
The following properties hold: 
\begin{description}
\item[{\bf Causality}] Elements of the algebra $\Ac$ localized in causally disjoint regions commute. 
More precisely, let $f,h\in\cc(\Mc)$ be such that $\supp f\cap J_\Mb(\supp h)=\emptyset$. 
Then $\Phi([f])\cdot\Phi([h])=\Phi([h])\cdot\Phi([f])$. 
\item[{\bf Time-slice axiom}] Let $\Oc\subset\Mc$ be a globally hyperbolic open neighborhood 
of a spacelike Cauchy surface $\Sigma$ for $\Mb$, namely $\Oc$ is an open neighborhood of $\Sigma$ in $\Mc$ 
containing all causal curves for $\Mb$ whose endpoints lie in $\Oc$. 
In particular, the restriction of $\Mb$ to $\Oc$ provides 
a globally hyperbolic spacetime $\Ob=(\Oc,g\vert_\Oc,\ogth\vert_\Oc,\tgth\vert_\Oc)$. 
Denote with $\Ac_\Mb$ and with $\Ac_\Ob$ the unital $\ast$-algebras of observables 
for the real scalar field respectively over $\Mb$ and over $\Ob$. Then the unit-preserving $\ast$-homomorphism 
$\Phi(L):\Ac_\Ob\to\Ac_\Mb$, $\Phi([f])\mapsto\Phi(L[f])$ is an isomorphism of $\ast$-algebras, 
where $L$ denotes the symplectic isomorphism introduced in Theorem \ref{thmClPropScalar}. 
\end{description}
\end{theorem}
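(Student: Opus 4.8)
The plan is to derive both quantum properties from their classical counterparts established in Theorem \ref{thmClPropScalar}, exploiting the fact that $\Ac$ is built functorially from the symplectic space $(\ClObs,\tau)$ via the CCR prescription. For causality the argument is essentially immediate: given $f,h\in\cc(\Mc)$ with $\supp f\cap J_\Mb(\supp h)=\emptyset$, the defining relation \eqref{eqCCRScalar} gives
\[
\Phi([f])\cdot\Phi([h])-\Phi([h])\cdot\Phi([f])=i\tau([f],[h])\II ,
\]
and by the classical causality statement in Theorem \ref{thmClPropScalar} the right-hand side vanishes because $\tau([f],[h])=0$ for causally disjoint supports. Hence the two generators commute, and since such generators together with $\II$ generate $\Ac$, the claim follows.

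For the time-slice axiom I would lift the symplectic isomorphism $L:\ClObs_\Ob\to\ClObs_\Mb$ of Theorem \ref{thmClPropScalar} to the level of algebras. First I would complexify $L$ to a linear isomorphism $\ClObs_{\Ob,\CC}\to\ClObs_{\Mb,\CC}$ and extend it to the tensor algebra, obtaining an isomorphism $\widehat L:A_\Ob\to A_\Mb$ acting on generators by $[f_1]\otimes\cdots\otimes[f_k]\mapsto L[f_1]\otimes\cdots\otimes L[f_k]$. This map is manifestly unit-preserving, respects the product, and commutes with the involution $\ast$, since the latter merely reverses the order of the tensor factors while $\widehat L$ acts factorwise.

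The decisive point is that $\widehat L$ maps the CCR ideal $I_\Ob$ onto $I_\Mb$. Because $L$ preserves the symplectic form, one has $\tau_\Ob([f],[h])=\tau_\Mb(L[f],L[h])$, so a generator
\[
\Phi([f])\cdot\Phi([h])-\Phi([h])\cdot\Phi([f])-i\tau_\Ob([f],[h])\II
\]
of $I_\Ob$ is carried exactly to the corresponding generator of $I_\Mb$. Consequently $\widehat L$ descends to a homomorphism of the quotients $\Ac_\Ob=A_\Ob/I_\Ob\to\Ac_\Mb=A_\Mb/I_\Mb$, acting on generators as $\Phi([f])\mapsto\Phi(L[f])$; this is precisely the unit-preserving $\ast$-homomorphism $\Phi(L)$ of the statement. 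Since $L^{-1}$ is again a symplectic isomorphism, the analogous construction yields an inverse homomorphism, so $\Phi(L)$ is an isomorphism of $\ast$-algebras.

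The main obstacle is conceptual rather than computational: it is the verification that $\widehat L(I_\Ob)\subset I_\Mb$, together with the symmetric inclusion for $L^{-1}$, so that the assignment on generators truly descends to a well-defined map on the quotients. This is exactly where the symplectic-form-preserving property of $L$ — the content of the classical time-slice result — is indispensable; without it the commutator relations would fail to match and the induced map would not be a $\ast$-homomorphism. Everything else reduces to the routine functoriality of the tensor-algebra construction and the fact that a bijective unital $\ast$-homomorphism is automatically an isomorphism of $\ast$-algebras.
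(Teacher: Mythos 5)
Your proof is correct and follows essentially the same route as the paper: causality is read off from the CCR relation \eqref{eqCCRScalar} combined with the classical vanishing of $\tau$ on causally disjoint observables, and the time-slice axiom is obtained by lifting the symplectic isomorphism $L$ to a unital $\ast$-homomorphism of the algebras and inverting it via the same construction applied to $L^{-1}$. The only difference is that you spell out the well-definedness of $\Phi(L)$ explicitly — extending $L$ to the tensor algebra and checking that the CCR ideal is mapped into the CCR ideal because $L$ preserves $\tau$ — a step the paper's proof leaves implicit in the phrase ``uniquely defines a unital $\ast$-homomorphism''; this added care is welcome but does not change the argument.
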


\begin{proof}
The quantum version of the causality property follows directly from the classical version 
and the canonical commutation relations. In fact, taking $f,h\in\cc(\Mc)$ with causally disjoint supports, 
one has $\tau([f],[h])=0$ due to Theorem \ref{thmClPropScalar}. 
Recalling the canonical commutation relations \eqref{eqCCRScalar}, 
one has $\Phi([f])\cdot\Phi([h])-\Phi([h])\cdot\Phi([f])=i\tau([f],[h])\II=0$ as claimed. 

Also the time-slice axiom follows directly from its classical counterpart. 
In fact, setting $\Phi(L)\Phi([f])=\Phi(L[f])$ for each generator $\Phi([f])$ of the unital $\ast$-algebra $\Ac_\Ob$ 
uniquely defines a unital $\ast$-homomorphism $\Phi(L):\Ac_\Ob\to\Ac_\Mb$. 
Consider the inverse of $L$, which exists since the classical time-slice axiom states that 
$L$ is a symplectic isomorphism, see Theorem \ref{thmClPropScalar}. The same construction 
applied to $L^{-1}$ provides the unital $\ast$-homomorphism $\Phi(L^{-1}):\Ac_\Mb\to\Ac_\Ob$. 
If $\Phi(L^{-1})$ inverts $\Phi(L)$ on all generators $\Phi([f])$ of $\Ac$, 
then $\Phi(L^{-1})$ is the inverse on $\Phi(L)$ and thus $\Phi(L)$ is a $\ast$-isomorphism. 
Therefore, for all generators $\Phi([f])$ of $\Ac$, we have to check 
the identities $\Phi(L)\Phi(L^{-1})\Phi([f])=\Phi([f])$ and $\Phi(L^{-1})\Phi(L)\Phi([f])=\Phi([f])$. 
But these are obvious consequences of the definitions of $\Phi(L)$ and of $\Phi(L^{-1})$. 
Therefore $\Phi(L)$ is a $\ast$-isomorphism. 
\end{proof}

\subsection{The Dirac field}\label{subDirac}
In this section we present the classical and quantum theory of the Dirac field on a globally hyperbolic spacetime. 
In analogy with the scalar case, we shall discuss first the classical model 
and later develop the corresponding quantum field theory implementing canonical anti-commutation relations. 
Note that, unlike the scalar case, to implement anti-commutation relations, 
we shall need a Hermitian structure in place of a symplectic one. 
Contrary to the real scalar field, the geometry of the space where the Dirac field takes its values 
requires much more attention. This will be the first topic of our presentation, providing the framework 
to write down the Dirac equation. Afterwards, we shall devote some time 
to construct a suitable space of classical observables for on-shell configurations of the Dirac field. 
As in the previous case, we look for a space of sections 
that, by means of integration, provide functionals on on-shell configurations. 
Again, we will be guided by the requirement that the functionals obtained must be able 
to detect any on-shell configuration (separability). A quotient will remove all redundancies 
which might be present in the chosen space of sections. 
Separability and non-redundancy motivate our interpretation of this quotient 
as providing a space of classical observables for the Dirac field. 
To complete the classical part, we shall endow our space of observables with a Hermitian structure, 
which will be used to quantize the classical model, eventually leading to an algebra of observables 
implementing the usual anti-commutation relations for the Dirac field on a globally hyperbolic spacetime. 
Some references discussing the quantum Dirac field on globally hyperbolic spacetimes are 
\cite{Dappiaggi:2009xj, Dimock2, FV02, Sanders, Zahn:2012dz}. 

\subsubsection{Kinematics and dynamics}\label{subsubKynDynDirac}
Contrary to the case of the real scalar field, to specify the natural environment for the Dirac field, 
it is not enough to consider a globally hyperbolic spacetime $\Mb=(\Mc,g,\ogth,\tgth)$. 
In fact, to introduce the kinematics of the Dirac field, one needs more data, namely a spin structure on $\Mb$. 

\begin{definition}
Let $\Mb=(\Mc,g,\ogth,\tgth)$ be an $n$-dimensional globally hyperbolic spacetime. 
Denote with $F\Mb$ the principal $\SO_0(1,n-1)$-bundle of oriented and time-oriented frames on $\Mb$, 
where $\SO_0(1,n-1)$ denotes the component connected to the identity of the proper Lorentz group $\SO(1,n-1)$ in $n$ dimensions. 
Furthermore, consider the spin group $\Spin(1,n-1)$, namely the double cover of $\SO(1,n-1)$ 
and denote with $\Lambda:\Spin(1,n-1)\to\SO(1,n-1)$ the covering group homomorphism. 
Let us also indicate the component connected to the identity of $\Spin(1,n-1)$ with $\Spin_0(1,n-1)$. 
A {\bf spin structure} on $\Mb$ consists of a pair $(S\Mb,\pi)$, 
where the {\bf spin bundle} $S\Mb$ is a principal $\Spin_0(1,n-1)$-bundle, 
and the {\bf spin frame projection} $\pi:S\Mb\to F\Mb$ is a bundle map covering the identity on the base 
and intertwining the right group actions of $\Spin_0(1,n-1)$ on $S\Mb$ and of $\SO_0(1,n-1)$ on $F\Mb$, 
namely such that $\pi(p\,S)=\pi(p)\,\Lambda(S)$ for all $p\in S\Mb$ and for all $S\in\Spin_0(1,n-1)$, 
where the group actions are denoted by juxtaposition. 
\end{definition}

Unfortunately, a spin structure does not always exist on globally hyperbolic spacetimes of arbitrary dimension 
and, even if it does, it might be non-unique. In fact, in general, there are topological obstructions both 
to existence and to uniqueness \cite[Section II.2]{Lawson}. Yet, four-dimensional globally hyperbolic spacetimes, 
the most relevant case to physics, always admit a spin structure, even though this need not be unique. 
First, all spin bundles over a four-dimensional globally hyperbolic spacetime are trivial 
on account of \cite[Section 3]{Isham:1978ec}. Second, all orientable three-manifolds are parallelizable, 
see \cite{Par84}. Since any four-dimensional globally hyperbolic spacetime $\Mb$ can be presented 
as the product of a real line (time) and an oriented 3-manifold (spatial Cauchy surface), see Theorem \ref{BS}, 
it follows that it is parallelizable. In particular, there exists a global section $\epsilon$ of the principal bundle $F\Mb$, 
which consists of an ordered quadruple $(\epsilon_\mu)$ of no-where vanishing orthonormal vector fields on $\Mc$ 
whose orientation is chosen in order to agree with the orientation and the time-orientation of $\Mb$. 
In particular, this entails that $F\Mb$ is trivial, a trivialization being specified by the global frame $\epsilon$ itself. 
In fact, $F\Mb\simeq\Mc\times\SO_0(1,3)$ via the principal bundle map 
$(x,\lambda)\in\Mc\times\SO_0(1,3)\mapsto(\epsilon(x),\lambda)\in F\Mb$. 
Since both $S\Mb$ and $F\Mb$ are trivial for all four-dimensional globally hyperbolic spacetimes $\Mb$, 
it follows that the freedom in the choice of the spin structure actually resides only in that of the spin frame 
projection $\pi:S\Mb\to F\Mb$, which, in turn, reduces to choosing a smooth $\SO_0(1,3)$-valued function over $\Mc$. 
In fact, all possible spin projections between the trivial principal bundles $S\Mb$ and $F\Mb$ are of the form 
\begin{equation*}
\begin{aligned}
\pi:S\Mb\simeq\Mc\times\Spin_0(1,3) & \to F\Mb\simeq\Mc\times\SO_0(1,3),\\ 
(x,S) & \mapsto(x,f(x)\,\Lambda(S)), 
\end{aligned}
\end{equation*}
for some $f\in\func(\Mc,\SO_0(1,3))$. 

Once a spin structure $(S\Mb,\pi)$ has been chosen on the four-dimensional globally hyperbolic spacetime $\Mb$, 
at a kinematic level, a Dirac field is defined to be a section of the vector (Dirac) bundle $D\Mb=S\Mb\times_T\CC^4$ 
with typical fiber $\CC^4$ associated to the principal $\Spin_0(1,3)$-bundle $S\Mb$ via the Dirac representation 
$T=D^{\frac{1}{2},0}\oplus D^{0,\frac{1}{2}}$ of $\Spin_0(1,3)$ on $\CC^4$.\footnote{Note that 
the Dirac representation $T$ is usually regarded as a unitary representation of $\SLL(2,\CC)$ on $\CC^4$, 
yet $\Spin(1,3)$ is isomorphic to $\SLL(2,\CC)$ as a Lie group.} 
Since $S\Mb$ is trivial, all its associated bundles are such, 
thus motivating the more direct definition of the spinor and cospinor bundles given below. 

\begin{definition}
Let $\Mb=(\Mc,g,\ogth,\tgth)$ be a spacetime four-dimensional and globally hyperbolic. 
We define the spinor bundle $D\Mb$ as the trivial vector bundle $M\times\CC^4$, 
while the cospinor bundle $D^\ast\Mb$ is its dual $M\times(\CC^4)^\ast$. 
\end{definition}

At this stage, one can talk about spinors and cospinors as sections of $D\Mb$ and respectively of $D^\ast\Mb$. 
In fact, both bundles being trivial, spinors and cospinors are just smooth functions on $\Mc$ 
taking values in either $\CC^4$ or $(\CC^4)^\ast$. 
Yet, to construct physical quantities out of spinors, such as scalars or currents, 
and to write down the Dirac equation, one still needs $\gamma$-matrices. 

\begin{definition}
Consider the four-dimensional Minkowski space $\MM^4=(\RR^4,\eta)$. 
The Dirac algebra $\Dc$ is the unital algebra 
generated over $\RR$ by an orthonormal basis $\{g_\mu\}_{\mu=0,\dots,3}$ of $\MM^4$ 
and satisfying the relation $g_\mu\,g_\nu+g_\nu\,g_\mu=2\eta_{\mu\nu}\II$. 
\end{definition}

A choice of the $\gamma$-matrices amounts to fixing an irreducible complex representation 
of the Dirac algebra $\Dc$ on the algebra $\Matr(4,\CC)$ of four-by-four complex matrices. 
In fact, any choice of $\gamma_0,\dots,\gamma_3\in\Matr(4,\CC)$ 
satisfying $\gamma_\mu\gamma_\nu+\gamma_\nu\gamma_\mu=2\eta_{\mu\nu}1_4$ 
for all $\mu,\nu=0,1,2,3$ induces an irreducible representation $\rho:\Dc\to\Matr(4,\CC)$ 
defined by $\rho(g_\mu)=\gamma_\mu$. Here $1_4$ denotes the four-by-four identity matrix. 
Note that different choices of the $\gamma$-matrices induce 
equivalent representations \cite{Pau36}, hence the same physical description. 
Yet, to be concrete, we shall consider a specific representation, namely the chiral one. 
Therefore we consider the following family of $\gamma$-matrices: 
\begin{align}
\gamma_0=\begin{pmatrix}
0_2 & 1_2\\
1_2 & 0_2
\end{pmatrix}, &&
\gamma_i=\begin{pmatrix}
0_2 & \sigma_i\\
\sigma_i & 0_2
\end{pmatrix}, i=1,2,3.
\end{align}
where $0_2$, $1_2$ and $\{\sigma_i\}_{i=1,2,3}$ respectively denote 
the zero matrix, the identity matrix and the Pauli matrices in $\Matr(2,\CC)$. 
As one can directly check, the $\gamma$-matrices of our choice satisfy the following relations: 
\begin{equation}\label{eqGammaProperties}
\begin{aligned}
& \gamma_\mu\gamma_\nu+\gamma_\nu\gamma_\mu=2\eta_{\mu\nu}1_4,\;\;\mu,\nu=0,\dots,3,\\ 
& \begin{aligned}
& \gamma_0^\dag=\gamma_0, && \gamma_i^\dag=-\gamma_i,\;\;i=1,2,3,\\ 
& 
\ol{\gamma_\mu}=-\gamma_2\gamma_\mu\gamma_2^{-1}
&& \gamma_0\rho(n)>0, 
\end{aligned}
\end{aligned}
\end{equation}
$n$ is any future pointing timelike vector in $\MM^4$, 
$\ol{(\cdot)}$ denotes the complex conjugation of each entry, $(\cdot)^T$ is the transpose of a matrix 
and $(\cdot)^\dag=\overline{(\cdot)^T}$. Since we defined the spinor bundle $D\Mb$ 
as a trivial bundle over $\Mc$ with fiber $\CC^4$, 
we can easily interpret the $\gamma$-matrices as endomorphisms of this bundle:
\begin{align}
\gamma_\mu:D\Mb & \to D\Mb, & (x,\sigma) & \mapsto(x,\gamma_\mu\sigma).
\end{align}
Note that the action of the $\gamma$-matrices on cospinors is obtained 
by composing $\gamma_\mu:D\Mb\to D\Mb$ on the right. 
In fact, one can read the cospinors bundle $D^\ast\Mb$ as a bundle 
whose fibers are $\CC$-linear functionals on the corresponding fiber of $D\Mb$. 
Therefore it is natural to express the action of $\gamma_\mu$ on $D^\ast\Mb$ as 
$(x,\omega)\in D^\ast\Mb\mapsto (x,\omega)\circ\gamma_\mu=(x,\omega\circ\gamma_\mu)\in D^\ast\Mb$. 
For simplicity, in the following the composition will be left understood. 
Let us also mention that, the $\gamma$-matrices being invertible, see \eqref{eqGammaProperties}, 
the induced vector bundle maps are actually isomorphisms. 
In particular, by means of the $\gamma$-matrices, one can introduce complex anti-linear vector bundle isomorphisms 
covering the identity which implement adjunction and charge conjugation: 
\begin{align}
A:D\Mb & \to D^\ast\Mb, & (x,\sigma) & \mapsto(x,\sigma^\dag\gamma_0),\label{eqAdjSDirac}\\
\Cs:D\Mb & \to D\Mb, & (x,\sigma) & \mapsto(x,\ol{\gamma_2\sigma}),\label{eqChConjSDirac}\\
\Cc:D^\ast\Mb & \to D^\ast\Mb, & (x,\omega) & \mapsto(x,\ol{\omega}\gamma_2),\label{eqChConjCDirac}
\end{align}
From \eqref{eqGammaProperties} one can show that $A$ intertwines $\Cs$ and $\Cc$ up to a minus sign, 
namely $A\circ\Cs=-\Cc\circ A$. 

Furthermore, let us fix an oriented, orthochronous, orthonormal co-frame $e=(e^\mu)_{\mu=0,\dots,3}$ on $\Mb$ 
once and for all. The $e^\mu$'s are no-where vanishing one-forms on $\Mc$ 
which allow to completely reconstruct the structure of the globally hyperbolic spacetime $\Mb=(\Mc,g,\ogth,\tgth)$: 
$g=\eta_{\mu\nu}e^\mu\otimes e^\nu$, $\ogth=[e^0\wedge\cdots\wedge e^{3}]$ and $\tgth=[e^0]$, 
where the square brackets are used to indicate the (time-)orientation induced by the enclosed form. 
Fixing $e$ is completely equivalent to the choice of a frame $\epsilon=(\epsilon_\mu)$, 
namely a section of the frame bundle $F\Mb$. In fact, $\epsilon$ can be obtained from $e$ 
setting $\epsilon_\mu=\eta_{\mu\nu}(e^\nu)^\flat$ and similarly $e$ can be obtained from $\epsilon$ 
as $e^\mu=\eta^{\mu\nu}\epsilon_\nu^\sharp$, where $(\cdot)^\flat$ and $(\cdot)^\sharp$ 
are the canonical $g$-induced isomorphism which lower and raise the indices of tensors on $\Mc$, 
while $\eta$ denotes the metric of Minkowski space $\MM^4$. 
Using the fixed co-frame $e$ of $\Mb$, one can specify a one-form $\gamma$ over $\Mc$ 
taking values in the bundle of endomorphisms of the spinor bundle $D\Mb$: 
\begin{align}
\gamma:T\Mc\to\Endo(D\Mb), && v\mapsto e^\mu(v)\gamma_\mu.
\end{align}
Note that $\{e^\mu(v)\in\RR\}_{\mu=0,\dots,3}$ are the components of $v\in T_xM$ 
with respect to the frame $\epsilon$ obtained raising the indices of the fixed co-frame $e$, 
namely $v=e^\mu(v)\epsilon_\mu$. 

To write down the Dirac equation, the last necessary ingredient is a suitable covariant derivative 
on the spinor and cospinor bundles. Abstractly, one could start from the Levi-Civita connection, 
which is a principal bundle connection on the frame bundle $F\Mb$. 
Exploiting the spin structure $(S\Mb,\pi$), one can pull-back the Levi-Civita connection form along $\pi$ 
and then lift it along the double cover $\Lambda:\Spin_0(1,3)\to\SO_0(1,3)$ 
to obtain a $1$-form on $S\Mb$ taking values in $\mathfrak{so}(1,3)$ the Lie algebra of both $\SO_0(1,3)$ and $\Spin_0(1,3)$. 
This procedure actually provides a principal bundle connection on $S\Mb$. 
Thinking of the spinor bundle as a vector bundle associated to $S\Mb$, 
a covariant derivative is naturally induced from the connection on $S\Mb$. 
This covariant derivative is the one relevant to the Dirac field. 
Yet, motivated by the fact that $S\Mb$ is trivial (and hence so is any associated bundle), 
we preferred to define directly the spinor bundle as a certain trivial vector bundle. 
Following this approach, it seems more appropriate to define the covariant derivative on $D\Mb$ explicitly: 
\begin{align}
\nabla:\sect(T\Mc)\otimes\sect(D\Mb)\to\sect(D\Mb), && (X,\sigma)\mapsto \nabla_X\sigma
=\partial_X\sigma+\frac{1}{4}X^\mu\ChS_{\mu\nu}^\rho\gamma_\rho\gamma^\nu\sigma,
\end{align}
where $\sigma$ is regarded as a smooth $\CC^4$-valued function on $\Mc$, 
$X^\mu=e^\mu(X)$ are the components of $X$ in the fixed frame 
and $\ChS_{\mu\nu}^\rho=e^\rho(\nabla_{\epsilon_\mu}\epsilon_\nu)$ 
are the Christoffel symbols of the Levi-Civita connection with respect to the given frame. 
The covariant derivative is naturally extended to cospinors by imposing the identity 
\begin{equation*}
\partial_X(\omega(\sigma))=(\nabla_X\omega)(\sigma)+\omega(\nabla_X\sigma), 
\end{equation*}
for each vector field $X\in\sect(T\Mc)$, for each spinor field $\sigma\in\sect(D\Mb)$ 
and for each cospinor $\omega\in\sect(D\Mb)$. 
We extend further $\nabla$ to mixed spinor-tensor fields via the Leibniz rule. 
As an example, we show that $\nabla\gamma=0$, the computation being carried out using frame components: 
\begin{equation}\label{eqNablaGamma}
\begin{aligned}
\nabla_{\epsilon_\mu}\gamma & =-\ChS_{\mu\nu}^\rho e^\nu\otimes\gamma_\rho
+\frac{1}{4}\ChS_{\mu\nu}^\sigma e^\rho\otimes[\gamma_\sigma\gamma^\nu,\gamma_\rho]\\
& =-\ChS_{\mu\nu}^\rho e^\nu\otimes\gamma_\rho+\frac{1}{4}\ChS_{\mu\nu}^\sigma e^\rho
\otimes(\gamma_\sigma\{\gamma^\nu,\gamma_\rho\}-\{\gamma_\sigma,\gamma_\rho\}\gamma^\nu)\\
& =-\frac{1}{2}\ChS_{\mu\nu}^\rho e^\nu\otimes\gamma_\rho
-\frac{1}{2}\ChS_{\mu\tau}^\sigma\eta_{\sigma\nu}\eta^{\tau\rho}e^\nu\otimes\gamma_\rho=0. 
\end{aligned}
\end{equation}
To conclude, we exploited the identity 
$\ChS_{\mu\nu}^\sigma\eta_{\rho\sigma}+\ChS_{\mu\rho}^\sigma\eta_{\nu\sigma}=0$, 
which follows from $\nabla g=0$ written in frame components. 
Notice that $[\cdot,\cdot]$ and $\{\cdot,\cdot\}$ are used here 
to denote respectively the commutator and the anti-commutator of matrices. 

Using the covariant derivatives both for spinors and for cospinors, 
together with our choice of the $\gamma$-matrices, 
we can introduce the first order linear differential operators 
$\nss:\sect(D\Mb)\to\sect(D\Mb)$ and $\nsc:\sect(D^\ast\Mb)\to\sect(D^\ast\Mb)$ defined according to 
\begin{align}
\nss\sigma=\Tr_g(\gamma\,\nabla\sigma), && \forall\sigma\in\sect(D\Mb),\label{eqNablaSlashS}\\
\nsc\omega=\Tr_g(\nabla\omega\,\gamma), && \forall\omega\in\sect(D^\ast\Mb),\label{eqNablaSlashC}
\end{align}
where $\Tr_g$ denotes the metric-contraction of the covariant two-tensor $\gamma\,\nabla\sigma$ taking values 
in $D\Mb$ and similarly for $\nabla\omega\,\gamma\in\sect(T^\ast\Mc\otimes T^\ast\Mc\otimes D^\ast\Mb)$. 
With respect to the fixed frame $(\epsilon_\mu)_{\mu=0,\dots,3}$ 
\eqref{eqNablaSlashS} reads $\nss\sigma=\eta^{\mu\nu}\gamma_\mu\nabla_{\epsilon_\nu}$ 
while, using the abstract tensor notation, one has $\nss\sigma=g^{ab}\gamma_a\nabla_b\sigma$. 
Similar considerations apply to $\nsc$. 

We can now write down the Dirac equation both for spinors and for cospinors in the usual form: 
\begin{align}\label{eqDirac}
i\nss\sigma-m\sigma=0, && -i\nsc\omega-m\omega=0.
\end{align}
For convenience, we introduce the differential operators $\Ps=i\nss-m\,\id_{\sect(D\Mb)}$ for spinors 
and $\Pc=-i\nss-m\,\id_{\sect(D^\ast\Mb)}$ for cospinors. 
Exploiting the properties of the $\gamma$-matrices listed in \eqref{eqGammaProperties}, 
and taking into account the action of the adjunction \eqref{eqAdjSDirac} 
and of the charge conjugations \eqref{eqChConjSDirac} and \eqref{eqChConjCDirac} on sections, 
one can easily prove that $A\circ\Ps=\Pc\circ A$, $\Cs\circ\Ps=\Ps\circ\Cs$ and $\Cc\circ\Pc=\Pc\circ\Cc$. 

To investigate the properties of the Dirac equation, we introduce an integral pairing between sections 
of the spinor bundle $D\Mb$ and sections of its dual $D^\ast\Mb$. 
For each pair of sections $\omega\in\sect(D^\ast\Mb)$ and $\sigma\in\sect(D\Mb)$ 
such that $\supp\omega\cap\supp\sigma$ is compact, we define 
\begin{equation*}
\lan\omega,\sigma\ra=\int_\Mc\omega(\sigma)\,\dvol_\Mb.
\end{equation*}
This pairing is per construction linear in both arguments. Since the adjunction $A$ maps spinors to cospinors, 
we can use it to form integral pairings between spinors and between cospinors: 
\begin{align}\label{eqInnerProdDirac}
\ips{\sigma}{\tau}=\lan A\sigma,\tau\ra, && \ipc{\omega}{\zeta}=\lan\zeta,A^{-1}\omega\ra,
\end{align}
where $\sigma,\tau\in\sect(D\Mb)$ are such that the intersection of their supports is compact 
and $\omega,\zeta\in\sect(D^\ast\Mb)$ satisfy the same condition. 
Notice that, due to the anti-linearity of $A$, both pairings defined above are 
linear in the second argument and anti-linear in the first. 
Furthermore, it is easy to check that $\ips{\cdot}{\cdot}$ induces a Hermitian form on $\sectc(D\Mb)$. 
In fact, given $\sigma,\tau\in\sect(D\Mb)$ such that their supports have compact overlap, 
one has $\sigma^\dag(\gamma_0\tau)=\tau^T(\gamma_0^T\overline{\sigma})$, 
hence, using also the identity $\gamma_0^\dag=\gamma_0$, one deduces that 
\begin{equation}\label{eqHermInnProdDirac}
\overline{\ips{\sigma}{\tau}}=\overline{\int_\Mc\sigma^\dag(\gamma_0\tau)\,\dvol_\Mb}
=\int_\Mc\tau^\dag(\gamma_0\sigma)\,\dvol_\Mb=\ips{\tau}{\sigma}.
\end{equation}
Similarly, $\ipc{\cdot}{\cdot}$ induces a Hermitian form on $\sectc(D^\ast\Mb)$. 

Using the properties \eqref{eqGammaProperties}, one realizes that 
$\gamma:\sect(D\Mb)\to\sect(D\Mb)$ is formally self-adjoint with respect to $\ips{\cdot}{\cdot}$. 
Similarly, $\gamma:\sect(D^\ast\Mb)\to\sect(D^\ast\Mb)$ has the same property 
with respect to $\ipc{\cdot}{\cdot}$. 
Furthermore, both $\nss$ and $\nsc$ coincide with their formal adjoints with respect to $\ips{\cdot}{\cdot}$ 
and respectively to $\ipc{\cdot}{\cdot}$ up to the sign and, moreover, still up to the sign, 
they are formal duals of each other with respect to $\lan\cdot,\cdot\ra$. 
Specifically, consider any pair of spinors $\sigma,\tau\in\sect(D\Mb)$, 
any pair of cospinors $\omega,\zeta\in\sect(D^\ast\Mb)$ 
and any pair formed by a spinor $\upsilon\in\sect(D\Mb)$ and a cospinor $\varpi\in\sect(D^\ast\Mb)$. 
Assume that the supports of the sections in each pair have compact overlap. Then the following identities hold: 
\begin{align}\label{eqNablaSlashAntiAdj}
\ips{\nss\sigma}{\tau}=-\ips{\sigma}{\nss\tau}, && \ipc{\nsc\omega}{\zeta}=-\ipc{\omega}{\nsc\zeta}, 
&& \lan\nsc\varpi,\upsilon\ra=-\lan\varpi,\nss\upsilon\ra.
\end{align}
For the sake of clarity, below we prove the first identity. The proof of the others is analogous. 
\begin{equation}\label{eqNablaSlashAntiAdjProof}
\begin{aligned}
\ips{\nss\sigma}{\tau}+\ips{\sigma}{\nss\tau}
& =\int_\Mc\Tr_g\Big(\big(A(\gamma\,\nabla\sigma)\big)(\tau)+(A\sigma)(\gamma\,\nabla\tau)\Big)\,\dvol_\Mb\\
& =\int_\Mc\Tr_g\Big(\big(\nabla(A\sigma)\big)(\gamma\tau)+(A\sigma)\big(\nabla(\gamma\tau)\big)\Big)
\,\dvol_\Mb\\
& =\int_\Mc\Tr_g\Big(\nabla\big((A\sigma)(\gamma\tau)\big)\Big)\,\dvol_\Mb\\
& =\int_\Mc\dd\ast\big((A\sigma)(\gamma\tau)\big)=0,
\end{aligned}
\end{equation}
where we used \eqref{eqGammaProperties}, the Leibniz rule, 
the identity $\nabla\gamma=0$ proved in \eqref{eqNablaGamma} and Stokes' theorem. 
We remind the reader that $\dd$ is the exterior derivative for differential forms over $\Mc$, 
while $\ast$ denotes the Hodge star operator defined out of the metric $g$ 
and out of the orientation $\ogth$ of the globally hyperbolic spacetime $\Mb$. 
From the identities in \eqref{eqNablaSlashAntiAdj} it follows that 
both $\Ps$ and $\Pc$ are formally self-adjoint differential operators. 
Therefore, it is enough to exhibit retarded and advanced Green operators for each of them 
to conclude that those are unique, see Lemma \ref{uniqueGlemma}, 
and that $\Ps$ and $\Pc$ are Green hyperbolic, see Definition \ref{E+E-}. 
Furthermore, Proposition \ref{uniqueG} entails that the retarded/advanced Green operator for $\Pc$ 
is the formal dual of the advanced/retarded Green operator for $\Ps$. 
To construct the Green operators we are interested in, we observe that 
$\nss^2=\nss\circ\nss$ and $\nsc^2=\nsc\circ\nsc$ are both normally hyperbolic operators. 
Consider for example $\nss^2$. For each $\sigma\in\sect(D\Mb)$, one has the following: 
\begin{equation}\label{eqNablaSlashSquared}
\begin{aligned}
\nss^2\sigma
& =\gamma^\mu\nabla_{\epsilon_\mu}(\gamma^\nu\nabla_{\epsilon_\nu}\sigma)
=\gamma^\mu\gamma^\nu\nabla_{\epsilon_\mu}\nabla_{\epsilon_\nu}\sigma\\
& =-\gamma^\nu\gamma^\mu\nabla_{\epsilon_\mu}\nabla_{\epsilon_\nu}\sigma
+2\eta^{\mu\nu}\nabla_{\epsilon_\mu}\nabla_{\epsilon_\nu}\sigma\\
& =-\gamma^\nu\gamma^\mu(\nabla_{\epsilon_\nu}\nabla_{\epsilon_\mu}\sigma
+\nabla_{\epsilon_\mu}\nabla_{\epsilon_\nu}\sigma-\nabla_{\epsilon_\nu}\nabla_{\epsilon_\mu}\sigma)
+2\eta^{\mu\nu}\nabla_{\epsilon_\mu}\nabla_{\epsilon_\nu}\sigma\\
& =-\nss^2\sigma+2\Box_\nabla\sigma+\frac{1}{2}R\sigma.
\end{aligned}
\end{equation}
Notice that the last computation has been performed with respect to the chosen co-frame 
$e=(e^\mu)_{\mu=0,\dots,3}$. $\Box_\nabla:\sect(D\Mb)\to\sect(D\Mb)$ denotes 
the d'Alembert operator constructed out of the connection $\nabla$, while 
$R$ is the scalar curvature. Let us also mention that, for the last equality in the computation above, 
we used the first Bianchi identity and the anti-commutation relations between the $\gamma$-matrices. 
Therefore, one concludes that $\nss^2=\Box_\nabla+R/4$, 
hence it is normally hyperbolic, {\em cf.} Definition \ref{normhyp}. 

\begin{proposition}\label{prpGreenHypDirac}
Let $\Mb=(\Mc,g,\ogth,\tgth)$ be a four-dimensional globally hyperbolic spacetime 
together with a co-frame $e=(e^\mu)_{\mu=0,\dots,3}$. 
The first order linear differential operators $\Ps:\sect(D\Mb)\to\sect(D\Mb)$ and 
$\Pc:\sect(D^\ast\Mb)\to\sect(D^\ast\Mb)$, which rule the dynamics of spinors and respectively of cospinors, 
are formally self-adjoint with respect to $\ips{\cdot}{\cdot}$ and respectively to $\ipc{\cdot}{\cdot}$. 
Furthermore, they are both Green hyperbolic. In particular, their retarded and advanced Green operators are given by 
\begin{align}\label{eqGreenOpDirac}
\Es^\pm=\Ps\Fs^\pm, && \Ec^\pm=\Pc\Fc^\pm,
\end{align}
where $\Fs^+$ and $\Fs^-$ denote the retarded and advanced Green operators 
for the Green hyperbolic operator $\Ps^2=\Ps\Ps:\sect(D\Mb)\to\sect(D\Mb)$, 
while $\Fc^+$ and $\Fc^-$ denote those corresponding 
to the Green hyperbolic operator operator $\Pc^2=\Pc\Pc:\sect(D^\ast\Mb)\to\sect(D^\ast\Mb)$.
\end{proposition}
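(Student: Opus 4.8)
The plan is to treat the three assertions in turn, reducing the first-order operators $\Ps$ and $\Pc$ to their squares, which are, up to a sign, the normally hyperbolic operators already analysed in \eqref{eqNablaSlashSquared}. Formal self-adjointness is read off directly from the identities \eqref{eqNablaSlashAntiAdj}: since $\ips{\cdot}{\cdot}$ is anti-linear in its first entry, the factor $i$ in $\Ps=i\nss-m\,\id$ turns the anti-self-adjointness of $\nss$ into self-adjointness, namely $\ips{i\nss\sigma}{\tau}=-i\,\ips{\nss\sigma}{\tau}=i\,\ips{\sigma}{\nss\tau}=\ips{\sigma}{i\nss\tau}$ for spinors with compactly overlapping supports, while $-m\,\id$ is self-adjoint because $m\in\RR$. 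Hence $\ips{\Ps\sigma}{\tau}=\ips{\sigma}{\Ps\tau}$, and the same computation with the second identity of \eqref{eqNablaSlashAntiAdj} and $\ipc{\cdot}{\cdot}$ yields formal self-adjointness of $\Pc$.

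Next I would show that the squares are Green hyperbolic. Since $m$ is a scalar, $\Ps^2=-\nss^2-2im\nss+m^2\,\id$, so by \eqref{eqNablaSlashSquared} its only second-order contribution is $-\Box_\nabla$; using that the principal symbol of a composite is the product of principal symbols, $\sigma_{\Ps^2}(\zeta)=\bigl(i\,\sigma_{\nss}(\zeta)\bigr)^2=-\sigma_{\nss}(\zeta)^2=-g(\zeta,\zeta)\,\id$, the last step being the Clifford relation. Thus $-\Ps^2$ is normally hyperbolic in the sense of Definition \ref{normhyp}, and since every normally hyperbolic operator is Green hyperbolic, so is $\Ps^2$, scaling by $-1$ merely scaling the Green operators. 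Write $\Fs^\pm$ for its retarded and advanced Green operators, and let $\Fc^\pm$ be produced in the same way from $\nsc^2$ for $\Pc^2$.

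It then remains to verify that $\Es^\pm=\Ps\Fs^\pm$ of \eqref{eqGreenOpDirac} satisfy Definition \ref{E+E-}. The support condition $\supp(\Es^\pm f)\subset J^\pm_\Mb(\supp f)$ is immediate, as $\Fs^\pm$ has it and $\Ps$ cannot enlarge supports, and the right-inverse identity is direct, $\Ps\Es^\pm f=\Ps^2\Fs^\pm f=f$. For the left-inverse identity, I would first observe that $\Ps$ is injective on past, resp. future, compact sections: if $\Ps w=0$ with $w$ past compact then $\Ps^2 w=0$, whence $w=0$ because $\Fs^+$ is a left inverse of $\Ps^2$. Granting this, for past-compact $f$ set $u=\Fs^+(\Ps f)$, so that $\Ps^2 u=\Ps f$, i.e. $\Ps(\Ps u-f)=0$ with $\Ps u-f$ past compact, forcing $\Ps u=f$, that is $\Es^+\Ps f=f$; the advanced case is symmetric. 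This proves $\Ps$ Green hyperbolic with Green operators $\Es^\pm=\Ps\Fs^\pm$; the cospinor operator $\Pc$ is handled verbatim, giving $\Ec^\pm=\Pc\Fc^\pm$ (alternatively, these can be recovered as formal duals of the $\Es^\mp$ via Proposition \ref{uniqueG}), and uniqueness of all four Green operators then follows from formal self-adjointness through Lemma \ref{uniqueGlemma}.

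The one genuinely delicate step is the left-inverse identity, i.e. showing that $\Es^\pm$ is a two-sided and not merely a right inverse of $\Ps$. This hinges on the injectivity of $\Ps$ on past/future compact sections, inherited from $\Ps^2$, together with careful bookkeeping of the support classes so that the implication $\Ps(\Ps u-f)=0\Rightarrow\Ps u-f=0$ is justified. By contrast, the negative sign in the principal symbol of $\Ps^2$ is a cosmetic matter rather than a real difficulty.
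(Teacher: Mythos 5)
Your proof is correct, and it follows the paper's strategy for everything except the one step you yourself flag as delicate: the reduction to $-\Ps^2=-\nss^2-2im\nss+m^2$ being normally hyperbolic, the definition $\Es^\pm=\Ps\Fs^\pm$, the support properties, and the right-inverse identity $\Ps\Es^\pm=\id$ are all handled exactly as in the paper. Where you diverge is in proving the left-inverse identity $\Es^\pm\Ps\sigma=\sigma$ on $\Gamma_{pc/fc}(D\Mb)$. The paper argues by duality: for $\tau\in\sectc(D\Mb)$ it computes $\ips{\Es^\pm\Ps\sigma}{\tau}=\ips{\Es^\pm\Ps\sigma}{\Ps\Es^\mp\tau}=\cdots=\ips{\sigma}{\tau}$ using the formal self-adjointness of $\Ps$ and the already-established right-inverse property, and then invokes non-degeneracy of $\ips{\cdot}{\cdot}$. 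You instead extract injectivity of $\Ps$ on past/future compact sections from the left-inverse property of $\Fs^\pm$ for $\Ps^2$ (if $\Ps w=0$ then $w=\Fs^\pm\Ps^2w=0$) and apply it to $\Ps u-f$ with $u=\Fs^\pm(\Ps f)$; the only point you leave implicit is that $J^\pm_\Mb$ of a past/future compact set is again past/future compact, so that $\Ps u-f$ lies in the right support class — a standard fact on globally hyperbolic spacetimes that is needed anyway for $\Es^\pm$ to be well defined. Your route is more elementary and does not use formal self-adjointness at this stage at all, so it would apply verbatim to any first-order operator whose square is Green hyperbolic; the paper's duality argument is shorter on the page but leans on the Hermitian structure. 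One small correction: uniqueness of the Green operators does not come from Lemma \ref{uniqueGlemma} (which only relates $\Es^\pm$ to $\Es^{*\mp}$); as the paper notes after Proposition \ref{uniqueG}, it follows from the injectivity of $\Ps$ on $\Gamma_{pc/fc}(D\Mb)$ — precisely the fact you established.
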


\begin{proof}
Formal self-adjointness of $\Ps$ follows directly from \eqref{eqNablaSlashAntiAdj}. 
In fact, the minus sign which appears while integrating by parts $\nss$ is reabsorbed 
by the imaginary unit while passing from one argument of $\ips{\cdot}{\cdot}$ to he other 
due to anti-linearity in the first argument of the pairing. 
A similar argument shows that also $\Pc$ is formally self-adjoint with respect to $\ipc{\cdot}{\cdot}$. 

It is enough to exhibit retarded and advanced Green operators 
to conclude that both $\Ps$ and $\Pc$ are Green hyperbolic, {\em cf.} Definition \ref{E+E-}. 
Specifically, in the following we shall prove that the operators introduced in \eqref{eqGreenOpDirac} 
are actually the sought Green operators. We focus on the case of spinors, the other being completely analogous. 
First of all, we prove that the formally self-adjoint operator $\Ps^2$ is Green hyperbolic as claimed. 
In fact, on account of the identity $\nss^2=\Box_\nabla+R/4$, 
which is a consequence of \eqref{eqNablaSlashSquared}, one concludes that $\Ps^2=-\nss^2-2im\nss+m^2$. 
Therefore, according to Definition \ref{normhyp}, $-\Ps^2$ is normally hyperbolic, 
hence it admits retarded and advanced Green operators, 
see \cite{Bar} and \cite[Chapter 3]{BGP}. Reversing the sign, one gets retarded and advanced Green operators 
$\Fs^+$ and $\Fs^-$ for $\Ps^2$, thus showing that $\Ps^2$ is Green hyperbolic. 
To conclude the proof, we show that $\Es^+=\Ps\Fs^+$ and $\Es^-=\Ps\Fs^-$ 
are retarded and advanced Green operators for $\Ps$. 
The support properties of retarded and advanced Green operators are satisfied 
since $\Fs^+$ and $\Fs^-$ are retarded and advanced Green operators 
and, moreover, being a differential operator, $\Ps$ does not enlarge supports. 
Indeed, for each $\sigma \in\ Gamma_{pc/fc}(D \Mb)$, one has $\Ps\Es^\pm\sigma=\Ps^2\Fs^\pm\sigma=\sigma$. 
It remains only to check that $\Es^\pm\Ps\sigma=\sigma$. 
Let us take $\tau\in\sectc(D\Mb)$ and consider $\ips{\Es^\pm\Ps\sigma}{\tau}$: 
\begin{equation*}
\begin{aligned}
\ips{\Es^\pm\Ps\sigma}{\tau}& =\ips{\Es^\pm\Ps\sigma}{\Ps\Es^\mp\tau}
=\ips{\Ps\Es^\pm\Ps\sigma}{\Es^\mp\tau}\\
& =\ips{\Ps\sigma}{\Es^\mp\tau}=\ips{\sigma}{\Ps\Es^\mp\tau}=\ips{\sigma}{\tau}.
\end{aligned}
\end{equation*}
In the last chain of identities we exploited repeatedly the formal self-adjointness of $\Ps$ 
and the identity $\Es^\pm\Ps\upsilon=\upsilon$, which holds true for all $\upsilon \in \Gamma_{pc/fc}(D \Mb)$. 
Since $\ips{\cdot}{\cdot}$ provides a non-degenerate pairing between $\sect(D\Mb)$ and $\sectc(D\Mb)$, 
we deduce that $\Es^\pm\Ps\sigma=\sigma$, thus completing the proof. 
\end{proof}

Indeed, the fact that $\Ps$ and $\Pc$ are formally self-adjoint with respect to $\ips{\cdot}{\cdot}$ 
and to $\ipc{\cdot}{\cdot}$ has a counterpart involving the corresponding retarded and 
advanced Green operators on account of Lemma \ref{uniqueGlemma}. 
A similar argument applies to the fact $\Pc$ is the formal dual of $\Ps$ with respect to $\lan\cdot,\cdot\ra$, 
see \eqref{eqNablaSlashAntiAdj} and Proposition \ref{uniqueG}. Summing up, one has the following identities 
for all $\sigma,\tau\in\sectc(D\Mb)$ and for all $\omega,\zeta\in\sectc(D^\ast\Mb)$: 
\begin{align}
\ips{\Es^\pm\sigma}{\tau}=\ips{\sigma}{\Es^\mp\tau}, 
&& \ipc{\Ec^\pm\omega}{\zeta}=\ipc{\omega}{\Ec^\mp\zeta}, 
&& \lan\Ec^\pm\omega,\sigma\ra=\lan\omega,\Es^\mp\sigma\ra.
\end{align}
Proposition \ref{prpGreenHypDirac} concludes our discussion about the dynamics of the Dirac field. 
In fact, introducing the advanced-minus-retarded operators $\Es=\Es^--\Es^+$ and $\Ec=\Ec^--\Ec^+$ 
corresponding to $\Ps$ and respectively to $\Pc$, one can easily represent all on-shell spinors and cospinors 
over the four-dimensional globally hyperbolic spacetime $\Mb$, see Theorem \ref{main}. 

\subsubsection{Classical observables}
From the previous section we know that the on-shell configurations of the Dirac field are 
either spinors or cospinors, namely sections of either $D\Mb$ or $D^\ast\Mb$, 
satisfying the Dirac equation \eqref{eqDirac}. 
We shall consider now a class of functionals on these field configurations.  
As further properties, we shall require that this class is large enough to separate different on-shell configurations 
and that its elements are represented faithfully by some vector space, to be endowed later 
with the Hermitian structure canonically induced by the Dirac Lagrangian. 
Let us start with $\tau\in\sectc(D\Mb)$ and $\zeta\in\sectc(D^\ast\Mb)$ 
to introduce the functional $S_\tau$ for spinors and the functional $C_\omega$ for cospinors: 
\begin{equation}\label{eqFunctDirac}
\begin{aligned}
S_\tau:\sect(D\Mb) & \to\CC, & \sigma & \mapsto\ips{\tau}{\sigma},\\
C_\zeta:\sect(D^\ast\Mb) & \to\CC, & \omega & \mapsto\ipc{\zeta}{\omega}.
\end{aligned}
\end{equation}
Since both $\ips{\cdot}{\cdot}$ and $\ipc{\cdot}{\cdot}$ induce non-degenerate bilinear pairings 
on $\sectc(D\Mb)\times\sect(D\Mb)$ and respectively on $\Gamma_0(D^\ast\Mb)\times\Gamma(D\Mb)$, 
one can identify the vector spaces of functionals $\{S_\tau:\tau\in\sectc(D\Mb)\}$ 
and $\{C_\zeta:\zeta\in\sectc(D^\ast\Mb)\}$ with $\sectc(D\Mb)$ and respectively with $\sectc(D^\ast\Mb)$. 
These identifications are implemented via the anti-linear maps 
$\tau\in\sectc(D\Mb)\mapsto S_\tau$ and $\zeta\in\sectc(D^\ast\Mb)\mapsto C_\zeta$. 
Let us stress one fact, which follows from non-degeneracy 
of the pairings $\ips{\cdot}{\cdot}$ and $\ipc{\cdot}{\cdot}$. 
The functionals $\{S_\tau:\tau\in\sectc(D\Mb)\}$ on spinors 
and the functionals $\{C_\zeta:\zeta\in\sectc(D^\ast\Mb)\}$ on cospinors are sufficiently many 
to separate different off-shell field configurations, hence on-shell ones in particular. 
Therefore, our separability requirement is already achieved. 

The functionals introduced above do not take into account the dynamics for Dirac fields. 
We can easily overcome this hurdle restricting the domains to on-shell configurations. 
Let us introduce the spaces of on-shell spinors and of on-shell cospinors: 
\begin{align}
\sSol=\{\sigma\in\sect(D\Mb):\,\Ps\sigma=0\}, && \cSol=\{\omega\in\sect(D^\ast\Mb):\,\Pc\omega=0\}. 
\end{align}
Given $\tau\in\sectc(D\Mb)$ and $\zeta\in\sectc(D^\ast\Mb)$, with a slight abuse of notation, 
we denote the restrictions $S_\tau:\sSol\to\CC$ and $C_\zeta:\cSol\to\CC$ of the original functionals 
introduced in \eqref{eqFunctDirac} by the same symbols. 
This restriction causes some redundancies in the spaces $\sectc(D\Mb)$ and $\sectc(D^\ast\Mb)$, 
which do not faithfully represent the functionals after the restriction to on-shell configurations. 
This fact is explicitly shown in the next example. 

\begin{example}\label{exaTrivDirac}
Let us consider $\tau\in\sectc(D\Mb)$. Since $\Ps$ does not enlarge supports, 
$\Ps\tau$ is still a compactly supported section of $D\Mb$. Hence, we can consider the functional $S_{\Ps\tau}$. 
We show that this functional vanishes when restricted to $\sSol$. 
In fact, according to Proposition \ref{prpGreenHypDirac}, one deduces that 
$\Ps$ is formally self-adjoint with respect to $\ips{\cdot}{\cdot}$, which entails that $S_{\Ps\tau}(\sigma)=S_\tau(\Ps\sigma)=0$ for all $\sigma\in\sSol$. 
In full analogy, $C_{\Pc\zeta}$ vanishes on $\cSol$ for all $\zeta\in\sectc(D^\ast\Mb)$ 
since $\Pc$ is formally self-adjoint with respect to $\ipc{\cdot}{\cdot}$. 
Summing up, the elements of $\Ps(\sectc(D\Mb))$ and of $\Pc(\sectc(D^\ast\Mb))$ are redundant 
since they provide only trivial functionals respectively on $\sSol$ and on $\cSol$. 
\end{example}

To implement our second requirement for classical observables, 
namely that the space representing functionals should be free of redundancies 
(or, equivalently, functionals should be represented faithfully by this space), 
we simply take a quotient by the subspace of (co)spinors inducing functionals which vanish on-shell: 
\begin{align}
\sVan & =\{\tau\in\sectc(D\Mb):\,S_\tau(\omega)=0,\;\forall\sigma\in\sSol\}\\
\cVan & =\{\zeta\in\sectc(D^\ast\Mb):\,C_\zeta(\omega)=0,\;\forall\omega\in\cSol\},.
\end{align}
As anticipated, we introduce the quotient spaces 
\begin{align}\label{eqClObsDirac}
\sClObs=\sectc(D\Mb)/\sVan, && \cClObs=\sectc(D^\ast\Mb)/\cVan.
\end{align}
$\sClObs$ and $\cClObs$ are regarded as the spaces of linear classical observables 
respectively for spinors and for cospinors. In fact, these spaces faithfully represent 
the restrictions to on-shell configurations of the functionals defined in \eqref{eqFunctDirac}, 
which are sufficiently many to distinguish between different on-shell configurations. 
For example, consider the case of spinors. It is clear that the equivalence class $[\tau]\in\sClObs$ 
induces a unique functional $S_\tau:\sSol\to\CC$, independent of the choice of $\tau\in[\tau]$. 
Indeed different representatives induce different functionals on $\sect(D\Mb)$ (off-shell), 
but, per definition of $\sVan$, all these functionals have the same restriction to $\sSol$ (on-shell). 
Therefore one has an anti-linear map $[\tau]\in\sClObs\mapsto S_\tau$. 
Again per definition of $\sVan$, this map is injective, thus providing a faithful way to represent by means 
of $\sClObs$ the restrictions to on-shell configurations of the functionals in \eqref{eqFunctDirac}. 

\begin{rem}
Using our knowledge about the dynamics of the Dirac field, {\em cf.} Section \ref{subsubKynDynDirac}, 
we can prove that 
\begin{align}\label{eqVanImPDirac}
\sVan=\Ps(\sectc(D\Mb)), && \cVan=\Pc(\sectc(D^\ast\Mb)),
\end{align}
meaning that all redundant functionals are of the form presented in Example \ref{exaTrivDirac}. 
As always, we focus our attention to the case of spinors only, 
the argument being basically the same in the case of cospinors too. 
On account of Example \ref{exaTrivDirac}, one already has the inclusion $\Ps(D\Mb)\subset\sVan$. 
For the converse inclusion, take $\tau\in\sVan$ and notice that $\ips{\tau}{\Es\sigma}=S_\tau(\Es\sigma)=0$ 
for all $\sigma\in\sectc(D\Mb)$ since $\Es$ is the advanced-minus-retarded operator for $\Ps$. 
$\Ps$ is formally self-adjoint with respect to $\ips{\cdot}{\cdot}$ as shown in Proposition \ref{prpGreenHypDirac}, 
therefore, recalling the properties of retarded and advanced Green operators, 
we have $\ips{\Es\tau}{\sigma}=-\ips{\tau}{\Es\sigma}=0$ for all $\sigma\in\sectc(D\Mb)$, 
hence $\Es\tau=0$ due to the non-degeneracy of $\ips{\cdot}{\cdot}$. 
Recalling \eqref{eqSCExactSeq}, one finds $\upsilon\in\sectc(D\Mb)$ such that $\Ps\upsilon=\tau$, 
thus showing that $\sVan\subset\Ps(\sectc(D\Mb))$. 
\end{rem}

So far, we determined the spaces $\sClObs$ and $\cClObs$ of classical observables for Dirac spinors and cospinors. 
Yet, to formulate the corresponding quantum field theory, one still needs suitable Hermitian structures 
in order to write down the usual anti-commutation relations for Dirac fields. 
This is the purpose of the next proposition. 

\begin{proposition}\label{prpHermFormDirac}
Consider a four-dimensional globally hyperbolic spacetime $\Mb=(\Mc,g,\ogth,\tgth)$ 
and take a co-frame $e=(e^\mu)_{\mu=0,\dots,3}$ on it. 
Let $\Ps$ and $\Pc$ denote the differential operators ruling the dynamics of spinors and respectively of cospinors. 
Introduce the corresponding advanced-minus-retarded operators $\Es$ and $\Ec$. 
Those defined below are non-degenerate Hermitian forms on $\sClObs$ and respectively on $\cClObs$: 
\begin{equation}\label{eqHermFormDirac}
\begin{aligned}
\sHerm:\sClObs\times\sClObs & \to\CC, & ([\sigma],[\tau]) & \mapsto-i\ips{\sigma}{\Es\tau},\\
\cHerm:\cClObs\times\sClObs & \to\CC, & ([\omega],[\zeta]) & \mapsto i\ipc{\omega}{\Ec\zeta},
\end{aligned}
\end{equation}
where the representatives $\sigma\in[\sigma]$, $\tau\in[\tau]$, 
$\zeta\in[\zeta]$ and $\omega\in[\omega]$ are chosen arbitrarily. 
Furthermore, following \eqref{eqAdjSDirac}, the antilinear isomorphism 
\begin{align}
A:\sClObs\to\cClObs, && [\tau]\mapsto[A\tau]
\end{align}
relates $\sHerm$ to $\cHerm$, namely one has $\cHerm(A[\sigma],A[\tau])=\sHerm([\tau],[\sigma])$. 
\end{proposition}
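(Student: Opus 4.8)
The plan is to verify, for $\sHerm$, the three defining properties in turn (independence of representatives, Hermiticity, non-degeneracy) and then the intertwining identity; since $\cHerm$ is built from the formally self-adjoint operator $\Pc$ and the pairing $\ipc{\cdot}{\cdot}$ exactly as $\sHerm$ is built from $\Ps$ and $\ips{\cdot}{\cdot}$, I would only indicate the parallel steps for $\cHerm$. First I would check that $\sHerm([\sigma],[\tau])=-i\ips{\sigma}{\Es\tau}$ is insensitive to the representatives. Replacing $\tau$ by $\tau+\Ps\beta$ with $\beta\in\sectc(D\Mb)$ changes the value by $-i\ips{\sigma}{\Es\Ps\beta}$, which vanishes since $\Es\Ps=\Es^-\Ps-\Es^+\Ps=\id-\id=0$. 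Replacing $\sigma$ by $\sigma+\Ps\alpha$ changes it by $-i\ips{\Ps\alpha}{\Es\tau}=-i\ips{\alpha}{\Ps\Es\tau}$ using formal self-adjointness of $\Ps$ (Proposition \ref{prpGreenHypDirac}), which again vanishes as $\Ps\Es=0$. Because $\sVan=\Ps(\sectc(D\Mb))$, this makes $\sHerm$ well-defined on $\sClObs\times\sClObs$.

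For Hermiticity the key intermediate identity is $\ips{\Es\tau}{\sigma}=-\ips{\tau}{\Es\sigma}$, which I would obtain by writing $\Es=\Es^--\Es^+$ and applying the exchange relations $\ips{\Es^\pm\tau}{\sigma}=\ips{\tau}{\Es^\mp\sigma}$ established right after Proposition \ref{prpGreenHypDirac}. Combining this with the conjugation rule $\overline{\ips{\sigma}{\tau}}=\ips{\tau}{\sigma}$ from \eqref{eqHermInnProdDirac}, I would compute $\overline{\sHerm([\sigma],[\tau])}=i\,\overline{\ips{\sigma}{\Es\tau}}=i\ips{\Es\tau}{\sigma}=-i\ips{\tau}{\Es\sigma}=\sHerm([\tau],[\sigma])$, establishing that $\sHerm$ is Hermitian.

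For non-degeneracy, suppose $\sHerm([\sigma],[\tau])=0$ for all $[\tau]$, that is $\ips{\sigma}{\Es\tau}=0$ for all $\tau\in\sectc(D\Mb)$. The same exchange relations give $\ips{\sigma}{\Es\tau}=-\ips{\Es\sigma}{\tau}$, so the non-degeneracy of the pairing between $\sect(D\Mb)$ and $\sectc(D\Mb)$ forces $\Es\sigma=0$. Since $\sigma$ is compactly supported, the exact sequence \eqref{eqSCExactSeq} (equivalently item $(2)$ of Proposition \ref{prpSCSol}) produces $h\in\sectc(D\Mb)$ with $\Ps h=\sigma$, whence $\sigma\in\sVan$ and $[\sigma]=0$; degeneracy in the remaining slot then follows from Hermiticity. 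I expect this to be the genuine crux: it is the only place where the identification $\sVan=\Ps(\sectc(D\Mb))$ and the exactness of \eqref{eqSCExactSeq} are actually used, everything else being formal manipulation of the Green operators.

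Finally, for the intertwining I would first note that $A:[\tau]\mapsto[A\tau]$ is well-defined, since $\tau=\Ps\alpha$ gives $A\tau=A\Ps\alpha=\Pc A\alpha\in\Pc(\sectc(D^\ast\Mb))=\cVan$ by the relation $A\circ\Ps=\Pc\circ A$ and the fact that the bundle map $A$ preserves compact supports; antilinearity and bijectivity are inherited from $A$. Then, using $\ipc{\omega}{\zeta}=\lan\zeta,A^{-1}\omega\ra$ and $A^{-1}A=\id$, I would unfold
\[
\cHerm(A[\sigma],A[\tau])=i\,\ipc{A\sigma}{\Ec A\tau}=i\,\lan\Ec A\tau,\sigma\ra.
\]
Expanding $\Ec=\Ec^--\Ec^+$ and applying the third exchange relation $\lan\Ec^\pm\omega,\sigma\ra=\lan\omega,\Es^\mp\sigma\ra$ with $\omega=A\tau$ yields $\lan\Ec A\tau,\sigma\ra=-\lan A\tau,\Es\sigma\ra=-\ips{\tau}{\Es\sigma}$, so that $\cHerm(A[\sigma],A[\tau])=-i\ips{\tau}{\Es\sigma}=\sHerm([\tau],[\sigma])$, as claimed. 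The one point that demands sustained care throughout is the consistent bookkeeping of the antilinearity of $A$, of $\ips{\cdot}{\cdot}$ and $\ipc{\cdot}{\cdot}$, and of the induced map $A$ on observables, so that every conjugation and sign lands correctly.
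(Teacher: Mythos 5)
Your proposal is correct and follows essentially the same route as the paper: descend $\ips{\cdot}{\Es\cdot}$ to the quotient using $\sVan=\Ps(\sectc(D\Mb))$ and $\Ps\Es=0=\Es\Ps$, derive Hermiticity from \eqref{eqHermInnProdDirac} together with $\ips{\Es\sigma}{\tau}=-\ips{\sigma}{\Es\tau}$, get non-degeneracy from the non-degenerate pairing plus the exact sequence \eqref{eqSCExactSeq}, and verify the intertwining by unfolding the definitions of $\ips{\cdot}{\cdot}$ and $\ipc{\cdot}{\cdot}$. The only (immaterial) deviations are that you test non-degeneracy in the first slot and transfer it to the second via Hermiticity, and that you use the duality relation $\lan\Ec^\pm\omega,\sigma\ra=\lan\omega,\Es^\mp\sigma\ra$ where the paper invokes $\Ec\circ A=A\circ\Es$.
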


\begin{proof}
We shall discuss explicitly the spinor case. The argument in the case of cospinors is very similar. 
First of all, let us show that $\sHerm$ is a well-defined non-degenerate Hermitian form. 
As a starting point, consider the map 
\begin{equation}\label{eqAntiHermDirac}
\ips{\cdot}{\Es\cdot}:\sectc(D\Mb)\times\sectc(D\Mb)\to\CC.
\end{equation}
Since $\Es$ is linear, this map is sesquilinear as $\ips{\cdot}{\cdot}$ is. 
Furthermore, $\Ps$ is formally self-adjoint with respect to $\ips{\cdot}{\cdot}$ 
and $\Ps\circ\Es=0=\Es\circ\Ps$ on $\sectc(D\Mb)$. 
This entails that $\ips{\cdot}{\Es\cdot}$ vanishes 
whenever one of its arguments is of the form $\Ps\tau$ for any $\tau\in\sectc(D\Mb)$. 
It follows that the form defined in \eqref{eqAntiHermDirac} descends 
to the quotient $\sectc(D^\ast\Mb)/\Ps(\sectc(D^\ast\Mb))$. 
On account of \eqref{eqClObsDirac} and of \eqref{eqVanImPDirac}, 
the space of classical observables $\sClObs$ for spinors is exactly of this form, 
hence $\sHerm$ is a well-defined sesquilinear form on $\sClObs$, 
namely it is anti-linear in the first argument and linear in the second. 

The second part of the proof is devoted to showing that $\sHerm$ is Hermitian. 
This follows from the fact that $\ips{\cdot}{\cdot}$ provides a Hermitian form on $\sectc(D\Mb)$, 
{\em cf.} \eqref{eqHermInnProdDirac}. Specifically, given $\sigma,\tau\in\sectc(D\Mb)$, the following holds: 
\begin{equation}\label{eqHermCheck}
\ol{\sHerm([\sigma],[\tau])}=\ol{-i\ips{\sigma}{\Es\tau}}
=i\ips{\Es\tau}{\sigma}=-i\ips{\tau}{\Es\sigma}=\sHerm([\tau],[\sigma]).
\end{equation}
Notice that in the third equality we exploited the formal self-adjointness of $\Ps$ 
with respect to $\ips{\cdot}{\cdot}$, which entails that 
$\ips{\Es\sigma}{\tau}=-\ips{\sigma}{\Es\tau}$ for all $\sigma,\tau\in\sectc(D\Mb)$. 

We still have to prove that $\sHerm$ is non-degenerate. To this aim, 
consider $[\tau]\in\sClObs$ such that $\sHerm([\sigma],[\tau])=0$ for all $[\sigma]\in\sClObs$. 
Our goal is to show that this condition implies $[\tau]=0$. 
In fact, one deduces that $\ips{\sigma}{\Es\tau}$ has to vanish for all $\sigma\in\sectc(D\Mb)$. 
Therefore, by non-degeneracy of the pairing $\ips{\cdot}{\cdot}$ 
between $\sectc(D\Mb)$ and $\sect(D\Mb)$, one deduces that $\Es\tau=0$, 
hence there exists $\upsilon\in\sectc(D\Mb)$ such that $\Ps\upsilon=\tau$. This proves that $[\tau]=0$. 

The last part of the proof focuses on the relation between $\sHerm$ and $\cHerm$. 
Since $A\circ\Ps=\Pc\circ A$, it follows that $A:\sClObs\to\cClObs$ is well-defined. 
Furthermore, this is an anti-linear isomorphism of vector spaces 
since $A:D\Mb\to D^\ast\Mb$ is an anti-linear vector bundle isomorphism. 
For each $\sigma,\tau\in\sectc(D\Mb)$, one has the following chain of equalities: 
\begin{equation*}
\begin{aligned}
\cHerm(A[\sigma],A[\tau]) & =i\ipc{A\sigma}{\Ec A\tau}=i\ipc{A\sigma}{A\Es\tau}\\
& =i\ips{\Es\tau}{\sigma}=-i\ips{\tau}{\Es\sigma}=\sHerm([\tau],[\sigma]).
\end{aligned}
\end{equation*}
For the second equality, we used the identity $\Ec\circ A=A\circ\Es$ on $\sectc(D\Mb)$, 
which follows from $A\circ\Ps=\Pc\circ A$ on $\sect(D\Mb)$.
\end{proof}

\begin{rem}
The general theory of Green hyperbolic operators provides an isomorphism 
between the space $\sClObs$ of classical observables for spinors 
and the space $\sSolsc$ of on-shell spinors with spacelike compact support, see Proposition \ref{prpSCSol}. 
This isomorphism is realized by the advanced-minus-retarded operator $\Es$ for $\Ps$, 
which is the differential operator ruling the dynamics for Dirac spinors. 
Similarly, $\Ec$, the advanced-minus-retarded operator corresponding to $\Pc$, 
provides an isomorphism between $\cClObs$ and $\cSolsc$: 
\begin{equation*}
\begin{aligned}
I_\mathrm{s}:\sClObs & \to\sSolsc, & [\tau]\to\Es\tau,\\
I_\mathrm{c}:\cClObs & \to\cSolsc, & [\zeta]\to\Ec\zeta.
\end{aligned}
\end{equation*}
$I^\mathrm{s}$ and $I^\mathrm{c}$ become isomorphisms of Hermitian spaces 
as soon as $\sSolsc$ and $\cSolsc$ are endowed with the usual Hermitian structures for Dirac fields 
written in terms of the initial data on a spacelike Cauchy surface $\Sigma$ 
for the four-dimensional globally hyperbolic spacetime $\Mb=(\Mc,g,\ogth,\tgth)$. 
Denoting with $\nb$ the future-pointing unit normal vector field on $\Sigma$ 
and with $d\Sigma$ the volume form naturally induced on $\Sigma$, 
one introduces the following non-degenerate Hermitian forms on $\sSolsc$ and on $\cSolsc$: 
\begin{align}
\sHermSol:\sSolsc\times\sSolsc & \to\CC, & 
\sHermSol(\sigma,\tau) & =\int_\Sigma(A\sigma)(\slashed{\nb}\tau)\,d\Sigma,\\
\cHermSol:\cSolsc\times\cSolsc & \to\CC, & 
\cHermSol(\omega,\zeta) & =\int_\Sigma\zeta(\slashed{\nb}A^{-1}\omega)\,d\Sigma,
\end{align}
where $\slashed{\nb}=\gamma(\nb)$ denotes the section over $\Sigma$ of $\Endo(D\Mb)$ 
obtained evaluating the $\Endo(D\Mb)$-valued one-form $\gamma$ 
on the vector field $\nb$ at each point of $\Sigma$. 
One can prove that $I_\mathrm{s}$ preserves the Hermitian structures by mimicking the strategy used in \eqref{eqSymplSolProofScalar} for the real scalar field and by relying on the calculation presented in \eqref{eqNablaSlashAntiAdjProof}. 
More explicitly, given $\sigma,\tau\in\sectc(D\Mb)$, one finds the following: 
\begin{equation*}
\begin{aligned}
\sHerm([\sigma],[\tau]) & =-i\int\limits_{J_\Mb^-(\Sigma)}\big(A(\Ps\Es^+\sigma)\big)(\Es\tau)\,\dvol_\Mb
-i\int\limits_{J_\Mb^+(\Sigma)}\big(A(\Ps\Es^-\sigma)\big)(\Es\tau)\,\dvol_\Mb\\
& =-\int\limits_{J_\Mb^-(\Sigma)}\dd\ast\Big(\big(A(\Es^+\sigma)\big)\big(\gamma(\Es\tau)\big)\Big)
-\int\limits_{J_\Mb^+(\Sigma)}\dd\ast\Big(\big(A(\Es^-\sigma)\big)\big(\gamma(\Es\tau)\big)\Big)\\
& =-\int_\Sigma\big(A(\Es^+\sigma)\big)\big(\slashed{\nb}(\Es\tau)\big)\,d\Sigma
+\int_\Sigma\big(A(\Es^-\sigma)\big)\big(\slashed{\nb}(\Es\tau)\big)\,d\Sigma\\
& =\sHermSol(\Es\sigma,\Es\tau).
\end{aligned}
\end{equation*}
Notice that, after the integration by parts of the terms involving $\nss$ has been performed, 
only boundary terms are left due to the fact that $\Ps\Es\tau=0$. 
The case of cospinors follows suit. 
\end{rem}

At this stage we have the Hermitian forms $\sHerm$ and $\cHerm$ defined on the spaces 
of classical observables $\sClObs$ and $\cClObs$ respectively for spinors and for cospinors. 
Therefore, we can consider the Hermitian spaces $(\sClObs,\sHerm)$ and $(\cClObs,\cHerm)$. 
Furthermore, according to Proposition \ref{prpHermFormDirac}, $A:\sClObs\to\cClObs$ establishes 
a strict relation between the two Hermitian structures $\sHerm$ and $\cHerm$. 
These Hermitian spaces and their relation are exactly the data needed in order to pass from the classical Dirac field 
to the corresponding quantum counterpart. However, before turning our attention to the quantum case, 
we would like to investigate some properties of the Hermitian spaces $(\sClObs,\sHerm)$ and $(\cClObs,\cHerm)$. 

\begin{theorem}\label{thmClPropDirac}
Let $\Mb=(\Mc,g,\ogth,\tgth)$ be a four-dimensional globally hyperbolic spacetime  
and take a co-frame $e=(e^\mu)_{\mu=0,\dots,3}$ on it. 
Let $(\sClObs,\sHerm)$ and $(\cClObs,\cHerm)$ be the Hermitian spaces of classical observables defined above 
respectively for spinors and for cospinors. The following properties hold: 
\begin{description}
\item[{\bf Causality}] The Hermitian structures vanish on pairs of observables localized in causally disjoint regions. 
More precisely, let $\sigma,\tau\in\sectc(D\Mb)$ be such that $\supp\sigma\cap J(\supp\tau)=\emptyset$. 
Then $\sHerm([\sigma],[\tau])=0$. Similarly, taking $\omega,\zeta\in\sectc(D^\ast\Mb)$ 
such that $\supp\omega\cap J_\Mb(\supp\zeta)=\emptyset$, one has $\cHerm([\omega],[\zeta])=0$. 
\item[{\bf Time-slice axiom}] Let $\Oc\subset\Mc$ be a globally hyperbolic open neighborhood 
of a spacelike Cauchy surface $\Sigma$ for $\Mb$, namely $\Oc$ is an open neighborhood of $\Sigma$ in $\Mc$ 
containing all causal curves for $\Mb$ whose endpoints lie in $\Oc$. 
In particular, the restriction of $\Mb$ to $\Oc$ provides 
a globally hyperbolic spacetime $\Ob=(\Oc,g\vert_\Oc,\ogth\vert_\Oc,\tgth\vert_\Oc)$. 
Furthermore, as a co-frame on $\Ob$, we consider the restriction of the co-frame $e$ on $\Mb$. 
Denote with $(\sClObs_\Mb,\sHerm_\Mb)$ and with $(\sClObs_\Ob,\sHerm_\Ob)$ the Hermitian spaces 
of observables for spinors respectively over $\Mb$ and over $\Ob$. Similarly, let $(\cClObs_\Mb,\cHerm_\Mb)$ 
and $(\cClObs_\Ob,\cHerm_\Ob)$ denote the Hermitian spaces of observables for cospinors 
respectively over $\Mb$ and over $\Ob$. Then the maps $\Ls:\sClObs_\Ob\to\sClObs_\Mb$ 
and $\Lcc:\cClObs_\Ob\to\cClObs_\Mb$, defined by $\Ls[\tau]=[\tau]$ for all $\tau\in\sectc(D\Ob)$ 
and by $\Lcc[\zeta]=[\zeta]$ for all $\zeta\in\sectc(D^\ast\Ob)$, are isomorphisms of Hermitian spaces.\footnote{
The sections on the right-hand-side in the definitions of $\Ls$ and of $\Lcc$ 
are the extensions by zero to the whole spacetime of the sections which appear on the left-hand-side.} 
\end{description}
\end{theorem}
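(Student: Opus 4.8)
The plan is to follow almost verbatim the argument already used for the real scalar field in Theorem \ref{thmClPropScalar}, replacing the scalar operator $P$ by $\Ps$, the pairing $\int_\Mc(\cdot)(\cdot)\,\dvol_\Mb$ by $\ips{\cdot}{\cdot}$, and the advanced-minus-retarded operator $E$ by $\Es$. I would treat the spinor case in full and then obtain the cospinor statements either by the identical reasoning with $\Pc$, $\ipc{\cdot}{\cdot}$, $\Ec$ in place of $\Ps$, $\ips{\cdot}{\cdot}$, $\Es$, or, more economically, by transporting them through the anti-linear isomorphism $A:\sClObs\to\cClObs$ of Proposition \ref{prpHermFormDirac}, which intertwines $\sHerm$ and $\cHerm$ and commutes with the restriction maps.

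For causality I would take $\sigma,\tau\in\sectc(D\Mb)$ with $\supp\sigma\cap J_\Mb(\supp\tau)=\emptyset$ and recall that $\sHerm([\sigma],[\tau])=-i\ips{\sigma}{\Es\tau}=-i\int_\Mc(A\sigma)(\Es\tau)\,\dvol_\Mb$. Since $\Es=\Es^--\Es^+$ obeys $\supp(\Es\tau)\subset J_\Mb(\supp\tau)$, the integrand is supported in $\supp\sigma\cap J_\Mb(\supp\tau)=\emptyset$ and the integral vanishes; this is immediate and requires no new input.

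For the time-slice axiom I would first verify that $\Ls[\tau]=[\tau]$ (extension by zero) is well defined: since $\Ps$ is a local operator built from the co-frame $e$ and the Levi-Civita connection, $\Ps$ on $\Ob$ agrees with $\Ps$ on $\Mb$ on sections supported in $\Oc$, so extension by zero carries $\Ps(\sectc(D\Ob))=\sVan_\Ob$ into $\Ps(\sectc(D\Mb))=\sVan_\Mb$. Linearity is clear. Preservation of the Hermitian form follows from $\sHerm_\Mb(\Ls[\sigma],\Ls[\tau])=-i\int_\Mc(A\sigma)(\Es\tau)\,\dvol_\Mb=-i\int_\Oc(A\sigma)(\Es\tau)\,\dvol_\Ob=\sHerm_\Ob([\sigma],[\tau])$, where the reduction to $\Oc$ uses that $\sigma$ vanishes outside $\Oc$ together with the identification of the $\Mb$- and $\Ob$-Green operators on $\supp\sigma$. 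Preservation of a non-degenerate Hermitian form makes $\Ls$ automatically injective: if $\Ls[\tau]=0$ then $\sHerm_\Ob([\sigma],[\tau])=\sHerm_\Mb(\Ls[\sigma],0)=0$ for all $[\sigma]$, so $[\tau]=0$ by non-degeneracy of $\sHerm_\Ob$.

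The heart of the proof is surjectivity, carried out exactly as in the scalar case. Given $\tau\in\sectc(D\Mb)$, using the splitting of Theorem \ref{BS} I would pick two spacelike Cauchy surfaces $\Sigma_\pm\subset\Oc$ to the future/past of $\Sigma$ and a partition of unity $\{\chi^+,\chi^-\}$ subordinate to $\{I^+_\Mb(\Sigma_-),I^-_\Mb(\Sigma_+)\}$. Setting $\tau'=\Ps(\chi^-\Es\tau)=-\Ps(\chi^+\Es\tau)$ (the two agree since $\chi^++\chi^-=1$ and $\Ps\Es\tau=0$), the support of $\tau'$ lies in the compact set $\supp\chi^+\cap\supp\chi^-\cap\supp(\Es\tau)$, which sits inside $\Oc$ by causal convexity, so $\tau'$ is compactly supported in $\Oc$. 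Using $\Ps\Es^-\tau=\tau$ one computes $\tau'-\tau=\Ps(-\chi^+\Es^-\tau-\chi^-\Es^+\tau)$, and the support properties of $\Es^\pm$ force each term to live in a compact set of the form $J^\mp_\Mb(\Sigma_\pm)\cap J^\pm_\Mb(\supp\tau)$; hence $\tau'-\tau\in\Ps(\sectc(D\Mb))=\sVan_\Mb$, giving $\Ls[\tau'\vert_\Oc]=[\tau]$. The only genuinely delicate point I expect is the identification of the $\Mb$- and $\Ob$-Green operators on the causally convex neighbourhood $\Oc$ needed in the form-preservation step, which I would justify by causal convexity of $\Oc$ together with the uniqueness of Green operators (Lemma \ref{uniqueGlemma}); everything else is bookkeeping of supports identical to the scalar computation.
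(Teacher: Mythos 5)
Your proposal is correct and follows exactly the route the paper takes: the paper's own proof consists precisely of the remark that the argument of Theorem \ref{thmClPropScalar} carries over verbatim once the symplectic pairing is replaced by the Hermitian one, relying only on the Green hyperbolicity of $\Ps$ and $\Pc$ established in Proposition \ref{prpGreenHypDirac}. Your write-up simply makes explicit the details (support bookkeeping for causality, the partition-of-unity surjectivity argument, and the identification of the $\Mb$- and $\Ob$-Green operators on the causally convex neighbourhood) that the paper leaves implicit.
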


\begin{proof}
The proof of this theorem follows slavishly that of Theorem \ref{thmClPropScalar} for the real scalar case. 
In fact, the only difference is that we are replacing symplectic structures with Hermitian ones. 
Apart form that, the proof presented there holds in this case as well. 
In fact, the argument relies only on the Green hyperbolicity of the differential operators ruling the dynamics 
and indeed both $\Ps$ and $\Pc$ have this property according to Proposition \ref{prpGreenHypDirac}. 
%
\end{proof}

\subsubsection{Quantum field theory}
Given a four-dimensional globally hyperbolic spacetime $\Mb=(\Mc,g,\ogth,\tgth)$ 
and choosing a co-frame $e=(e^\mu)_{\mu=0,\dots,3}$ on it, 
in the previous section we were able to construct all the kinematical and dynamical objects 
related to the classical theory of the Dirac field. 
In particular, we obtained two Hermitian spaces of classical observables for the Dirac field on $\Mb$. 
The first one, $(\sClObs,\sHerm)$, is used to test on-shell spinors, 
while the second one, $(\cClObs,\cHerm)$, pertains to cospinors. 
Furthermore, the two Hermitian spaces are related by an anti-linear isomorphism $A:\sClObs\to\cClObs$, 
which satisfies $\cHerm(A[\sigma],A[\tau])=\sHerm([\tau],[\sigma])$ for all $[\sigma],[\tau]\in\sClObs$. 
Let us stress that these spaces faithfully represent a class of linear functionals defined on on-shell Dirac fields, 
which is rich enough to distinguish between different field configurations. 
These properties motivate our interpretation of $\sClObs$ and $\cClObs$ 
as spaces of classical observables for the Dirac field. 

Now we want to switch from the classical field theoretical description to its quantum counterpart. As for the scalar case, we shall only construct a suitable algebra of observables, omitting any discussion concerning algebraic states, a topic which is addressed in \cite{IV}. This result is achieved by considering the unital $\ast$-algebra $\Ac$ defined as follows. 
Starting from the unital $\ast$-algebra freely generated over $\CC$ 
by the symbols $\II$, $\Phi([\tau])$ and $\Psi([\zeta])$ for all $[\tau]\in\sClObs$ and for all $[\zeta]\in\cClObs$, 
we impose the relations listed below, thus obtaining the sought unital $\ast$-algebra $\Ac$: 
\begin{align}
\Phi(a[\sigma]+b[\tau]) & =a\Phi([\sigma])+b\Phi([\tau]),\label{eqLinearityDirac}\\
\Phi([\sigma])^\ast & =\Psi(A[\sigma]),\label{eqInvolutionDirac}\\
\Phi([\sigma])\cdot\Phi([\tau])+\Phi([\tau])\cdot\Phi([\sigma]) & =0,\label{eqCAR1Dirac}\\
\Psi([\omega])\cdot\Psi([\zeta])+\Psi([\zeta])\cdot\Psi([\omega]) & =0,\label{eqCAR2Dirac}\\
\Psi([\zeta])\cdot\Phi([\tau])+\Phi([\tau])\cdot\Psi([\zeta]) & =\cHerm(A[\tau],[\zeta])\II.\label{eqCAR3Dirac}
\end{align}
These relations must hold for all $a,b\in\CC$, for all $[\sigma],[\tau]\in\sClObs$ 
and for all $[\omega],[\zeta]\in\cClObs$. 
In view of \eqref{eqLinearityDirac} the map $\Phi:\sClObs\to\Ac$, $[\tau]\mapsto\Phi([\tau])$ is linear. 
On account of \eqref{eqInvolutionDirac} and $A:\sClObs\to\cClObs$ being anti-linear, 
the map $\Psi:\cClObs\to\Ac$, $[\zeta]\mapsto\Psi([\zeta])$, is linear too. 
To conclude, \eqref{eqCAR1Dirac}, \eqref{eqCAR2Dirac} and \eqref{eqCAR3Dirac} 
provide the canonical anti-commutation relations (CAR) for the Dirac field. 
A more concrete construction can be obtained mimicking the one for the real scalar field, 
see Section \ref{subsubQuantumScalar} and \cite{Araki70}. 
Specifically, we consider the vector space 
$A=\bigoplus_{k\in\NN_0}(\sClObs\oplus\cClObs)^{\otimes k}$.\footnote{As usual, 
the component of the direct sum corresponding to the degree $k=0$ is simply $\CC$.} 
This is endowed with the product specified $\cdot:A\times A\to A$ defined by 
\begin{align}
\{u_k\}\cdot\{v_k\}=\{w_k\}, && w_k=\sum_{i+j=k}u_i\otimes v_j. 
\end{align}
Clearly, endowing $A$ with $\cdot$ provides a unital algebra, whose unit is given by $\II=\{1,0,\dots\}$. 
The generators of this algebra are 
\begin{align}
\Phi([\tau])=\left\{0,\left(\begin{matrix} [\tau] \\ 0 \end{matrix}\right),0,\dots\right\}, 
&& \Psi([\zeta])=\left\{0,\left(\begin{matrix} 0 \\ [\zeta] \end{matrix}\right),0,\dots\right\}, 
\end{align}
for all $[\tau]\in\sClObs$ and for all $[\zeta]\in\cClObs$. 
So far, the construction is almost identical to the one for the real scalar field. 
The only difference is that we replaced the complexification of the space of classical observables for the scalar field 
with the direct sum of the spaces of classical observables for spinors and for cospinors. 
The involution $\ast:B\to B$ is implemented by means of the anti-linear isomorphism $A:\sClObs\to\cClObs$: 
\begin{equation*}
\begin{aligned}
& \left\{0,\dots,0,\left(\begin{matrix} [\tau_1] \\ [\zeta_1] \end{matrix}\right)\otimes
\cdots\otimes\left(\begin{matrix} [\tau_k] \\ [\zeta_k] \end{matrix}\right),0,\dots\right\}^\ast\\
& \qquad=\left\{0,\dots,0,\left(\begin{matrix} A^{-1}[\zeta_k] \\ A[\tau_k] \end{matrix}\right)\otimes
\cdots\otimes\left(\begin{matrix} A^{-1}[\zeta_1] \\ A[\tau_1] \end{matrix}\right),0,\dots\right\},
\end{aligned}
\end{equation*}
for all $k\in\NN_0$, for all $[\tau_1],\dots,[\tau_k]\in\sClObs$ and for all $[\zeta_1],\dots,[\zeta_k]\in\cClObs$. 
As always, $\ast$ is extended to all elements of $B$ by anti-linearity, thus turning $B$ into a unital $\ast$-algebra. 
The canonical anti-commutation relations are implemented taking the quotient of $B$ 
by the two-sided $\ast$-ideal $I$ of $B$ generated by the elements listed below: 
\begin{align}
\Phi([\sigma])\cdot\Phi([\tau]) & +\Phi([\tau])\cdot\Phi([\sigma]),\\
\Psi([\omega])\cdot\Psi([\zeta]) & +\Psi([\zeta])\cdot\Psi([\omega]),\\
\Psi([\zeta])\cdot\Phi([\tau]) & +\Phi([\tau])\cdot\Psi([\zeta])-\cHerm(A[\tau],[\zeta])\II,
\end{align}
for all $[\sigma],[\tau]\in\sClObs$ and for all $[\omega],[\zeta]\in\cClObs$. 
The unital $\ast$-algebra $\Ac=B/I$ resulting from the quotient is a concrete realization 
of the one presented in the first part of the present section. 

Having established the algebra $\Ac$ describing the quantum theory of the free Dirac field 
on the four-dimensional globally hyperbolic spacetime $\Mb$, 
we would like to investigate some of its properties, as well as its relation 
to the traditional presentation of the quantum Dirac field. 

\begin{rem}
Let us mention that, similarly to the scalar case, see Remark \ref{remQuantumScalar}, 
the Dirac quantum field theory presented above reduces to the one usually 
found in any undergraduate textbook on quantum field theory 
as soon as $\Mb$ is Minkowski spacetime. 
This can be seen by means of a suitable Fourier expansion of the solutions to the field equations. 
\end{rem}

The properties of the classical theory of the Dirac field, which were investigated in Theorem \ref{thmClPropDirac}, 
have counterparts at the quantum level. We conclude this section analyzing this aspect. 

\begin{theorem}
Let $\Mb=(\Mc,g,\ogth,\tgth)$ be a four-dimensional globally hyperbolic spacetime 
and take a co-frame $e=(e^\mu)_{\mu=0,\dots,3}$ on it. 
Let $\Ac$ be the unital $\ast$-algebra of quantum observables for the Dirac field on $\Mb$. 
The following properties hold: 
\begin{description}
\item[{\bf Causality}] The elements of $\Ac$ localized in causally disjoint regions anti-commute. 
To wit, let $\zeta\in\sectc(D^\ast\Mb)$ and $\tau\in\sectc(D\Mb)$ be such that 
$\supp\zeta\cap J_\Mb(\supp\tau)=\emptyset$. 
Then $\Psi([\zeta])\cdot\Phi([\tau])+\Phi([\tau])\cdot\Psi([\zeta])=0$. 
In particular, the even subalgebra $\Ac^\even$ of $\Ac$, 
whose elements are finite linear combinations of products of an even number of generators of $\Ac$, 
fulfills the bosonic version of causality, 
namely the elements of $\Ac^\even$ localized in causally disjoint regions commute. 
\item[{\bf Time-slice axiom}] Let $\Oc\subset\Mc$ be a globally hyperbolic open neighborhood 
of a spacelike Cauchy surface $\Sigma$ for $\Mb$, namely $\Oc$ is an open neighborhood of $\Sigma$ in $\Mc$ 
containing all causal curves for $\Mb$ whose endpoints lie in $\Oc$. 
In particular, the restriction of $\Mb$ to $\Oc$ provides 
a globally hyperbolic spacetime $\Ob=(\Oc,g\vert_\Oc,\ogth\vert_\Oc,\tgth\vert_\Oc)$. 
Furthermore, as a co-frame on $\Ob$, we consider the restriction of the co-frame $e$ on $\Mb$. 
Denote with $\Ac_\Mb$ and with $\Ac_\Ob$ the unital $\ast$-algebras of observables for the Dirac field 
respectively over $\Mb$ and over $\Ob$. Then the map $I:\Ac_\Ob\to\Ac_\Mb$, 
defined on generators by $I(\Phi([\tau]))=\Phi(\Ls[\tau])$ for all $[\tau]\in\sClObs_\Ob$ 
and by $I(\Psi([\zeta]))=\Psi(\Lcc[\zeta])$ for all $[\zeta]\in\cClObs_\Ob$, 
is an isomorphism of unital $\ast$-algebras. 
Recall that the Hermitian isomorphisms $\Ls:\sClObs_\Ob\to\sClObs_\Mb$ 
and $\Lcc:\cClObs_\Ob\to\cClObs_\Mb$ were introduced in Theorem \ref{thmClPropDirac}. 
\end{description}
\end{theorem}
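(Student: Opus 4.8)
The plan is to mirror the strategy used for the real scalar field in Theorem \ref{thmQuantumScalar}, reducing both statements to their classical counterparts established in Theorem \ref{thmClPropDirac}. At the quantum level, causality will follow from classical causality together with the canonical anti-commutation relation \eqref{eqCAR3Dirac}, while the time-slice axiom will be obtained by lifting the Hermitian isomorphisms $\Ls$ and $\Lcc$ of Theorem \ref{thmClPropDirac} to a $\ast$-algebra isomorphism and exhibiting its inverse from $\Ls^{-1}$ and $\Lcc^{-1}$.

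For causality I would start from the only mixed anti-commutator not required to vanish by \eqref{eqCAR1Dirac} and \eqref{eqCAR2Dirac}, namely \eqref{eqCAR3Dirac}, which reads $\Psi([\zeta])\cdot\Phi([\tau])+\Phi([\tau])\cdot\Psi([\zeta])=\cHerm(A[\tau],[\zeta])\II$. Since the adjunction $A:D\Mb\to D^\ast\Mb$ is an anti-linear vector bundle isomorphism covering the identity, one has $\supp(A\tau)=\supp\tau$; moreover causal disjointness is symmetric, so the hypothesis $\supp\zeta\cap J_\Mb(\supp\tau)=\emptyset$ is equivalent to $\supp(A\tau)\cap J_\Mb(\supp\zeta)=\emptyset$. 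The classical causality property of Theorem \ref{thmClPropDirac}, applied to the cospinor observables $A[\tau]$ and $[\zeta]$, then gives $\cHerm(A[\tau],[\zeta])=0$, whence the anti-commutator vanishes. For the even subalgebra $\Ac^\even$ I would argue by counting signs: two monomials built out of an even number of generators each, localized in causally disjoint regions, can be reordered by transposing generators across the causal cut, each transposition producing a factor $-1$ (by \eqref{eqCAR1Dirac} and \eqref{eqCAR2Dirac} for like generators and by the causality just proven for mixed ones); since the total number of transpositions is a product of two even integers, the accumulated sign is $+1$ and the two monomials commute.

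For the time-slice axiom I would first check that the assignment $\Phi([\tau])\mapsto\Phi(\Ls[\tau])$, $\Psi([\zeta])\mapsto\Psi(\Lcc[\zeta])$ on generators respects the defining relations from \eqref{eqLinearityDirac} to \eqref{eqCAR3Dirac}, so that it extends to a unital $\ast$-homomorphism $I:\Ac_\Ob\to\Ac_\Mb$. Linearity and the two pure anti-commutation relations \eqref{eqCAR1Dirac}, \eqref{eqCAR2Dirac} are immediate from linearity of $\Ls$ and $\Lcc$. The involution relation \eqref{eqInvolutionDirac} and the mixed relation \eqref{eqCAR3Dirac} require the compatibility $A\circ\Ls=\Lcc\circ A$ and the Hermitian-preservation property $\cHerm_\Mb(\Lcc[\omega],\Lcc[\zeta])=\cHerm_\Ob([\omega],[\zeta])$ coming from Theorem \ref{thmClPropDirac}. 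Both hold because $\Ls$, $\Lcc$ and $A$ are all defined fiberwise and commute with the extension by zero from $\Oc$ to $\Mc$. Finally, applying the same construction to $\Ls^{-1}$ and $\Lcc^{-1}$ yields a $\ast$-homomorphism $\Ac_\Mb\to\Ac_\Ob$ which inverts $I$ on generators, hence on all of $\Ac_\Mb$ and $\Ac_\Ob$, so that $I$ is a $\ast$-isomorphism.

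The main obstacle, such as it is, lies in verifying the intertwining relation $A\circ\Ls=\Lcc\circ A$ that the Fermionic involution \eqref{eqInvolutionDirac} forces upon us, an ingredient with no analogue in the scalar computation of Theorem \ref{thmQuantumScalar}. Once this compatibility and the preservation of $\cHerm$ under $\Lcc$ are in place, the remainder of the argument is a verbatim transcription of the scalar proof.
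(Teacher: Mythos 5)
Your proposal is correct and follows essentially the same route as the paper: both statements are reduced to their classical counterparts from Theorem \ref{thmClPropDirac}, causality via the relation \eqref{eqCAR3Dirac} and the time-slice axiom by lifting $\Ls$, $\Lcc$ and their inverses to $\ast$-homomorphisms. The only cosmetic differences are that you prove the even-subalgebra claim by directly counting the $4mn$ sign flips (the paper instead shows that a product of two causally disjoint generators commutes with a third and iterates), and that you spell out the compatibilities $A\circ\Ls=\Lcc\circ A$ and $\cHerm_\Mb(\Lcc\cdot,\Lcc\cdot)=\cHerm_\Ob(\cdot,\cdot)$, which the paper leaves implicit in the assertion that $I$ is a $\ast$-homomorphism ``by definition.''
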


\begin{proof}
Given $\zeta\in\sectc(D^\ast\Mb)$ and $\tau\in\sectc(D\Mb)$ 
such that $\supp\zeta\cap J(\supp\tau)=\emptyset$, from Theorem \ref{thmClPropDirac}, 
one deduces that $\cHerm(A[\tau],[\zeta])=0$. Therefore, recalling \eqref{eqCAR3Dirac}, 
one concludes that $\Psi([\zeta])\cdot\Phi([\tau])+\Phi([\tau])\cdot\Psi([\zeta])=0$. 
Let us now consider three generators $G_1,G_2,G_3$ of $\Ac$ (they can be 
either of the form $\Phi([\tau])$ for $[\tau]\in\sClObs$ or of the from $\Psi([\zeta])$ for $[\zeta]\in\cClObs$). 
We assume that $G_1$ and $G_2$ are localized in a region 
which is causally disjoint from the one where $G_3$ is localized. 
On account of the first part of this theorem, 
we deduce that $G_i\cdot G_3+G_3\cdot G_i=0$ for $i=1,2$. 
The following chain of equalities follows from the last identity: 
\begin{equation*}
\begin{aligned}
(G_1\cdot G_2)\cdot G_3-G_3\cdot(G_1\cdot G_2) & =G_1\cdot G_2\cdot G_3+G_1\cdot G_3\cdot G_2\\
& \quad-G_1\cdot G_3\cdot G_2-G_3\cdot G_1\cdot G_2=0.
\end{aligned}
\end{equation*}
This already entails that all elements of $\Ac^\even$ commute with all elements of $\Ac$ 
provided that they are localized in causally disjoint regions. The claim follows as a special case. 

The quantum time-slice axiom follows directly from its classical counterpart. 
The procedure is very similar to the scalar case, see Theorem \ref{thmQuantumScalar}. 
In fact, $I:\Ac_\Ob\to\Ac_\Mb$ is a homomorphism of unital $\ast$-algebras by definition 
and, moreover, we can introduce an inverse $I^{-1}:\Ac_\Mb\to\Ac_\Ob$ of $I$ 
simply setting $I^{-1}\Phi([\tau])=\Phi(\Ls^{-1}[\tau])$ for all $[\tau]\in\sClObs$ 
and $I^{-1}\Psi([\zeta])=\Psi(\Lcc^{-1}[\zeta])$ for all $[\zeta]\in\cClObs$. 
It is straightforward to check that $I^{-1}\circ I=\id_{\Ac_\Ob}$ and $I\circ I^{-1}=\id_{\Ac_\Mb}$, 
so that $I^{-1}$ is actually the inverse of $I$ and then $I$ is an isomorphism of unital $\ast$-algebras as claimed. 
\end{proof}

\subsection{The Proca field}
The last example we shall analyze is the Proca field over globally hyperbolic spacetimes. 
We shall adopt the same approach used in the previous cases. 
Specifically, we shall start investigating the properties of the differential operator 
which rules the dynamics of the Proca field. After that, we shall introduce a suitable space of classical observables. 
In particular, we want a space of sections which can be used to define linear functionals on on-shell Proca fields. 
As usual, we shall require that the functionals obtained are sufficiently many 
to distinguish between different on-shell configurations. Furthermore, we want to get rid of the redundancies 
which might be contained in the space of sections we use to produce functionals. 
As soon as these requirements are achieved, we shall interpret the result 
as a space of classical observables for the Proca field. In fact, this space faithfully represents 
a class of functionals defined on-shell, which is rich enough to detect any field configuration. 
Then we shall endow this space with a symplectic structure, 
which will play a central role in the prescription to quantize the classical Proca field. 
This topic will be addressed in the last part of this section. 
Before starting our analysis, let us mention some references where the Proca field has been studied 
using the language of algebraic quantum field theory. 
These are \cite{Furlani:1999kq, Fewster:2003ey, Dappiaggi:2011ms}. 

\subsubsection{Dynamics and classical observables}
Let us consider an $n$-dimensional globally hyperbolic spacetime $\Mb=(\Mc,g,\ogth,\tgth)$. 
Unless stated otherwise, from now on, $\Mb$ shall be kept fixed. The off-shell configurations for the Proca field 
are sections of the cotangent bundle $T^\ast\Mc$, namely one-forms over $\Mc$. 
Adopting a standard convention, we shall denote the space of $k$-forms by $\f^k(\Mc)$. 
Let us remind the reader that differential forms form a graded algebra 
with respect to the standard wedge product $\wedge:\f^k(\Mc)\times\f^{k^\prime}(\Mc)\to\f^{k+k^\prime}(\Mc)$. 
To introduce the dynamics, we need two operations on forms, namely the differential 
$\dd:\f^k(\Mc)\to\f^{k+1}(\Mc)$ and the Hodge dual $\ast:\f^k(\Mc)\to\f^{n-k}(\Mc)$. 
$\dd$ is defined simply out of the differentiable structure on $\Mc$, see \cite[Section 1.1]{BT82}, 
while $\ast$ depends also on the metric $g$ and the orientation $\ogth$, see \cite[Section 3.3]{Jos11}. 
For our purposes it is enough to mention that $\dd$ is a graded derivative 
with respect to the wedge product $\wedge$, 
that $\dd\dd=0$ and that $\ast$ is an isomorphism, hence $\ast^{-1}$ is well-defined. 
For further details on the theory of differential forms, see \cite{BT82}. 
$\dd$ and $\ast$ enable us to introduce the codifferential: 
\begin{align}
\de:\f^k(\Mc)\to\f^{k-1}(\Mc), && \de=(-1)^k\ast^{-1}\dd\ast.
\end{align}
Notice that $\de\de=0$ due to $\dd\dd=0$. 
Introducing the symmetric pairing $\ipf{\cdot}{\cdot}$ between $k$-forms, defined by 
\begin{equation*}
\ipf{\alpha}{\beta}=\int_\Mc\alpha\wedge\ast\beta,
\end{equation*}
where $\alpha,\beta\in\f^k(\Mc)$ have supports with compact intersection, 
one can prove that $\de$ is the formal adjoint of $\dd$ with respect to $\ipf{\cdot}{\cdot}$, 
meaning that $\ipf{\alpha}{\de\beta}=\ipf{\dd\alpha}{\beta}$ for all $\alpha\in\f^k(\Mc)$ 
and $\beta\in\f^{k+1}(\Mc)$ such that $\supp\alpha\cap\supp\beta$ is compact. 
In fact, applying Stokes' theorem, one finds 
\begin{equation*}
\begin{aligned}
\ipf{\dd\alpha}{\beta}-\ipf{\alpha}{\de\beta} & =\int_\Mc(\dd\alpha\wedge\ast\beta-\alpha\wedge\ast\de\beta)\\
& =\int_\Mc(\dd\alpha\wedge\ast\beta+(-1)^k\alpha\wedge\dd\ast\beta)\\
& =\int_\Mc\dd(\alpha\wedge\ast\beta)=0. 
\end{aligned}
\end{equation*}

After these preliminaries, we are ready to introduce the Proca equation over $\Mb$ for a form $A\in\f^1(\Mc)$: 
\begin{equation*}
-\de\dd A+m^2A=0
\end{equation*}
where $m^2\in\RR\setminus\{0\}$. As for the case of a real scalar field, all our results are valid for all possible values of the mass. Using the abstract index notation, the Proca equation reads 
\begin{equation*}
\nabla^a(\nabla_aA_b-\nabla_bA_a)+m^2A_b=0.
\end{equation*}
We introduce the second-order linear differential operator 
$P=-\de\dd+m^2\id_{\f^1(\Mc)}$. With this definition, the Proca equation can be rewritten as $PA=0$. 
Since $\de$ is the formal adjoint of $\dd$ with respect to $\ipf{\cdot}{\cdot}$, 
it follows that $P:\f^1(\Mc)\to\f^1(\Mc)$ is formally self-adjoint: 
\begin{equation}\label{eqSelfAdjProca}
\ipf{P\alpha}{\beta}=-\ipf{\dd\alpha}{\dd\beta}+m^2\ipf{\alpha}{\beta}=\ipf{\alpha}{P\beta}, 
\end{equation}
for all $\alpha,\beta\in\f^k(\Mc)$ whose supports have compact overlap. 
In the next proposition we show that $P$ is Green hyperbolic by exhibiting its retarded and advanced Green operators,
see Definition \ref{E+E-}. 

\begin{proposition}\label{prpGreenHypProca}
Let $\Mb=(\Mc,g,\ogth,\tgth)$ be an $n$-dimensional globally hyperbolic spacetime 
and let $Q=-m^{-2}\dd\de+\id_{\f^1(\Mc)}:\f^1(\Mc)\to\f^1(\Mc)$. 
The second order linear differential operator $P:\f^1(\Mc)\to\f^1(\Mc)$, 
which rules the dynamics of the Proca field on $\Mb$, is formally self-adjoint 
with respect to $\ipf{\cdot}{\cdot}$. Furthermore it is Green hyperbolic. 
In particular, its retarded and advanced Green operators are given by $E^\pm=QF^\pm$, 
where $F^+$ and $F^-$ denote the retarded and advanced Green operators 
for the normally hyperbolic operator $R=PQ:\f^1(\Mc)\to\f^1(\Mc)$. 
\end{proposition}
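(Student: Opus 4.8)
The plan is to take the formal self-adjointness of $P$ with respect to $\ipf{\cdot}{\cdot}$ for granted, since it was already verified in \eqref{eqSelfAdjProca}, and to concentrate on Green hyperbolicity. The obstacle to treating $P$ directly is that it is \emph{not} normally hyperbolic: a quick symbol computation gives $\sigma_P(\zeta)\omega=g(\zeta,\zeta)\,\omega-g(\zeta,\omega)\,\zeta$, which differs from $g(\zeta,\zeta)\,\id$ by a rank-one term. My strategy is therefore to route the argument through the auxiliary operator $Q$, showing that the composite $R=PQ$ \emph{is} normally hyperbolic and then synthesising the Green operators of $P$ out of those of $R$, exactly in the spirit of the Dirac construction of Proposition \ref{prpGreenHypDirac} (where $\Es^\pm=\Ps\Fs^\pm$ played the analogous role).

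First I would compute $R=PQ$ explicitly. Expanding $(-\de\dd+m^2\id)(-m^{-2}\dd\de+\id)$, the only fourth-order term is $m^{-2}\,\de\dd\dd\de$, which vanishes because $\dd\dd=0$; what survives is
\[
R=PQ=-(\dd\de+\de\dd)+m^2\id,
\]
minus the Hodge--de Rham operator shifted by $m^2$. The identical expansion of $QP$, now using $\de\de=0$, gives the same result, so $P$ and $Q$ commute. Recalling that $\sigma_\dd(\zeta)=\zeta\wedge(\cdot)$ and $\sigma_\de(\zeta)=-\iota_{\zeta^\sharp}(\cdot)$, the contraction identity $\iota_{\zeta^\sharp}(\zeta\wedge\omega)+\zeta\wedge\iota_{\zeta^\sharp}\omega=g(\zeta,\zeta)\,\omega$ yields $\sigma_{\dd\de+\de\dd}(\zeta)=-g(\zeta,\zeta)\,\id$, whence $\sigma_R(\zeta)=g(\zeta,\zeta)\,\id$. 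Thus $R$ is normally hyperbolic in the sense of Definition \ref{normhyp}, and by the general fact (recalled in the text) that every normally hyperbolic operator is Green hyperbolic, it possesses unique retarded and advanced Green operators $F^+,F^-$.

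Next I would set $E^\pm=QF^\pm$ and check the three defining requirements of Definition \ref{E+E-}. The support condition is immediate: $Q$ is a differential operator, hence does not enlarge supports, so $\supp(E^\pm f)\subset\supp(F^\pm f)\subset J^\pm_\Mb(\supp f)$. The right-inverse property is equally direct, $PE^\pm f=PQF^\pm f=RF^\pm f=f$ for $f\in\f^1_{pc/fc}(\Mc)$. The remaining identity $E^\pm Pf=f$ is the only one needing work, and I would prove it by the pairing trick used for $\Ps$: using self-adjointness of $P$, the right-inverse property just established, and $g=PE^\mp g$ for every test form $g\in\f^1_0(\Mc)$, one gets
\[
\ipf{E^\pm Pf}{g}=\ipf{E^\pm Pf}{PE^\mp g}=\ipf{PE^\pm Pf}{E^\mp g}=\ipf{Pf}{E^\mp g}=\ipf{f}{PE^\mp g}=\ipf{f}{g}.
\]
Since $\ipf{\cdot}{\cdot}$ is non-degenerate between $\f^1(\Mc)$ and $\f^1_0(\Mc)$ and $g$ is arbitrary, this forces $E^\pm Pf=f$. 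Hence $E^+,E^-$ are the retarded and advanced Green operators for $P$ and $P$ is Green hyperbolic.

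The step I expect to be the main obstacle is the support bookkeeping underpinning the last display: each application of self-adjointness is legitimate only when the two sections paired have supports meeting in a compact set, for instance $\supp(E^\pm Pf)\cap\supp(E^\mp g)\subset J^\pm_\Mb(\supp f)\cap J^\mp_\Mb(\supp g)$. Establishing compactness here requires the standard causal support calculus on globally hyperbolic spacetimes --- that the causal future of a past-compact set meets the causal past of a compact set in a compact set, and the time-reversed statement --- which I would verify directly; this is the same (tacit) bookkeeping that makes the Dirac computation in Proposition \ref{prpGreenHypDirac} go through.
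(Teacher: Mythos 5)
Your proposal is correct and follows essentially the same route as the paper's proof: factor the dynamics through $Q$, observe that $R=PQ=-\de\dd-\dd\de+m^2\id$ is normally hyperbolic (hence Green hyperbolic with Green operators $F^\pm$), set $E^\pm=QF^\pm$, and establish the left-inverse property $E^\pm P\alpha=\alpha$ by the same pairing argument exploiting formal self-adjointness and the non-degeneracy of $\ipf{\cdot}{\cdot}$. The only differences are cosmetic and in your favour: you verify normal hyperbolicity of $R$ by an explicit principal-symbol computation where the paper simply reads off the principal part in coordinates, and you make explicit the compact-overlap bookkeeping that the paper leaves tacit in the final chain of equalities.
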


\begin{proof}
In \eqref{eqSelfAdjProca} we have shown that $P$ is formally self-adjoint. 
Let us consider $R=PQ$. Recalling that $\dd\dd=0$, one finds that $R=-\de\dd-\dd\de+m^2\id_{\f^1(\Mc)}$. 
Therefore $R$ coincides with the Hodge-d'Alembert operator $-\de\dd-\dd\de$, 
up to a term of order zero in the derivatives. In particular, in local coordinates, 
the principal part of $R$ is of the form $g^{\mu\nu}\partial_\mu\partial_\nu$, 
hence $R$ is normally hyperbolic. On account of \cite{Bar} and \cite[Chapter 3]{BGP}, 
$R$ admits unique retarded and advanced Green operators $F^+$ and $F^-$. 
To conclude the proof we have to show that $E^+=QF^+$ and $E^-=QF^-$ 
are retarded and advanced Green operators for $P$. 
Since $Q$ is a linear differential operator, it cannot enlarge the support. 
Therefore, $E^\pm$ inherits the correct support property for a retarded/advanced Green operator from $F^\pm$. 
Furthermore, for all $\alpha \in \f^1_{pc/fc}(\Mc)$, one has $PE^\pm\alpha=RF^\pm\alpha=\alpha$. 
It remains only to check that $E^\pm P\alpha=\alpha$ for all $\alpha \in \f^1_{pc/fc}(\Mc)$. 
Exploiting the formal self-adjointness of $P$ and keeping in mind the first part of the proof, 
one gets the following chain of identities for all $\beta\in\fc^1(\Mc)$: 
\begin{equation*}
\begin{aligned}
\ipf{\beta}{E^\pm P\alpha} & =\ipf{PE^\mp\beta}{E^\pm P\alpha}=\ipf{E^\mp\beta}{PE^\pm P\alpha}\\
& =\ipf{E^\mp\beta}{P\alpha}=\ipf{PE^\mp\beta}{\alpha}=\ipf{\beta}{\alpha}.
\end{aligned}
\end{equation*}
Since $\ipf{\cdot}{\cdot}$ provides a non-degenerate bilinear pairing between $\fc^1(\Mc)$ and $\f^1(\Mc)$, 
we conclude that $E^\pm P\alpha=\alpha$ for all $\alpha \in \f^1_{pc/fc}(\Mc)$, 
hence $E^+$ and $E^-$ are retarded and advanced Green operators for $P$, 
which is consequently Green hyperbolic. 
\end{proof}

Due to Proposition \ref{prpGreenHypProca}, we can apply the general theory of Green hyperbolic operators 
presented in Section \ref{sec:2} to the operator $P$ ruling the dynamics of the Proca field. 
In particular we find that $\ipf{E^\pm\alpha}{\beta}=\ipf{\alpha}{E^\mp\beta}$ 
for all $\alpha, \beta \in \fc^1(\Mc)$ and we can introduce the advanced-minus-retarded operator $E=E^--E^+$, 
which enables us to represent any solution starting from a one-form with timelike compact support. 

We have completed the analysis of the dynamics of the Proca field 
over the $n$-dimensional globally hyperbolic spacetime $\Mb$. 
In the following we shall focus on the construction of a suitable space of classical observables. 
Exploiting the non-degenerate bilinear pairing $\ipf{\cdot}{\cdot}$ between $\fc^1(\Mc)$ and $\f^1(\Mc)$, 
we can introduce a family of linear functionals on off-shell configurations. 
In fact, given $\alpha\in\f^1(\Mc)$, we consider 
\begin{align}\label{eqFunctProca}
F_\alpha:\f^1(\Mc)\to\RR, && A\mapsto\ipf{\alpha}{A}.
\end{align}
The fact that $\ipf{\cdot}{\cdot}$ is non-degenerate has two consequences. 
The first one is that we can identify the vector space $\{F_\alpha:\alpha\in\fc^1(\Mc)\}$, 
formed by the functionals introduced in \eqref{eqFunctProca}, with $\fc^1(\Mc)$. 
The second one is that the mentioned space of functionals is sufficiently rich 
to distinguish between different off-shell configurations. 
In particular, on-shell configurations can be separated as well, 
therefore our first requirement for the space of classical observables is achieved by $\f^1(\Mc)$. 
Yet, as soon as we go on-shell, which corresponds to restricting the functionals defined above 
to field configurations $A\in\f^1(\Mc)$ satisfying the equation of motion $PA=0$, 
some of the functionals become trivial. 
Before presenting explicit examples of this kind of redundancy for certain elements of $\fc^1(\Mb)$, 
let us introduce: 
\begin{equation*}
\Sol=\{A\in\f^1(\Mc):\,PA=0\}.
\end{equation*}

\begin{example}\label{exaVanProca}
The situation here is basically the same as in Example \ref{exaVanScalar} for the scalar field. 
In fact, formal self-adjointness of $P$ with respect to $\ipf{\cdot}{\cdot}$ entails that 
$F_{P\alpha}(A)=F_\alpha(PA)$ for all $\alpha\in\fc^1(\Mc)$ and for all $A\in\f^1(\Mc)$. 
Therefore, we have $F_{P\alpha}(A)=0$ for $A\in\Sol$, 
thus showing that one-forms in $P(\fc^1(\Mc)))$ are redundant 
in the sense that they provide functionals which always vanish on-shell. 
\end{example}

As shown by the example above, $\fc^1(\Mc)$ does not provide a faithful way 
to represent the restrictions to on-shell configurations of the functionals defined in \eqref{eqFunctProca}. 
Therefore $\fc^1(\Mc)$ does not meet our second requirement to be identified 
with the space of classical observables for the Proca field. 
In order to circumvent this issue, we proceed as in the previous cases. 
Introducing the subspace 
\begin{equation*}
N=\{\alpha\in\fc^1(\Mc):\,F_\alpha(A)=0,\;\forall A\in\Sol\}\subset\fc^1(\Mc)
\end{equation*}
of those one-forms which produce functionals vanishing on-shell, we consider the quotient space 
\begin{equation*}
\ClObs=\fc^1(\Mc)/N.
\end{equation*}
Per construction $\ClObs$ has no redundancy left and therefore it represents faithfully the restrictions to $\Sol$ 
of the functionals in \eqref{eqFunctProca}. Notice that this representation is realized 
by sending each equivalence class $[\alpha]\in\ClObs$ to the functional $F_\alpha:\Sol\to\RR$ 
defined by any representative $\alpha\in[\alpha]$. 
This assignment is well-defined because two representatives of $[\alpha]$ 
differ by a one-form which produces a functional vanishing on-shell. 
Since the original space $\fc^1(\Mc)$ is sufficient to separate solutions, 
this is the case for $\ClObs$ too. These features motivate our interpretation of $\ClObs$ 
as the space of classical observables for the Proca field on the globally hyperbolic spacetime $\Mb$. 

To complete our analysis of the classical theory of the Proca field, we still have to endow $\ClObs$ 
with a symplectic structure, which will eventually enable us to quantize the model 
by means of canonical commutation relations. We shall prove first that $N=P(\fc^1(\Mc))$ 
and then Proposition \ref{prpSymplStructure} will provide the desired symplectic structure on $\ClObs$. 
The situation is again basically the same as in the scalar case. 
In fact, Example \ref{exaVanProca} provides the inclusion $P(\fc^1(\Mc))\subset N$ 
and we are left with the proof of the converse inclusion, which follows just from the Green hyperbolicity of $P$. 
Given $\alpha\in N$, $F_\alpha(E\beta)=0$ for all $\beta\in\fc^1(\Mc)$ due to $E\beta$ being a solution. 
Yet, this means that $\ipf{E\alpha}{\beta}=-\ipf{\alpha}{E\beta}=0$ for all $\beta\in\fc^1(\Mc)$, 
hence the non-degeneracy of $\ipf{\cdot}{\cdot}$ entails that $E\alpha=0$. 
Exploiting \eqref{eqSCExactSeq}, we find $\gamma\in\fc^1(\Mc)$ such that $P\gamma=\alpha$, 
thus proving the desired inclusion $N\subset P(\fc^1(\Mc))$. 
In particular, we have that $\ClObs$ is the same as the quotient $\fc^1(\Mc)/P(\fc^1(\Mc))$. 
Therefore, recalling Proposition \ref{prpSymplStructure}, we get a symplectic structure 
\begin{align}
\tau:\ClObs\times\ClObs\to\RR, && ([\alpha],[\beta])\mapsto\ipf{\alpha}{E\beta}.
\end{align}
In particular, we can regard $(\ClObs,\tau)$ as a symplectic space of classical observables for the Proca field 
over the globally hyperbolic spacetime $\Mb$. 

\begin{rem}
It is often customary to present the symplectic form as an integral over a spacelike Cauchy surface $\Sigma$ 
of the globally hyperbolic spacetime $\Mb=(\Mc,g,\ogth,\tgth)$. The integrand is given in terms 
of those data on $\Sigma$ which are needed to set up an initial value problem for the field equation of interest. 
Similarly to the scalar and Dirac cases, we show how to relate our approach to the latter one. 
Let us consider $\alpha,\beta\in\fc^1(\Mc)$ and note that $E\beta$ is a solution. 
We shall split the integral which defines $\tau([\alpha],[\beta])$ in two parts 
and we shall exploit the properties of the retarded and advanced Green operators to replace $\alpha$ 
with $PE^\pm\alpha$ in such a way that we are allowed to use Stokes' theorem: 
\begin{equation}\label{eqSymplSolProca}
\begin{aligned}
\tau([\alpha],[\beta]) 
& =\int_{J^-_\Mc(\Sigma)}(PE^+\alpha)\wedge\ast(E\beta)
+\int_{J^+_\Mc(\Sigma)}(PE^-\alpha)\wedge\ast(E\beta)\\
& =-\int_{J^-_\Mc(\Sigma)}\dd\big((E^+\alpha)\wedge\ast\dd(E\beta)-(E\beta)\wedge\ast\dd(E^+\alpha)\big)\\
& \quad-\int_{J^+_\Mc(\Sigma)}\dd\big((E^-\alpha)\wedge\ast\dd(E\beta)
-(E\beta)\wedge\ast\dd(E^-\alpha)\big)\\
& =\int_\Sigma\big((E\alpha)\wedge\ast\dd(E\beta)-(E\beta)\wedge\ast\dd(E\alpha)\big)\\
& =\int_\Sigma\Big(g\big(E\alpha,\imath_\nb\dd(E\beta)\big)
-g\big(E\beta,\imath_\nb\dd(E\alpha)\big)\Big)\,d\Sigma,
\end{aligned}
\end{equation}
where $d\Sigma$ is the naturally induced volume form on $\Sigma$, 
$\nb$ denotes the future-pointing unit normal vector field on $\Sigma$ 
and $\imath_\nb$ is the operator which inserts the vector field $\nb$ in the form to which it is applied. 
Notice that the integration by parts only gives boundary terms since $PE\beta=0$. 
Due to \eqref{eqSymplSolProca}, it is easy to realize that our symplectic space $(\ClObs,\tau)$ 
is isomorphic to the symplectic space $(\Solsc,\sigma)$ often considered in the literature, 
where $\Solsc$ denotes the space of on-shell configurations of the Proca field with spacelike compact support 
and $\sigma:\Solsc\times\Solsc\to\RR$ is the symplectic form defined for all $A,B\in\Solsc$ by 
\begin{equation*}
\sigma(A,B)=\int_\Sigma\big(g(A,\imath_\nb\dd B)-g(B,\imath_\nb\dd A)\big)\,d\Sigma.
\end{equation*}
\end{rem}

Before turning our attention to the quantum theory of the Proca field, 
we devote a few lines to examine some of the properties of the symplectic space $(\ClObs,\tau)$ 
of classical observables for the Proca field over the globally hyperbolic spacetime $\Mb$. 
Notice that we shall not provide the details of the proof since this would be 
nothing more than a slavish copy of the proof of Theorem \ref{thmClPropScalar}. 

\begin{theorem}\label{thmClPropProca}
Let $\Mb=(\Mc,g,\ogth,\tgth)$ be an $n$-dimensional globally hyperbolic spacetime 
and let $(\ClObs,\tau)$ be the symplectic space of classical observables introduced above for the Proca field. 
The following properties hold: 
\begin{description}
\item[{\bf Causality}] The symplectic structure vanishes on pairs of observables localized in causally disjoint regions. 
More precisely, let $\alpha,\beta\in\fc^1(\Mc)$ be such that $\supp \alpha\cap J_\Mb(\supp\beta)=\emptyset$. 
Then $\tau([\alpha],[\beta])=0$. 
\item[{\bf Time-slice axiom}] Let $\Oc\subset\Mc$ be a globally hyperbolic open neighborhood 
of a spacelike Cauchy surface $\Sigma$ for $\Mb$, namely $\Oc$ is an open neighborhood of $\Sigma$ in $\Mc$ 
containing all causal curves for $\Mb$ whose endpoints lie in $\Oc$. 
In particular, the restriction of $\Mb$ to $\Oc$ provides 
a globally hyperbolic spacetime $\Ob=(\Oc,g\vert_\Oc,\ogth\vert_\Oc,\tgth\vert_\Oc)$. 
Denote with $(\ClObs_\Mb,\tau_\Mb)$ and with $(\ClObs_\Ob,\tau_\Ob)$ the symplectic spaces of observables 
for the Proca field respectively over $\Mb$ and over $\Ob$. 
Then the map $L:\ClObs_\Ob\to\ClObs_\Mb$ defined by $L[\alpha]=[\alpha]$ for all $\alpha\in\fc^1(\Oc)$ 
is an isomorphism of symplectic spaces.\footnote{The differential form in the right-hand-side of the equation 
which defines $L$ is the extension by zero to the whole spacetime of the differential form 
which appears in the left-hand-side.} 
\end{description}
\end{theorem}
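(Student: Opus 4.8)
The plan is to follow the proof of Theorem~\ref{thmClPropScalar} step by step, since all the structures it relies on are already in place here: by Proposition~\ref{prpGreenHypProca} the operator $P=-\de\dd+m^2\id$ is formally self-adjoint and Green hyperbolic, $E=E^--E^+$ is its advanced-minus-retarded operator, and $\ClObs=\fc^1(\Mc)/P(\fc^1(\Mc))$ carries the symplectic form $\tau([\alpha],[\beta])=\ipf{\alpha}{E\beta}$. The only substitutions are $\cc(\Mc)$ by $\fc^1(\Mc)$ for the test objects and the scalar pairing by the form pairing $\ipf{\alpha}{\beta}=\int_\Mc\alpha\wedge\ast\beta$.

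For causality I would take $\alpha,\beta\in\fc^1(\Mc)$ with $\supp\alpha\cap J_\Mb(\supp\beta)=\emptyset$ and invoke the support property of $E$, namely $\supp(E\beta)\subset J_\Mb(\supp\beta)$. Since $\tau([\alpha],[\beta])=\int_\Mc\alpha\wedge\ast(E\beta)$ has an integrand supported on $\supp\alpha\cap\supp(E\beta)=\emptyset$, the integral vanishes.

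For the time-slice axiom I would first verify that $L[\alpha]=[\alpha]$ (extension by zero) is well-defined: because $\dd$, $\de$ and $\ast$ are local, the Proca operator on $\Ob$ coincides with the one on $\Mb$ when acting on forms supported in $\Oc$. Linearity is immediate, and preservation of $\tau$ follows by restricting the domain of integration to $\Oc$, where the extended form is supported. A symplectic map is automatically injective by non-degeneracy of $\tau_\Ob$. The real content is surjectivity: given $\alpha\in\fc^1(\Mc)$, I would use the splitting of Theorem~\ref{BS} to choose spacelike Cauchy surfaces $\Sigma_+,\Sigma_-\subset\Oc$ lying respectively in the future and past of $\Sigma$, together with a partition of unity $\{\chi^+,\chi^-\}$ subordinate to the cover $\{I_\Mb^+(\Sigma_-),I_\Mb^-(\Sigma_+)\}$. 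Then $\alpha^\prime=P(\chi^-E\alpha)=-P(\chi^+E\alpha)$ is compactly supported in $\Oc$, and using $PE^-\alpha=\alpha$ one finds $\alpha^\prime-\alpha=P(-\chi^-E^+\alpha-\chi^+E^-\alpha)\in P(\fc^1(\Mc))=N$, so that $[\alpha^\prime\vert_\Oc]$ is the sought preimage of $[\alpha]$.

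The main obstacle is the compactness check embedded in surjectivity: one must show that $-\chi^-E^+\alpha-\chi^+E^-\alpha$ is genuinely compactly supported, which amounts to verifying that $\supp\chi^\mp\cap\supp(E^\pm\alpha)$ is a closed subset of the compact region $J_\Mb^\mp(\Sigma_\pm)\cap J_\Mb^\pm(\supp\alpha)$. This is precisely the geometric estimate from the scalar proof, and it transfers without change because it depends only on the causal structure of $\Mb$ and on the support properties of $E^\pm$ recorded in Definition~\ref{E+E-}, not on the one-form nature of the Proca field.
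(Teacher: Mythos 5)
Your proposal is correct and is exactly the argument the paper intends: the paper explicitly omits the proof of Theorem \ref{thmClPropProca}, stating it would be ``nothing more than a slavish copy of the proof of Theorem \ref{thmClPropScalar}'', and your write-up carries out that copy faithfully, with the right substitutions and the key compactness check for surjectivity correctly identified and transferred.
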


\subsubsection{Quantum field theory}
To complete our analysis of the Proca field, we present the quantization 
of the classical field theory developed in the previous section, 
which consists of a symplectic space $(\ClObs,\tau)$ of classical observables for the Proca field 
over a globally hyperbolic spacetime $\Mb=(\Mc,g,\ogth,\tgth)$. 
The quantization procedure is completely equivalent to the case of the real scalar field. 
For this reason we shall skip most of the details, referring the reader to Section \ref{subsubQuantumScalar}. 
We introduce the quantum theory of the Proca field in terms of the unital $\ast$-algebra $\Ac$ 
generated over $\CC$ by the symbols $\II$ and $\Phi([\alpha])$ for all classical observables $[\alpha]\in\ClObs$ 
and satisfying the relations listed below: 
\begin{align}
\Phi(a[\alpha]+b[\beta]) & =a\Phi([\alpha])+b\Phi([\beta]),\\
\Phi([\alpha])^\ast & =\Phi([\alpha]),\\
\Phi([\alpha])\cdot\Phi([\beta])-\Phi([\beta])\cdot\Phi([\alpha]) & =i\tau([\alpha],[\beta])\II,
\end{align}
for all $a,b\in\CC$ and for all $[\alpha],[\beta]\in\ClObs$. 
As usual, the first relation expresses the linearity of the quantum field, 
the second relation keeps track of the fact that classically the Proca field is a real field, 
therefore quantum Proca fields should be Hermitian, 
and finally the third relation implements the canonical commutation relations (CCR) for Bosonic field theories. 
We interpret $\Ac$ as the algebra of quantum observables for the Proca field 
over the globally hyperbolic spacetime $\Mb$. 

We conclude our investigations, analyzing certain properties of the quantum theory of the Proca field. 
Mimicking the proof of Theorem \ref{thmQuantumScalar} for the real scalar field, 
and exploiting the properties of the classical theory of the Proca field, 
which have been developed in Theorem \ref{thmClPropProca}, one obtains the following result. 

\begin{theorem}
Let  $\Mb=(\Mc,g,\ogth,\tgth)$ be an $n$-dimensional globally hyperbolic spacetime 
and let $\Ac$ be the unital $\ast$-algebra of observables for the Proca field introduced above. 
The following properties hold: 
\begin{description}
\item[{\bf Causality}] Elements of the algebra $\Ac$ localized in causally disjoint regions commute. 
More precisely, let $\alpha,\beta\in\fc^1(\Mc)$ be such that $\supp\alpha\cap J_\Mb(\supp\beta)=\emptyset$. 
Then $\Phi([\alpha])\cdot\Phi([\beta])=\Phi([\beta])\cdot\Phi([\alpha])$. 
\item[{\bf Time-slice axiom}] Let $\Oc\subset\Mc$ be a globally hyperbolic open neighborhood 
of a spacelike Cauchy surface $\Sigma$ for $\Mb$, namely $\Oc$ is an open neighborhood of $\Sigma$ in $\Mc$ 
containing all causal curves for $\Mb$ whose endpoints lie in $\Oc$. 
In particular, the restriction of $\Mb$ to $\Oc$ provides 
a globally hyperbolic spacetime $\Ob=(\Oc,g\vert_\Oc,\ogth\vert_\Oc,\tgth\vert_\Oc)$. 
Denote with $\Ac_\Mb$ and with $\Ac_\Ob$ the unital $\ast$-algebras of observables 
for the Proca field respectively over $\Mb$ and over $\Ob$. Then the unit-preserving $\ast$-homomorphism 
$\Phi(L):\Ac_\Ob\to\Ac_\Mb$, $\Phi([\alpha])\mapsto\Phi(L[\alpha])$ is an isomorphism of $\ast$-algebras, 
where $L$ is the symplectic isomorphism introduced in Theorem \ref{thmClPropProca}. 
\end{description}
\end{theorem}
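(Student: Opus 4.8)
The plan is to derive both statements from their classical analogues in Theorem~\ref{thmClPropProca}, following verbatim the strategy employed for the real scalar field in Theorem~\ref{thmQuantumScalar}. The quantum algebra $\Ac$ is built on top of the symplectic space $(\ClObs,\tau)$, so that every property of $\tau$ transports to a statement about commutators through the canonical commutation relations.

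For the causality property, I would take $\alpha,\beta\in\fc^1(\Mc)$ with $\supp\alpha\cap J_\Mb(\supp\beta)=\emptyset$. The classical causality in Theorem~\ref{thmClPropProca} gives $\tau([\alpha],[\beta])=0$. Inserting this into the defining commutation relation of $\Ac$, one finds
\begin{equation*}
\Phi([\alpha])\cdot\Phi([\beta])-\Phi([\beta])\cdot\Phi([\alpha])=i\tau([\alpha],[\beta])\II=0,
\end{equation*}
which is precisely the claim.

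For the time-slice axiom, I would first observe that the assignment $\Phi([\alpha])\mapsto\Phi(L[\alpha])$ on the generators extends uniquely to a unital $\ast$-homomorphism $\Phi(L):\Ac_\Ob\to\Ac_\Mb$. The only point requiring care is that this assignment respects the defining relations of $\Ac_\Ob$: linearity and the involution are immediate from the linearity of $L$, while the commutation relation is preserved precisely because $L$ is symplectic, namely $\tau_\Mb(L[\alpha],L[\beta])=\tau_\Ob([\alpha],[\beta])$, a fact recorded in Theorem~\ref{thmClPropProca}. Since $L$ is a symplectic isomorphism, its inverse $L^{-1}$ exists, and the very same construction yields a unital $\ast$-homomorphism $\Phi(L^{-1}):\Ac_\Mb\to\Ac_\Ob$. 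Checking on the generators that $\Phi(L^{-1})\circ\Phi(L)$ and $\Phi(L)\circ\Phi(L^{-1})$ act as the respective identities, an immediate consequence of $L^{-1}\circ L=\id$ and $L\circ L^{-1}=\id$, one concludes that $\Phi(L)$ is an isomorphism of unital $\ast$-algebras.

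At the quantum level the proof is genuinely routine: all the geometric and analytic content is already encoded in Theorem~\ref{thmClPropProca}, whose own proof mirrors the scalar case of Theorem~\ref{thmClPropScalar}. I therefore expect no serious obstacle here. The only subtlety worth stressing is the well-definedness of $\Phi(L)$ as a $\ast$-homomorphism, which relies on $L$ being symplectic rather than merely linear, together with the verification that the assignment $L\mapsto\Phi(L)$ sends inverses to inverses, so that symplectic isomorphisms are promoted to algebra isomorphisms.
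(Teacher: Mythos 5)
Your proposal is correct and follows exactly the route the paper takes: the paper itself omits the details and simply states that one mimics the proof of Theorem \ref{thmQuantumScalar}, namely deducing causality from $\tau([\alpha],[\beta])=0$ via the canonical commutation relations and deducing the time-slice axiom by promoting the symplectic isomorphism $L$ and its inverse to unital $\ast$-homomorphisms that invert each other on generators. Your added remark on why $\Phi(L)$ respects the defining relations (because $L$ is symplectic, not merely linear) is a sensible clarification but does not change the argument.
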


\noindent To conclude the paper we would like to comment briefly on two aspects which have not been discussed. On the one hand we have only treated fields of spin $0$, $1/2$ and $1$, the latter under the assumption of a non-vanishing mass. This choice was made only for the sake of simplicity since all other cases would involve necessarily a discussion of local gauge invariance, a topic which is still under study and which would require a paper on its own -- for linear gauge theories refer to \cite{Hack:2012dm}. We mention a few references for an interested reader: for electromagnetism \cite{BDHS14, Benini:2013tra, Dappiaggi:2011zs, Dimock:1992ff, Fewster:2003ey, Pfenning:2009nx, Sanders:2012sf}, for spin $3/2$ fields \cite{Hack:2011yv, Hack:2012dm}, while for massless spin $2$ fields and linearized gravity \cite{BDM14, Fewster:2012bj}. Another important aspect, neglected in this paper, concerns the discussion about the existence of a relation between the algebra of observables for a free field theory built on two globally hyperbolic spacetimes which can be related one to the other via an isometric embedding. The analysis of such aspect leads to the formulation of the so-called principle of general local covariance, one of the milestones of modern axiomatic quantum field theory. This principle, together with its consequences, is discussed for example in \cite{FV}.

\section*{Acknowledgments}  
The work of M.B. has been supported partly by a grant of the University of Pavia and partly by a grant from the della Riccia foundation. Both are gratefully acknowledged. C.D. is grateful to Zhirayr Avetisyan for the useful discussions.

%
%
%

\end{document}